\newcommand{\eps}{\epsilon}
\newcommand{\lmm}{\lim_{t\rightarrow-\infty}}
\newcommand{\lmp}{\lim_{t\rightarrow\infty}}
\newcommand{\lmt}{\lim_{t\rightarrow t_M^-}}
\newcommand{\beq}{\begin{equation}}
\newcommand{\eeq}{\end{equation}}
\newcommand{\bea}{\begin{eqnarray}}
\newcommand{\eea}{\end{eqnarray}}
\newcommand{\beas}{\begin{eqnarray*}}
\newcommand{\eeas}{\end{eqnarray*}}
\newtheorem{theorem}{Theorem}[section]
\newtheorem{proposition}{Proposition}[section]
\newtheorem{lemma}{Lemma}[section]
\newcommand{\LL}{|\lambda|}
\newcommand{\refb}[1]{(\ref{#1})}
\newcommand{\Lag}{\mathscr{L}}
\newcommand{\bsy}[1]{\boldsymbol{#1}}
\begin{document} 

\title[Collapse of a cylindrical scalar field with non-minimal coupling]{Collapse of a self-similar cylindrical scalar field with non-minimal coupling II: strong cosmic censorship}
\author{Eoin Condron and Brien C. Nolan}

\address{School of Mathematical Sciences, Dublin City University, Glasnevin, Dublin 9, Ireland.}
\eads{ \mailto{eoin.condron4@mail.dcu.ie}, \mailto{brien.nolan@dcu.ie}}

\begin{abstract}
We investigate self-similar scalar field solutions to the Einstein equations in whole cylinder symmetry. Imposing self-similarity on the spacetime gives rise to a set of single variable functions describing the metric. Furthermore, it is shown that the scalar field is dependent on a single unknown function of the same variable and that the scalar field potential has exponential form. The Einstein equations then take the form of a set of ODEs. Self-similarity also gives rise to a singularity at the scaling origin. We extend the work of \cite{Part1}, which determined the global structure of all solutions with a regular axis in the causal past of the singularity. We identified a class of solutions that evolves through the past null cone of the singularity. We give the global structure of these solutions and show that the singularity is censored in all cases.  
\end{abstract}

\section{Introduction \& Summary}
This is the second of two papers which aim to give a rigorous analysis of self-similar cylindrical spacetimes coupled to a non-linear scalar field. In particular, we are interested in determining whether a subset of these spacetimes exhibit naked singularity formation. In \cite{Part1}, it was shown that the assumption of self-similarity of the first kind \cite{Carr}, where the homothetic vector field is assumed to be orthogonal to the cylinders of symmetry, gives rise to a singularity at the scaling origin $\mathcal{O}$ (the point at which the homothetic Killing vector is identically zero). This point lies on the axis of symmetry. Solutions emanating from a regular axis to the past of $\mathcal{O}$ were studied and the global structure of solutions was given in the region bounded by the axis and the past null cone $\mathcal{N}_-$ of the singularity, which we call region \textbf{I}. The system has two free parameters labelled $V_0$ and $k$, and the global structure was given for all possible values of the parameters. The assumptions reduce the coupled Einstein field equations to a set of ODEs, and these naturally give rise to an initial value problem with data on the regular axis. There is also a free initial datum, $l_0$, on the regular axis. The independent variable $\eta$ is a similarity variable normalised so that $\eta=1$ on the regular axis and $\eta=0$ on the past null cone $\cal{N}_-$ of $\cal{O}$. \\
It was shown that for $(k^2,V_0,l_0)\in \bar{K},$ where 
\beq
\nonumber\bar{K}=\{(k^2,V_0,l_0):k^2\ge 2\}\cup\{(k^2,V_0,l_0):k^2<2,V_0e^{(k^2/2-1)l_0}\ge k^2/8\},
\eeq
the solutions terminate on or before $\cal{N}_-$. Specifically, there is a value $\eta_M\in[0,1)$ such that the hypersurface at $\eta=\eta_M$ corresponds either to future null infinity (see cases 1 and 2 of Fig. 1) or to a spacetime singularity (see cases 4 and 5 of Fig. 1). \\

\begin{figure}
\includegraphics[width=6.5in]{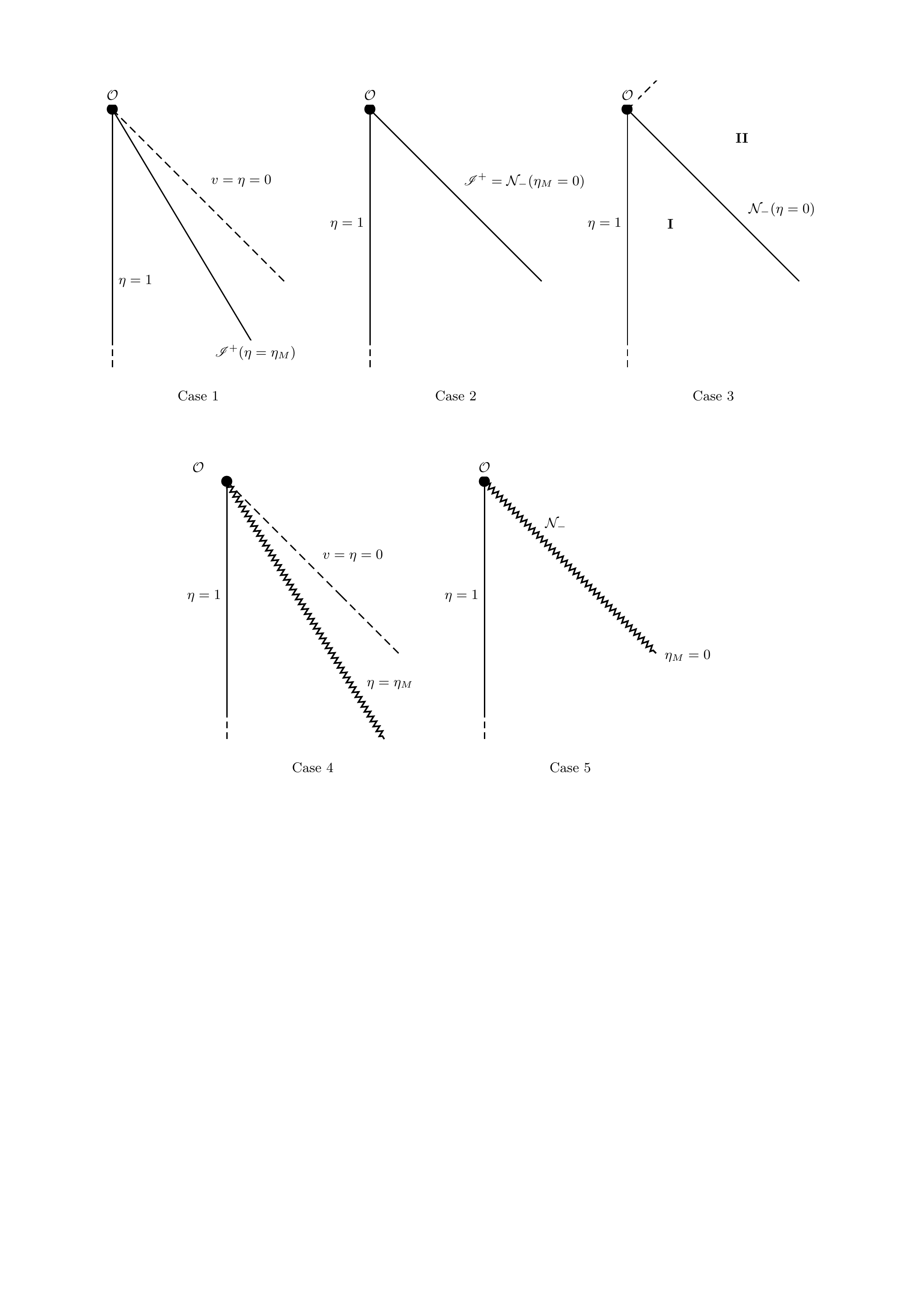}
\vskip -250pt
\caption{Possible structures of the spacetime. Case 3 depicts the spacetimes corresponding to values $(k^2,V_0,l_0)\in K)$, which are the subject of this paper. The remaining cases are various other subcases. $u,v$ are respectively retarded and advanced null coordinates and $\eta=v/u$.}
\end{figure}

\noindent 
We note that the spacetimes which have a singularity at $\eta=\eta_M\in[0,1)$ are singular at all times: there is no spacelike slice $\Sigma$ which avoids the singularity. Thus there is no spacelike slice along which we can impose initial data for the Einstein equations, and so this class of spacetimes is not relevant to the issue of cosmic censorship. \\
For $(k^2,V_0,l_0)\in K$, where $K$ is the complement in $(0,+\infty)\times\mathbb{R}^2$ of $\bar{K}$, it was shown that $\mathcal{N}_-$ is a regular surface that exists as part of the spacetime and the solutions may be extended into the region beyond $\mathcal{N}_-$. 
This is Case 3 in Figure 1. We define region \textbf{II} as the region bounded by $\mathcal{N}_-$ and the (putative) future null cone of the origin, $\mathcal{N}_+$. Our aim is to obtain the global structure of these solutions in this region and determine whether $\mathcal{N}_+$ exists as part of the spacetime. In other words, we seek to determine whether or not the singularity $\cal{O}$ is naked. 
In Section 2 we give a summary of the formulation of the field equations from \cite{Part1} and cast them as a dynamical system in a new set of variables. In Section 3 we give the asymptotic behaviour of solutions at $\mathcal{N}_-$, which is a fixed point of the dynamical system, and corresponds to the limit $t\rightarrow-\infty$, where $t$ is the independent variable. 
Section 4 contains an analysis of the remaining fixed points which are possible end states of solutions which reach the surface ${\cal N}_+$. We then determine the global behaviour of solutions in Section 5 and show that, for all solutions, the maximal interval of existence is bounded above. \\
The main result of the paper is established in Section 6. We quote the relevant theorem here:

\begin{theorem}
The class of spacetimes with line element \refb{LE eta}, subject to the Einstein-Scalar Field equations (10) with $(k^2,V_0,l_0)\in K$ and the regular axis conditions \refb{axis} satisfy strong cosmic censorship: the spacetimes are globally hyperbolic and $C^1$-inextendible. 
\end{theorem}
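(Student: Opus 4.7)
The approach is to combine the dynamical-systems analysis of Sections 3--5 with a curvature blow-up calculation at the terminal surface. Two facts from the preceding sections are essential: first, every solution in region \textbf{II} is defined on a maximal interval $(-\infty,t_M)$ with $t_M$ finite (Section 5); second, as $t\to t_M^-$ the trajectory approaches one of the fixed points classified in Section 4. Together these pin down the possible asymptotic regimes of the metric and scalar field on approach to the future boundary of the spacetime, and reduce both claims in the theorem to a case-by-case analysis indexed by the fixed-point list.

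For $C^1$-inextendibility, I would compute the leading behaviour of the metric components and of $\phi$ at each fixed-point end-state and substitute into a curvature invariant. The Ricci scalar, which is algebraically related to $\phi$, $V(\phi)$, and $(\nabla\phi)^2$ through the Einstein--Scalar-Field equations (10), is the natural first candidate; if it turns out to be bounded in some case, I would fall back on the Kretschmann invariant or another polynomial scalar built from the Riemann tensor. The objective is to exhibit, for every fixed point and every $(k^2,V_0,l_0)\in K$, at least one coordinate-invariant scalar diverging as $t\to t_M^-$. Since a $C^1$ extension cannot turn bounded curvature invariants into unbounded ones, this rules out any $C^1$ extension of the metric across $\{t=t_M\}$ and in particular forbids $\mathcal{N}_+$ from existing as a regular null surface.

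For global hyperbolicity, I would construct a Cauchy surface $\Sigma$ for regions \textbf{I} $\cup$ \textbf{II} as a level set of a similarity-adapted time function --- spacelike, meeting the regular axis transversely, and extending into region \textbf{II} so as to approach but not touch the singular boundary $\{t=t_M\}$. The curvature blow-up from the previous step identifies that boundary as genuinely singular rather than a smooth null or timelike surface through which the spacetime could continue. Self-similarity, the finiteness of $t_M$, the behaviour at $\mathcal{N}_-$ (Section 3), and the regular-axis condition \refb{axis} on the opposite side of the development then yield that every inextendible causal curve meets $\Sigma$ exactly once.

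The principal obstacle is uniformity in the fixed-point case analysis. The dangerous scenario is an end-state at which the metric components tend to finite, nondegenerate values with bounded first derivatives in adapted coordinates, since such a scenario would signal a regular $\mathcal{N}_+$ and hence a naked singularity at $\mathcal{O}$. Excluding this requires the asymptotics of Section 4 to be sharp enough to track the subleading behaviour of $\phi$ and of the conformal factor at each candidate fixed point, and to produce a divergent scalar invariant in every case, uniformly over the two-parameter family $(k^2,V_0,l_0)\in K$. The non-minimal coupling is an added complication here, since it enters the effective stress-energy and may conspire to cancel potentially divergent contributions to the Ricci scalar, forcing one to pass to higher-order invariants for at least part of the parameter range.
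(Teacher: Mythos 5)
Your proposal rests on a premise that the paper's analysis shows to be false: that as $t\to t_M^-$ the trajectory approaches one of the fixed points of Section 4. In fact, since $t_M$ is finite (Propositions 5.1 and 5.2), the escape-to-infinity result (Theorem 5.1) forces $\|\boldsymbol{x}\|\to\infty$ as $t\to t_M^-$; fixed points could only be reached as $t\to+\infty$, and the entire point of Sections 4--5 is to \emph{rule out} the fixed points as end states, not to classify end states by them. This matters for a concrete reason: your plan is to exhibit a divergent curvature scalar at every fixed-point end state, but this is impossible at $P_1$. Proposition 4.4 shows that at $P_1$ the cylinder radius, the metric components and the Ricci scalar all remain finite --- $P_1$ is exactly the regular-$\mathcal{N}_+$ (naked singularity) scenario, and no invariant blows up there. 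It can only be excluded dynamically, which is what Lemma 5.11 and Propositions 4.6, 4.7, 5.1, 5.2 do (e.g.\ $x_1$ reaching $1/2$ already forces $t_M<\infty$, so trajectories never reach $P_1$ or $P_3$). The actual terminal behaviour is $x_1\to-\infty$, $\sigma\to 0$, with $x_3\to\pm\infty$ or $x_3\to 1/2$ (Lemma 5.8), i.e.\ blow-up of the phase-space variables in finite time, not convergence.

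There is a second, independent gap: curvature blow-up (Ricci or Kretschmann) proves at best $C^2$-inextendibility, whereas the theorem claims $C^1$-inextendibility. An invariant that obstructs $C^1$ extensions must be built from the metric and its \emph{first} derivatives only; this is why the paper uses Thorne's $C$-energy $E$, shows it is well-defined because the translational Killing field is unique up to scale (Proposition 6.1), and proves $\lim_{t\to t_M^-}E=+\infty$ in all cases (Proposition 6.4). Moreover, in the case $\lim_{t\to t_M^-}x_3=1/2$ the paper establishes no curvature blow-up at all --- the pathology there is a degenerate axis, with specific length $L\to 0$ (Proposition 6.3) --- so your fallback to higher-order curvature invariants has no support in that case either. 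Finally, for global hyperbolicity the paper does not construct a time function but instead computes affine lengths of radial null geodesics (Propositions 6.5 and 6.6) to obtain the conformal diagram of Figure 3; your Cauchy-surface sketch is plausible in spirit, but it presupposes the identification of $\{t=t_M\}$ as a singular boundary, which must first be repaired along the lines above.
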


To prove this theorem, we present a number of results giving the global structure of the spacetimes, showing that the spacetimes are globally hyperbolic. To prove $C^1$-inextendibility, we show that a certain invariant of the spacetime, which depends only on the metric and its first derivatives, blows up at the spacelike singularity. Two cases arise; in the first case, this spacelike hypersurface corresponds to a scalar curvature singularity and in the second case it corresponds to a non-regular axis. $C^1$-inextendibility holds in both cases.\par
Before proceeding to the technicalities leading up to the proof of Theorem 1.1, we make some general comments.
Theorem 1.1, which builds on the results of Paper I, establishes that strong cosmic censorship holds for cylindrical spacetimes coupled to non-minimally coupled scalar fields in the case of self-similarity. Thus it provides a partial extension of the results of \cite{Berger}: we note that the non-minimally coupled scalar field does not satisfy the energy conditions required in \cite{Berger}.\par
Self-similarity forces the potential of the non-minimally coupled scalar field to assume an exponential form (see e.g.\ \cite{WE} and \cite{Carot-Colligne} for a detailed proof). In spherical symmetry, non-minimally coupled scalar fields have been considered in \cite{Malec} and \cite{Dafermos1}. Dafermos established that when the potential is bounded below by a constant (which can be negative), certain types of singularity are ruled out. Furthermore, weak cosmic censorship follows if the existence of a single trapped surface can be established \cite{Dafermos1}. In the present case, this condition on the potential corresponds to $V_0>0$ (see equation (8) below). However, our strong cosmic censorship result also holds when $V_0<0$. Thus it would be of interest to see if the results of the present paper extend to the spherically symmetric case, with and without the assumption of self-similarity. 

Scalar fields with an exponential potential have also been discussed extensively in the context of cosmology, where the role of the potential as a driver of inflation and accelerated expansion is of particular note. Homogeneous and isotropic models were first considered in \cite{halliwell} and there is now a significant body of literature on these models. Of particular note are the deep results on nonlinear stability, in the absence of symmetries, obtained in \cite{heinzle&rendall} and \cite{ringstrom}.

\section{Self-similar cylindrically symmetric spacetimes coupled to a non-linear scalar field}
We consider cylindrically symmetric spacetimes with whole-cylinder symmetry \cite{A&T} (see also \cite{cylindrical-collapse,HNN}). This class of spacetimes admits a pair of commuting, spatial Killing vectors $\bsy{\xi}_{(\theta)}$, $\bsy{\xi}_{(z)}$ called the axial and translational Killing vectors, respectively. Introducing double null coordinates $(u,v)$ on the Lorentzian 2-spaces orthogonal to the surfaces of cylindrical symmetry, the line element may be written as:
\begin{equation}
\label{LE uv}ds^2 = -2e^{2\bar{\gamma}+2\bar{\phi}}dudv + e^{2\bar{\phi}}r^2d\theta^2 + e^{-2\bar{\phi}}dz^2,
\end{equation}
where $r$ is the radius of cylinders, $\bar{\gamma} ,\bar{\phi}$  and  $r$  depend on $u$ and $v$ only.\\
We take the matter source to be a cylindrically symmetric, self-interacting scalar field $\psi(u,v)$ with stress-energy tensor given by
\begin{equation} 
\label{EMT}T_{ab}= \nabla_{a}\psi\nabla_{b}\psi-\frac{1}{2}g_{ab}\nabla^{c}\psi\nabla_{c}\psi-g_{ab}V(\psi),
\end{equation}
where $V(\psi)$ is the scalar field potential. The minimally coupled case $V\equiv 0$ was dealt with in \cite{Part1} and so we assume $V\neq0$. The line element is preserved by the coordinate transformations
\begin{equation}
\label{freedom}u \rightarrow \bar{u}(u),\quad v \rightarrow \bar{v}(v), \quad z \rightarrow \lambda z,
\end{equation}
for constant $\lambda$. Note that $\theta\in[0,2\pi)$ and so transformations of the kind $\theta\rightarrow \lambda \theta$ are not allowed in general. 
We assume self-similarity of the first kind \cite{Carr}, which is equivalent to the existence of a homothetic Killing vector field $\boldsymbol{\xi}$ such that
\begin{equation}
\label{SS}\mathcal{L}_{\boldsymbol{\xi}} g_{\mu \nu} = 2 g_{\mu \nu},
\end{equation}
where $\mathcal{L}_{\boldsymbol{\xi}}$ denotes the Lie derivative along the vector $\boldsymbol{\xi}$. We make the further assumption that $\boldsymbol{\xi}$ is cylindrical. The limitations of this assumption are discussed in \cite{Part1}. Equation \refb{SS} gives the form $\boldsymbol{\xi}=\alpha(u)\partial_u+\beta(v)\partial_v$ and the coordinate freedom \refb{freedom} is used to set $\alpha(u)=2u, \beta(v)=2v$.
Equations \refb{SS} then lead to 
\beq
\label{SSS}\bar{\gamma} =\gamma(\eta), \qquad\bar{\phi}=\phi(\eta)-\log|u|^{1/2},\qquad r =|u|S(\eta),
\eeq
where 
\beq\eta = \frac{v}{u}
\eeq
is called the similarity variable. The self-similar line element is then given by
\beq
\label{LE eta} ds^2 = -2|u|^{-1}e^{2\gamma(\eta)+ 2\phi(\eta)}dudv+|u|e^{2\phi(\eta)}S^2(\eta)d\theta^2+|u|e^{-2\phi(\eta)}dz^2.
\eeq
It was shown in [1] that in this coordinate system the self-similar, non-minimally coupled scalar field and its potential have the form
\beq
\psi = \frac{k}{2}\left(l(\eta) + \log|u||\eta|^{1/2}\right),\qquad V(\psi)=\frac{\bar{V}_0e^{-l(\eta)}}{|u||\eta|^{1/2}},
\eeq
for a function $l$ and constants $\bar{V}_0\neq0,k\neq0$. The field equations then reduce to (see \cite{Part1})
\begin{subequations}\bea
\label{FE a}2\gamma+2\phi=\frac{k^2l}{2}-\frac{1}{2}\log|\eta|+c_1,\\
\label{FE b}\eta S''=-V_0|\eta|^{-1}e^{\lambda l}S,\\
\label{FE c}2S'\gamma'-S''-2S\phi'^2=\frac{k^2S}{4}\left(l'+\frac{1}{2\eta}\right)^2\\
\label{FE d}2S\phi'+S'=-|\eta|^{-1/2},\\
\label{FE e}\eta Sl'' +\eta S'l' + \frac{Sl'}{2}-\frac{S}{4\eta}=- \frac{2V_0}{k^2|\eta|}Se^{\lambda l},
\eea\end{subequations}
where $V_0=e^{c_1}\bar{V_0}$ is constant and $\lambda = k^2/2-1$. Equation \refb{FE e} is the wave equation for $\psi$ and is obtained from $\nabla^a\nabla_a\psi-V'(\psi)=0$. Region \textbf{I} of the spacetime corresponds to the interval $\eta\in[0,1]$, with the axis at $\eta=1$ and $\cal{N}_-$ at $\eta=0$. The regular axis conditions for the metric functions were found to be \cite{Part1}
\beq
\label{axis} S(1)=0,\quad S'(1)=-1,\quad \gamma'(1)=0,\quad \phi'(1)=-1/4.
\eeq
For values $(k^2,V_0,l_0)\in K$, solutions exist throughout region \textbf{I}, and $\cal{N}_-$ is a regular spacetime hypersurface. These solutions, which are the subject of this paper, may be extended into region \textbf{II}, which corresponds to $\eta\in(-\infty,0]$. It is assumed that $(k^2,V_0,l_0)\in K$ for the remainder of the paper. Notice that, in particular, we have $k^2<2$, or equivalently $\lambda<0$ and $\LL=1-k^2/2<1$. Note that $\cal{N_-}$ is at $\eta=0$ and $\cal{N}_+$ is at $u=0,v\in[0,\infty)$. Hence, $\eta\rightarrow-\infty$ everywhere on $\cal{N}_+$, approaching from inside region \textbf{II}. For the remainder of this paper, when we take the limit $\eta\rightarrow-\infty$, it is implied that we are taking the limit $u\rightarrow 0$ along lines of constant $v>0$. 
Our aim is to determine whether or not $\mathcal{N}_+$ exists as part of the spacetime, which answers the question of whether the singularity is naked or not. This coordinate layout  is illustrated in Figure 2.
We work with a rescaling of the similarity variable, which replaces (10) with an autonomous system, and adopt a dynamical systems approach. 

\begin{center}\begin{figure}[h]
\includegraphics[scale=0.5]{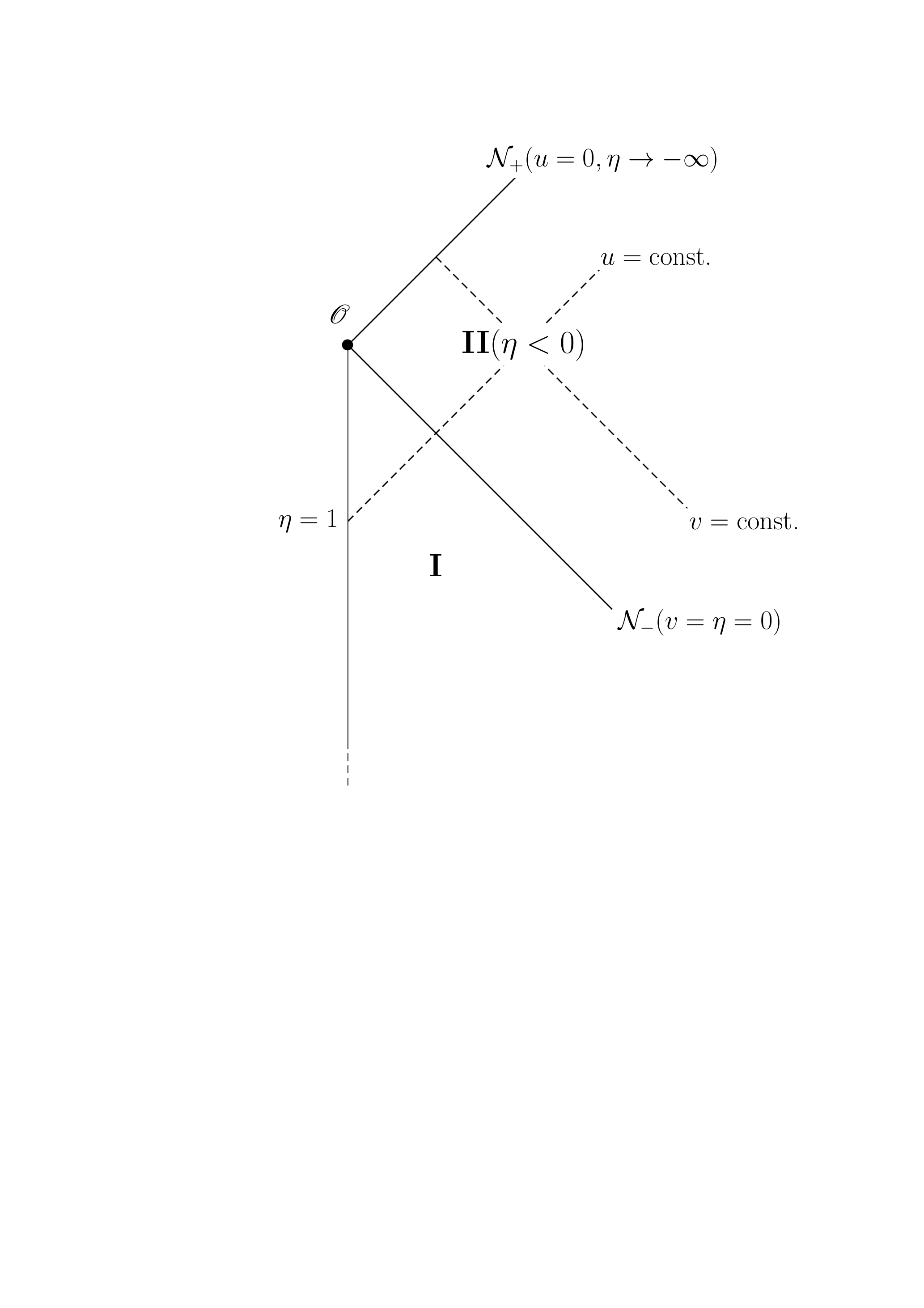}
\vskip -150pt
\caption{Coordinate layout. Our central question is whether or not ${\cal{N}}_+$ is part of the spacetime. }
\end{figure}\end{center}
%\newpage

\begin{proposition} Let $t=\log(-\eta),\delta = {\rm sgn}(V_0), \sigma(t)=S(\eta)$ and 
\bea
\label{x_i}x_0(t)=\frac{e^{t/2}}{\sigma(t)},\qquad x_1(t)=\frac{\eta S'(\eta)}{S}=\frac{\sigma'(t)}{\sigma(t)},\\ 
\nonumber x_2(t) = |V_0|e^{\lambda l(\eta)},\qquad x_3(t) =\eta l'(\eta)+\frac{1}{2}=\frac{dl}{dt}+\frac{1}{2}.
\eea
Then $x_0,x_1,x_2,x_3$ satisfy
\begin{subequations}
\begin{eqnarray}
\label{x1}x_1'(t)&=x_1+\delta x_2-x_1^2,\\
\label{x2}x_2'(t)&=\LL\left(\frac{1}{2}-x_3\right)x_2,\\
\label{x3}x_3'(t)&=\frac{x_3}{2}+\frac{x_1}{2}+\delta\frac{2x_2}{k^2}-x_1x_3,\\
\label{x0}x_1^2-&x_0^2-\left(\frac{k^2}{2}+1\right)x_1-\frac{k^2x_3^2}{2} +k^2x_1x_3-2\delta x_2=0,
\end{eqnarray}
\beq
\label{origin}\lmm(x_0,x_1,x_2,x_3) = (0,0,0,0).\eeq
\end{subequations}\end{proposition}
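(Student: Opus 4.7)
My plan is to verify \refb{x1}--\refb{origin} by direct computation from the definitions in \refb{x_i}, using the field equations \refb{FE a}--\refb{FE e} and the chain rule. The key preliminary I would set up is that $\eta=-e^t<0$ in region \textbf{II}, so $d\eta/dt=\eta$ and consequently $\sigma'(t)=\eta S'$, $\sigma''(t)=\eta S'+\eta^2 S''$, and $dl/dt=\eta l'=x_3-1/2$; also $\eta/|\eta|=-1$ throughout this region.

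The three evolution equations \refb{x1}--\refb{x3} I expect to drop out by differentiation. For \refb{x1} I would write $x_1'=\sigma''/\sigma-x_1^2$ and use \refb{FE b} to identify $\eta^2 S''/S$ with $\delta x_2$ (the sign flip coming from $\eta/|\eta|=-1$). Equation \refb{x2} is just the chain rule on $x_2=|V_0|e^{\lambda l}$ combined with $\lambda=-\LL$ (recall $\lambda<0$ on $K$). For \refb{x3} I would differentiate $x_3=dl/dt+1/2$ once more in $t$, producing $\eta^2 l''$, which I would then eliminate by multiplying \refb{FE e} by $\eta/S$ and solving.

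The constraint \refb{x0} is the more delicate step. I would start from \refb{FE c}, use the once-differentiated \refb{FE a} to eliminate $\gamma'$ in favour of $\phi'$ and $l'$, and then eliminate $\phi'$ using \refb{FE d} in the form $2S\phi'=-|\eta|^{-1/2}-S'$. After the $|\eta|^{-1/2}$ cross-terms collapse (they combine via $(S'+|\eta|^{-1/2})(S'-|\eta|^{-1/2})$ into $S'^2-|\eta|^{-1}$), \refb{FE c} reduces to a relation among $S,S',S'',l'$ alone. Multiplying by $2\eta^2/S^2$ and using \refb{FE b} to dispose of $S''$ turns every term into a monomial in the $x_i$, yielding exactly \refb{x0}. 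The main booby trap here is sign bookkeeping, since $\eta<0$ forces factors such as $\eta^2/|\eta|=-\eta$ and $\eta/|\eta|=-1$ to appear throughout.

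Finally, \refb{origin} is not an independent computation but rests on the results of \cite{Part1}: for $(k^2,V_0,l_0)\in K$, $\NM$ is a regular spacetime hypersurface with $S$ bounded away from zero and $S'$ finite as $\eta\to 0^-$, immediately giving $x_0,x_1\to 0$. Regularity of the scalar field across $\NM$ forces the asymptotics $l(\eta)\sim -\frac{1}{2}\log|\eta|$, from which $\eta l'\to -1/2$ (so $x_3\to 0$) and, since $\lambda<0$, $\lambda l\to-\infty$ (so $x_2\to 0$). Appealing to the companion paper is the one place where the proposition is not purely mechanical.
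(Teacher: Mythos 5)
Your proposal is correct and is essentially the paper's own proof: the evolution equations come from the chain rule together with \refb{FE b} and \refb{FE e}, the constraint \refb{x0} comes from \refb{FE c} after eliminating $\gamma'$ via the differentiated \refb{FE a} and $\phi'$ via \refb{FE d} (the paper performs these substitutions after first converting to $t$-derivatives, where your cross-term collapse $(S'+|\eta|^{-1/2})(S'-|\eta|^{-1/2})=S'^2-|\eta|^{-1}$ reappears as the combination of $-\frac{1}{2}(x_0-x_1)^2$ with the $x_1(x_1-x_0)$ terms, i.e.\ $\frac{1}{2}(x_1^2-x_0^2)$), and \refb{origin} rests on the limits quoted from the companion paper. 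The only bookkeeping quibble is that the factor turning your reduced relation into monomials in the $x_i$ is $2\eta^2/S$ rather than $2\eta^2/S^2$, unless you first multiply through by $S$ to clear the $1/(2S)$ denominators created by eliminating $\phi'$.
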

\begin{proof} First note that \refb{x2} comes directly from the definitions of $x_2$ and $x_3$. Given $f(\eta)$, defining $F(t)=f(\eta)$ yields $\eta f'(\eta)=F'(t)$ and $\eta^2f''(\eta)=F''(t)-F'(t)$. Equations \refb{x1} and \refb{x3} follow directly from \refb{FE b} and \refb{FE e}. Equation \refb{FE d} is equivalent to 
\beq
\label{phi'2}\frac{d\phi}{dt}=\frac{x_0-x_1}{2}.
\eeq
Differentiating \refb{FE a} with respect to $t$ gives 
\beq
\label{gam'2}2\frac{d\gamma}{dt}=-2\frac{d\phi}{dt}+\frac{k^2}{2}\frac{dl}{dt}-\frac{1}{2}= x_1-x_0+\frac{k^2x_3}{2}-\frac{k^2}{4}-\frac{1}{2}. 
\eeq
Dividing \refb{FE c} by $S$, changing variables and replacing $d\gamma/dt$ and $d\phi/dt$ using \refb{phi'2} and \refb{gam'2} produces
\beq
x_1\left(x_1-x_0+\frac{k^2x_3}{2}-\frac{k^2}{4}-\frac{1}{2}\right)-\delta x_2-\frac{1}{2}\left(x_0-x_1\right)^2=\frac{k^2x_3^2}{4}. 
\eeq
Multiplying by 2 and simplifying gives \refb{x0}. It was shown in \cite{Part1} that 
\beq
\label{lim 0}\lim_{\eta\rightarrow0+}\left(\eta l'(\eta),|V_0|e^{\lambda l}, \frac{\eta S'(\eta)}{S}\right) = \left(-\frac{1}{2},0,0\right),
\eeq
and that $S$ is non-zero and finite at $\eta=0$. The condition \refb{origin} follows immediately. 
\end{proof}
\noindent We note that the equations \refb{x1}-\refb{x3} subject to \refb{origin} define a dynamical system and may be studied independently of \refb{x0}. 
%%%%
%%Section 3
%%%%%%
\section{Asymptotic behaviour of solutions at $\mathcal{N}_-$}
\begin{proposition}Let 
\begin{subequations}\bea
\label{mu}\mu_1= x_1+Ax_2,\qquad \mu_3= x_3+Bx_2,\\
\label{AB}A=\delta\frac{4}{2+k^2},\qquad B=\delta\frac{16}{k^4(2+k^2)}.
\eea
Then $\mu_1,\mu_3$ satisfy
\bea
\label{mu1'}\mu_1'=\mu_1-x_1^2-A\LL x_2x_3,\\
\label{mu3'}\mu_3'=\frac{\mu_3}{2}+\frac{\mu_1}{2}-x_1x_3-B\LL x_2x_3.
\eea
\end{subequations}\end{proposition}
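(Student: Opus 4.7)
The proposition is a direct computation: we differentiate $\mu_1$ and $\mu_3$ as defined, substitute the ODEs \refb{x1}, \refb{x2}, \refb{x3}, and check that the resulting expressions can be rearranged into the claimed form. The role of the constants $A$ and $B$ is to make this work: they are chosen precisely so that the unwanted terms linear in $x_2$ cancel, leaving only terms of the form $\mu_1, \mu_3, x_1^2, x_1 x_3$ and $x_2 x_3$.

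Concretely, for $\mu_1' = x_1' + A x_2'$, substituting \refb{x1} and \refb{x2} gives
\beas
\mu_1' &= x_1 + \delta x_2 - x_1^2 + A\LL\left(\tfrac{1}{2} - x_3\right)x_2 \\
&= x_1 + \left(\delta + \tfrac{A\LL}{2}\right)x_2 - x_1^2 - A\LL x_2 x_3.
\eeas
Matching this to $\mu_1 - x_1^2 - A\LL x_2 x_3 = x_1 + Ax_2 - x_1^2 - A\LL x_2 x_3$ forces the condition $\delta + A\LL/2 = A$, i.e.\ $A(1 - \LL/2) = \delta$. Using $\LL = 1 - k^2/2$ gives $1 - \LL/2 = (2+k^2)/4$, which yields $A = 4\delta/(2+k^2)$, matching \refb{AB}.

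For $\mu_3' = x_3' + B x_2'$, substituting \refb{x3} and \refb{x2} gives
\beas
\mu_3' &= \tfrac{x_3}{2} + \tfrac{x_1}{2} + \tfrac{2\delta x_2}{k^2} - x_1 x_3 + B\LL\left(\tfrac{1}{2} - x_3\right)x_2 \\
&= \tfrac{x_3}{2} + \tfrac{x_1}{2} + \left(\tfrac{2\delta}{k^2} + \tfrac{B\LL}{2}\right)x_2 - x_1 x_3 - B\LL x_2 x_3,
\eeas
and matching to $\tfrac{\mu_3}{2} + \tfrac{\mu_1}{2} - x_1 x_3 - B\LL x_2 x_3 = \tfrac{x_3}{2} + \tfrac{x_1}{2} + \tfrac{(A+B)}{2}x_2 - x_1 x_3 - B\LL x_2 x_3$ requires $4\delta/k^2 + B\LL = A + B$, i.e.\ $B(1-\LL) = A - 4\delta/k^2$. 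Since $1-\LL = k^2/2$ and $A = 4\delta/(2+k^2)$, solving for $B$ gives $B = 16\delta/(k^4(2+k^2))$, again matching \refb{AB}.

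There is no real obstacle here; the proposition is essentially a bookkeeping lemma that diagonalises the leading linear part of the system in the directions transverse to the invariant set $\{x_2 = 0\}$ (which contains $\mathcal{N}_-$). The only mildly fiddly point is keeping track of signs of $\delta$ and the substitution $\LL = 1 - k^2/2$; the expressions for $A$ and $B$ in \refb{AB} are forced by the requirement that the inhomogeneous (in $x_2$) terms in $\mu_1'$ and $\mu_3'$ reassemble into $\mu_1$ and $\mu_3$.
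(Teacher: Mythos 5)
Your computation is correct and is exactly the verification the paper's proof leaves to the reader (the paper simply states the identities are "straightforward to check"), so there is no difference in approach. One small slip: your matching condition $4\delta/k^2 + B\LL = A + B$ rearranges to $B(1-\LL) = 4\delta/k^2 - A$, not $B(1-\LL) = A - 4\delta/k^2$ as written; since $1-\LL = k^2/2$ and $4\delta/k^2 - A = 8\delta/(k^2(2+k^2))$, it is the former that yields your (correct) stated value $B = 16\delta/(k^4(2+k^2))$, so the conclusion stands once that sign is fixed.
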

\begin{proof}
It is straightforward to check that \refb{mu1'},\refb{mu3'} follow directly from \refb{mu},\refb{AB} and \refb{x1}-\refb{x3}. 
\end{proof}
\noindent We make use of the following result, which may be found in chapter 9 of \cite{Hartman}.
\begin{theorem} In the differential equation
\beq
\label{ODE}{\boldsymbol{x}}'(t)=F{\boldsymbol{x}}+G({\boldsymbol{x}}),
\eeq
let $G(\boldsymbol{x})$ be of class $C^1$ with $G(0)=0, \partial_{\boldsymbol{x}}G(0)=0$. Let the constant matrix $F$ possess $d>0$ eigenvalues having positive real parts, say, $d_i$ eigenvalues with real parts equal to $\alpha_i$, where $\alpha_1>\ldots>\alpha_r>0$ and $d_1+\ldots+d_r=d,$ whereas the other eigenvalues, if any, have non-positive real parts. If $0<\omega<\alpha_r,$ then \refb{ODE} has solutions ${\boldsymbol{x}=\boldsymbol{x}}(t)$$\neq0$, satisfying
\beq 
\label{H1}||{\boldsymbol{x}}(t)||e^{-\omega t} \to 0,\qquad \mbox{as}\qquad t\rightarrow-\infty,
\eeq
where $||{\boldsymbol{x}}(t)||$ denotes the Euclidean norm, and any such solution satisfies
\beq 
\label{H2}\lim_{t\rightarrow-\infty}t^{-1}\log||{\boldsymbol{x}}(t)||=\alpha_i,\qquad\mbox{for some }i.
\eeq\end{theorem}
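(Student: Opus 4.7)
The plan is to prove Theorem 3.1 by combining a standard unstable-manifold construction (for the existence assertion \refb{H1}) with a Perron-type spectral argument (for the rate assertion \refb{H2}).

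For the existence assertion, the strategy is to parametrise solutions by their ``initial data'' in the unstable subspace. Let $E^u$ be the generalised eigenspace of $F$ spanned by the $d$ eigenvectors with positive real part, let $P$ be the associated spectral projection, and set $Q=I-P$. By the spectral gap one may choose $K\ge 1$ and $\tilde{\omega}$ with $\omega<\tilde{\omega}<\alpha_r$ such that $\|e^{Ft}P\|\le Ke^{\tilde{\omega} t}$ for $t\le 0$ and $\|e^{Ft}Q\|\le Ke^{\omega t/2}$ for $t\ge 0$. For small $\xi\in E^u$ one seeks a solution on $(-\infty,0]$ as a fixed point of the operator
\beq
\nonumber(\mathcal{T}\boldsymbol{x})(t)=e^{Ft}\xi+\int_0^t e^{F(t-s)}PG(\boldsymbol{x}(s))\,ds-\int_{-\infty}^t e^{F(t-s)}QG(\boldsymbol{x}(s))\,ds
\eeq
on the weighted Banach space $X=\{\boldsymbol{x}\in C((-\infty,0],\mathbb{R}^n):\sup_{t\le 0}e^{-\omega t}\|\boldsymbol{x}(t)\|<\infty\}$. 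Since $G(0)=0$ and $\partial_{\boldsymbol{x}}G(0)=0$, the map $G$ has arbitrarily small Lipschitz constant on a small ball about the origin, so standard contraction estimates show that $\mathcal{T}$ maps a closed ball in $X$ into itself and is a contraction. The resulting fixed point automatically satisfies \refb{H1}, and is nonzero whenever $\xi\ne 0$.

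For the rate assertion, let $\boldsymbol{x}(t)$ be a nonzero solution satisfying \refb{H1}. Decompose $\mathbb{R}^n=\bigoplus_{j=1}^r E_j\oplus E^{-}$ into generalised eigenspaces, where $E_j$ corresponds to eigenvalues of real part $\alpha_j$ and $E^{-}$ collects the eigenvalues with non-positive real parts, and write $P_j, P_-$ for the corresponding projections. A consequence of the construction above is that any trajectory obeying \refb{H1} lies on the unstable manifold through $0$, which is tangent to $E^u$ there, so the component $P_-\boldsymbol{x}(t)$ is of strictly higher order than any linear mode in $E^u$. Using $\|G(\boldsymbol{x}(t))\|=o(\|\boldsymbol{x}(t)\|)$ (from $\partial G(0)=0$ together with $\boldsymbol{x}(t)\to 0$) and applying the variation of constants formula component-wise, one bootstraps from the crude bound $\|\boldsymbol{x}(t)\|=o(e^{\omega t})$ to the sharp statement that each $P_j\boldsymbol{x}(t)$ behaves like $e^{\alpha_j t}$ as $t\to-\infty$ whenever it does not vanish identically. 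The limit in \refb{H2} is then $\alpha_i$, where $i$ is the largest index for which $P_i\boldsymbol{x}\not\equiv 0$, since among the active modes $e^{\alpha_i t}$ is the slowest-decaying and hence dominant as $t\to-\infty$.

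The main obstacle is the rate assertion. The existence half is a routine unstable-manifold argument, but ruling out \emph{intermediate} decay rates requires refining the initial estimate $\|\boldsymbol{x}(t)\|=o(e^{\omega t})$ spectral component by spectral component, each refinement needing a uniform spectral gap that is not swamped by the nonlinear errors produced by the previous step. The bookkeeping is particularly delicate when consecutive $\alpha_j$ are close together; the detailed execution of this programme is carried out in chapter~9 of \cite{Hartman}.
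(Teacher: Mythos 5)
You should first be aware of what the paper actually does with this statement: it does not prove it. Theorem 3.1 is quoted from chapter 9 of Hartman \cite{Hartman}, and the statement is followed immediately by an end-of-proof box with no argument given. So there is no in-paper proof to compare against; the only fair comparison is with the cited source, and your overall strategy --- a Lyapunov--Perron fixed-point construction for the existence assertion \refb{H1}, plus a spectral analysis of decay rates for \refb{H2} --- is indeed the standard machinery used there.

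Judged as a self-contained proof, however, your proposal has one genuine gap and two fixable slips. The gap is that \refb{H2}, which you correctly identify as the main obstacle, is never actually proved: the claim that every solution obeying \refb{H1} lies on the unstable manifold requires the uniqueness half of the fixed-point argument (project the variation-of-constants formula with $P$ and $Q$ and show the $Q$-boundary term vanishes as $t_0\to-\infty$), which you do not supply; the mode-by-mode bootstrap is described only programmatically; and you then hand the ``detailed execution'' to chapter 9 of \cite{Hartman}, which is precisely the paper's entire treatment of the theorem --- so at the decisive point your proposal reproduces the citation rather than replacing it with an argument. A cleaner route exists and avoids the unstable-manifold detour entirely: once \refb{H1} gives $\boldsymbol{x}(t)\to0$, the hypothesis $\partial_{\boldsymbol{x}}G(0)=0$ yields $||G(\boldsymbol{x}(t))||\le\psi(t)||\boldsymbol{x}(t)||$ with $\psi(t)\to0$, so Perron-type asymptotic integration theorems for perturbed linear systems (also in Hartman) give that $t^{-1}\log||\boldsymbol{x}(t)||$ converges to the real part of some eigenvalue of $F$; the bound \refb{H1} forces that limit to be $\ge\omega>0$, hence equal to some $\alpha_i$. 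The slips: (i) the $Q$-integral in your operator $\mathcal{T}$ must enter with a plus sign, since it arises as $\lim_{t_0\to-\infty}\int_{t_0}^{t}e^{F(t-s)}QG(\boldsymbol{x}(s))\,ds$; with your minus sign a fixed point solves ${\boldsymbol{x}}'=F\boldsymbol{x}+(P-Q)G(\boldsymbol{x})$ rather than \refb{ODE}. (ii) A fixed point in your weighted space satisfies only $\sup_{t\le0}e^{-\omega t}||\boldsymbol{x}(t)||<\infty$, which is weaker than the limit statement in \refb{H1}; this is repaired either by running the contraction with weight $e^{-\tilde{\omega}t}$ for some $\tilde{\omega}\in(\omega,\alpha_r)$, or by a subsequent bootstrap using $||G(\boldsymbol{x})||=o(||\boldsymbol{x}||)$.
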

\hbox{}\hfill$\square$\\
\noindent We define the vector $\boldsymbol{x}$ by 
\beq
\boldsymbol{x}=(x_1,x_2,x_3). 
\eeq
The system defined by \refb{x1}-\refb{x3} and \refb{origin} satisfies the hypothesis of this theorem, which grants local existence of solutions near the origin of the $\boldsymbol{x}$-system, which is at $t=-\infty$. We denote by $(-\infty,t_M)$ the maximal interval of existence for a given solution. 
\begin{lemma}For any $\eps>0$, there exists $T(\eps)\in(-\infty,t_M)$ such that
\beq
|x_i|< e^{(\LL/2-\eps)t},
\eeq 
for $t<T(\eps)$ and each $i\in\{1,2,3\}$.
\end{lemma}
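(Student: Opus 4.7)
My plan is to read off the linearisation of \refb{x1}--\refb{x3} at the origin, observe that every eigenvalue is positive, and deduce the claimed decay rate from Theorem 3.1.

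First I would compute the Jacobian of the right-hand side of \refb{x1}--\refb{x3} at $\bsy{x}=0$. Stripping off the quadratic terms $-x_1^2$, $-\LL x_2x_3$, $-x_1x_3$ yields
\[
F = \begin{pmatrix} 1 & \delta & 0 \\ 0 & \LL/2 & 0 \\ 1/2 & 2\delta/k^2 & 1/2 \end{pmatrix},
\]
whose characteristic polynomial factors as $(1-\mu)(\LL/2-\mu)(1/2-\mu)$. The eigenvalues are therefore $1,\tfrac12,\tfrac{\LL}{2}$, and since $(k^2,V_0,l_0)\in K$ forces $k^2<2$ and hence $\LL = 1-k^2/2 > 0$, all three are strictly positive with minimum $\alpha_r = \LL/2$. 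The nonlinear remainder $G(\bsy x)$ is polynomial with $G(0)=0$ and $\partial_{\bsy x}G(0)=0$, so the hypotheses of Theorem 3.1 hold.

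For $\eps\ge\LL/2$ the claim is immediate from \refb{origin}, since $e^{(\LL/2-\eps)t}\ge 1$ for $t\le 0$ while $|x_i|\to 0$. So assume $0<\eps<\LL/2$ and set $\omega = \LL/2-\eps\in(0,\alpha_r)$. Theorem 3.1 then produces a family of nonzero solutions satisfying \refb{H1}, each of which also satisfies \refb{H2}. The crucial step is to identify \emph{our} trajectory with one of these. Since every eigenvalue of $F$ is positive, the time-reversed system $d\bsy{x}/d(-t)=-F\bsy{x}-G(\bsy{x})$ has the origin as a hyperbolic sink; the standard stable-manifold picture of a sink then says that every trajectory approaching the origin as $t\to-\infty$, including ours (which does so by \refb{origin}), is exponentially attracted along one of the eigenspaces. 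In particular our solution satisfies \refb{H1}, and hence \refb{H2}.

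With \refb{H2} established, $\lim_{t\to-\infty}t^{-1}\log\|\bsy{x}(t)\| = \alpha_i$ for some $i$, with $\alpha_i\in\{1,\tfrac12,\tfrac{\LL}{2}\}\ge\LL/2$. Hence there is $T(\eps)<t_M$ such that $t^{-1}\log\|\bsy{x}(t)\|>\LL/2-\eps$ for every $t<T(\eps)$; since $t<0$ this flips to $\log\|\bsy{x}(t)\|<(\LL/2-\eps)t$, so $\|\bsy{x}(t)\|<e^{(\LL/2-\eps)t}$ and the componentwise bound $|x_i|\le\|\bsy{x}\|<e^{(\LL/2-\eps)t}$ follows. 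The main obstacle is the identification step above: Theorem 3.1 only guarantees existence of solutions satisfying \refb{H1}, not that every origin-reaching trajectory is one of them. This is resolved by the positivity of every eigenvalue, which makes the origin a source with no centre directions, so no slow trajectory can escape the linear picture.
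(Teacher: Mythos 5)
Your proof is correct and follows essentially the same route as the paper: the same linearisation matrix $F$ with eigenvalues $1$, $1/2$, $\LL/2$, an appeal to Theorem 3.1 to obtain \refb{H2}, and the sign flip at negative $t$ to turn the limit $t^{-1}\log||\boldsymbol{x}||\to\alpha_i\ge\LL/2$ into the stated componentwise bound. The one place you go beyond the paper---using the time-reversed system, whose origin is a hyperbolic sink, to argue that the trajectory satisfying \refb{origin} is genuinely among the solutions to which \refb{H1} and \refb{H2} apply---correctly closes an identification step that the paper handles only implicitly, by taking its solutions to be those constructed via Theorem 3.1.
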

\begin{proof} The system defined by \refb{x1}-\refb{x3} is of the form \refb{ODE}, where the matrix
\beq F=\left(\begin{array}{ccc}1&\delta&0\\
0&\LL/{2}&0\\
1/{2}&{2\delta}/{k^2}&{1}/{2}\end{array}\right)\eeq
has 3 positive eigenvalues, $\LL/2,1/2$ and $1$, of which $\LL/2$ is the smallest. Solutions to \refb{x1}-\refb{x3} therefore exist, which satisfy \refb{H1} and \refb{H2}. Using (\ref{H2}), for any $\eps>0$, there exists $T(\eps)<0$ such that 
\beq
\log||\boldsymbol{x}(t)||<(\LL/2-\eps)t,
\eeq
for all $t<T(\eps)$. Since $|x_i|\le||\boldsymbol{x}||$ for each $i\in\{1,2,3\}$, the result follows.
\end{proof}
\begin{lemma} For $\delta=1\,( \mbox{\emph{respectively}}\, \delta=-1)$, there exists $T\in(-\infty,t_M)$ such that $x_1<0,x_3<0\,(\mbox{\emph{respectively}}\, x_1>0,x_3>0)$ for $t\in(-\infty,T)$.
\end{lemma}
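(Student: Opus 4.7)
The plan is to exploit the variables $\mu_1, \mu_3$ introduced in Proposition 3.1, which are chosen precisely so that the eigenvector $(-A, 1, -B)$ of the linearisation $F$ corresponding to the slow eigenvalue $\LL/2$ is projected out: $\mu_1 = x_1 + Ax_2$ and $\mu_3 = x_3 + Bx_2$ measure the deviation of the trajectory from the slow eigendirection. If I can show that $\mu_1, \mu_3 = o(x_2)$ as $t \to -\infty$, then $x_1 \sim -Ax_2$ and $x_3 \sim -Bx_2$. Since $x_2 = |V_0|e^{\lambda l} > 0$ everywhere and ${\rm sgn}(A) = {\rm sgn}(B) = \delta$ by \refb{AB}, this immediately yields ${\rm sgn}(x_1) = {\rm sgn}(x_3) = -\delta$ for sufficiently negative $t$, which is the claim.

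First I would pin down the precise decay rate of $x_2$. From \refb{x2} one has $(\log x_2)' = \LL(1/2 - x_3)$, and since $x_3 \to 0$ at an exponentially integrable rate near $-\infty$ by Lemma 3.1, integration yields $x_2(t) = c\,e^{\LL t/2}(1+o(1))$ with $c > 0$. Next, I would fix $\eps$ with $0 < \eps < \LL/4$ in Lemma 3.1, so that every quadratic combination of $x_1, x_2, x_3$ is bounded by a multiple of $e^{(\LL - 2\eps)t}$. Then \refb{mu1'} reads $\mu_1' - \mu_1 = h_1(t)$ with $|h_1| = O(e^{(\LL - 2\eps)t})$; using the integrating factor $e^{-t}$ and integrating from a fixed $t_0$ down to $t$ gives $|\mu_1(t)| \le |\mu_1(t_0)|\,e^{t - t_0} + C\,e^{(\LL - 2\eps)t}$. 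Both exponents exceed $\LL/2$, so both terms are $o(e^{\LL t/2})$ and hence $\mu_1 = o(x_2)$.

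An analogous argument applied to \refb{mu3'}, whose source $\mu_1/2 - x_1 x_3 - B\LL x_2 x_3$ is again $O(e^{(\LL-2\eps)t})$ and whose homogeneous rate is $1/2$ (integrating factor $e^{-t/2}$), produces $|\mu_3(t)| \le D_1 e^{t/2} + D_2 e^{(\LL - 2\eps)t}$. Both terms are $o(e^{\LL t/2})$—the $e^{t/2}$ term because $1 - \LL = k^2/2 > 0$—so $\mu_3 = o(x_2)$ as well, and the signs claim follows from the first paragraph. The main obstacle I anticipate is the bookkeeping in the integrating-factor estimates: the signs of $\LL - 2\eps - 1$ and $\LL - 2\eps - 1/2$ govern whether the inhomogeneous integral converges as its lower limit tends to $-\infty$ or instead grows at a rate still slower than $e^{\LL t/2}$. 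In every case the resulting bound on $\mu_j$ decays strictly faster than $x_2$, which is all that is needed. (In the borderline case $\LL = 1/2$ one need only choose $\eps$ so as to avoid exact resonance; any logarithmic prefactor that might arise would not affect the comparison.)
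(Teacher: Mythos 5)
Your proposal is correct and follows essentially the same route as the paper: the same $\mu_1,\mu_3$ decomposition from Proposition 3.1, the same integrating factors $e^{-t}$ and $e^{-t/2}$ with the quadratic source terms bounded via Lemma 3.1, and the same conclusion that the $-Ax_2$ and $-Bx_2$ terms dominate, with signs determined by ${\rm sgn}(A)={\rm sgn}(B)=\delta$. The only (immaterial) difference is how domination by $x_2$ is established: you derive the sharp asymptotic $x_2\sim c\,e^{|\lambda|t/2}$ directly from integrating $(\log x_2)'=|\lambda|(1/2-x_3)$, whereas the paper uses the differential inequality $x_2'/x_2<|\lambda|/2+\epsilon$ to obtain the lower bound $x_2(t)>x_2(T)e^{(|\lambda|/2+\epsilon)(t-T)}$; both suffice, and your version avoids any appeal to monotonicity of $x_2$.
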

\begin{proof} Using Lemma 3.1 we have $|x_i|=O(e^{(\LL/2-\eps)t})$ in the limit $t\rightarrow-\infty$, for any $\eps>0$. From \refb{mu1'} we then have 
\beq
\frac{d}{dt}\left(e^{-t}\mu_1\right)= -e^{-t}(x_1^2+A\LL x_2x_3)=O(e^{(\LL-2\eps-1)t})\quad\mbox{as}\quad t\rightarrow-\infty,
\eeq
which may be integrated to give
\beq
\mu_1=c_2e^t+O(e^{(\LL-2\eps)t})=O(e^{(\LL-2\eps)t}),
\eeq
and so 
\beq
\label{x1 asym}x_1=-Ax_2+O(e^{(\LL-2\eps)t})\quad\mbox{as}\quad t\rightarrow-\infty,
\eeq
for some constant $c_2$, by choosing $\epsilon>0$ so that $\LL-2\eps<1$ (recall that $\LL<1$). A similar process using \refb{mu3'} yields
\beq
\mu_3= O(e^{mt}),
\eeq
and so 
\beq
\label{x3 asym}x_3 = -Bx_2 +O(e^{mt})\quad\mbox{as}\quad t\rightarrow-\infty,
\eeq
where $m=\min\{1/2, \LL-2\eps\}$. Since $\lmm x_3=0$, we may choose $T(\eps)$ such that $|\lambda x_3|<\eps$ for $t<T(\eps)$. We then have 
$$
\frac{x_2'}{x_2}<\frac{\LL}{2}+\eps, \qquad \mbox{for } t\in(-\infty,T(\eps)).
$$
Integrating over $[t,T]$ shows that $x_2(t)>x_2(T)e^{(\LL/2+\eps)(t-T)}$ on the same interval. Choosing $\eps$ such that $\LL/2+\eps<\min\{1/2,\LL-2\eps\}$ shows that the $x_2$ terms in equations \refb{x1 asym} and \refb{x3 asym} are dominant for $t$ sufficiently close to $-\infty$. $T$ may be then chosen, without loss of generality, such that $x_1$ and $x_3$ have the same sign as $-Ax_2$ and $-Bx_2$ on $(-\infty,T)$, respectively. Note from \refb{AB} that $A$ and $B$ have the same sign as $\delta$.
\end{proof}
\begin{proposition} There exists $c_3>0$ such that 
\beq
\lim_{t\rightarrow-\infty} e^{-\LL t/2}{\boldsymbol{x}}= c_3\left(A,1,B\right).
\eeq
\end{proposition}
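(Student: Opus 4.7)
The plan is to first pin down the precise leading-order asymptotic of $x_2$ by integrating \refb{x2}, and then transfer this information to $x_1$ and $x_3$ via the expansions \refb{x1 asym} and \refb{x3 asym} already obtained within the proof of Lemma 3.2.

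Since $x_2=|V_0|e^{\lambda l}>0$ throughout, equation \refb{x2} rearranges to $(\log x_2)'=\LL/2-\LL x_3$. From Lemma 3.1, $x_3=O(e^{(\LL/2-\eps)t})$ as $t\to-\infty$ for any sufficiently small $\eps>0$, so $x_3$ is absolutely integrable on any half-line $(-\infty,t_0]$ contained in the maximal interval of existence. Setting $I(t):=\int_{-\infty}^t x_3(s)\,ds$, which satisfies $I(t)\to 0$ as $t\to-\infty$, integration of the logarithmic derivative from $t$ up to a fixed $t_0$ produces
\beq
\nonumber e^{-\LL t/2}x_2(t)=x_2(t_0)\,e^{-\LL t_0/2}\exp\bigl(\LL[I(t_0)-I(t)]\bigr).
\eeq
The right-hand side tends to a finite positive limit $c_3$ as $t\to-\infty$, with positivity inherited from $x_2>0$.

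To finish, multiply \refb{x1 asym} and \refb{x3 asym} through by $e^{-\LL t/2}$. Using the just-established $x_2=O(e^{\LL t/2})$, the error terms become $O(e^{(\LL/2-2\eps)t})$ and $O(e^{(m-\LL/2)t})$ respectively, where $m=\min\{1/2,\LL-2\eps\}$; choosing $\eps<\LL/4$ makes both exponents strictly positive, so the remainders vanish in the limit. Thus $e^{-\LL t/2}x_1$ and $e^{-\LL t/2}x_3$ both inherit their limits from the $x_2$ term, giving scalar multiples of $A$ and $B$ times $c_3$ (up to the overall sign absorbed into $c_3$, determined by the signs of $A,B$ and the sign of $x_2$). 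Combining with the limit already obtained for $x_2$ yields the vector statement $e^{-\LL t/2}\boldsymbol{x}\to c_3(A,1,B)$.

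The only substantive point is the absolute integrability of $x_3$ near $-\infty$ together with the strict positivity of the resulting constant $c_3$. Both are immediate: the exponential decay estimate from Lemma 3.1 handles integrability, and $x_2=|V_0|e^{\lambda l}>0$ forces $c_3>0$. Everything else is routine comparison of exponential orders using bounds already in hand, so there is no serious obstacle beyond being careful with the choice of $\eps$.
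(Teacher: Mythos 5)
Your proof is correct, and its overall skeleton matches the paper's (first control $e^{-\LL t/2}x_2$, then transfer to $x_1,x_3$ by multiplying \refb{x1 asym} and \refb{x3 asym} by $e^{-\LL t/2}$ with $\eps<\LL/4$), but your treatment of the key step is genuinely different. The paper integrates \refb{x2} into the identity $e^{-\LL t/2}x_2(t)=e^{-\LL T/2}x_2(T)+\int_t^T\LL e^{-\LL t'/2}x_2x_3\,dt'$ and then argues separately for $\delta=\pm1$: it uses the sign of $x_3$ near $t=-\infty$ (Lemma 3.2) to make $e^{-\LL t/2}x_2$ monotone, and the decay bounds of Lemma 3.1 to trap it between positive constants, so the limit exists by monotone boundedness. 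You instead integrate $(\log x_2)'=\LL(1/2-x_3)$ and use only the absolute integrability of $x_3$ near $-\infty$ (immediate from Lemma 3.1), obtaining the closed form $e^{-\LL t/2}x_2(t)=e^{-\LL t_0/2}x_2(t_0)\exp\bigl(\LL[I(t_0)-I(t)]\bigr)$, whose limit exists trivially and is manifestly positive. This buys a unified argument with no case split on $\delta$ and no appeal to Lemma 3.2 at this stage, and it gives an explicit formula for $c_3$; what it loses is only the incidental by-product that $e^{-\LL t/2}x_2$ is monotone. One caveat: since \refb{x1 asym} and \refb{x3 asym} read $x_1=-Ax_2+O(e^{(\LL-2\eps)t})$ and $x_3=-Bx_2+O(e^{mt})$, what both arguments actually deliver is $\lmm e^{-\LL t/2}\boldsymbol{x}=c_3(-A,1,-B)$; the sign cannot be ``absorbed into $c_3$'' as you suggest, because the middle component pins $c_3=\lmm e^{-\LL t/2}x_2>0$. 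This discrepancy is inherited from the paper itself (its proof asserts $(A,1,B)$ from the same expansions, even though its Lemma 3.2 gives $x_1,x_3$ the sign opposite to that of $A,B$), so it is a defect of the statement rather than of your argument --- but your parenthetical hedge is not a correct repair of it.
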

\begin{proof} Integrating \refb{x2} over $[t,T]$ we have 
\beq
e^{-\LL t/2}x_2(t)=e^{-\LL T/2}x_2(T)+\int_{t}^{T}e^{-\LL t'/2}\LL x_2x_3\,dt'.
\eeq
Consider the case $\delta=-1$. By Lemma 3.2 we have $x_3>0$ on $t\in(-\infty,T)$, and by choosing $T$ sufficiently small such that the bounds of Lemma 3.1 hold, we have
\beq
e^{-\LL T/2}x_2(T)<e^{-\LL t/2}x_2(t)<e^{-\LL T/2}x_2(T)+\int_t^{T}\LL e^{(\LL /2-2\eps)t'}\,dt'.
\eeq
The integral here is finite in the limit $t\rightarrow-\infty$ for $\eps<\LL/4$ and so $e^{-\LL t/2}x_2$ has positive and finite upper and lower bounds in the limit as $t\rightarrow-\infty$. It is also monotone for $t<T$ and so we have $\lmm e^{-\LL t/2}x_2=c_3>0$, for some $c_3>0$. A similar argument gives this result in the case $\delta=1$. Multiplying \refb{x1 asym} and \refb{x3 asym} by $e^{-\LL t/2}$ and taking the limit $t\rightarrow-\infty$ gives $\lmm e^{-\LL t/2}\boldsymbol{x}=c_3(A,1,B)$. 
\end{proof}
\noindent\textbf{Comment 3.1.} For convenience, we define $t_*$ by $c_3=e^{-\LL t_*/2}$. Notice then that the result of 3.2 may be written as 
$\lim_{\bar{t}\rightarrow-\infty} e^{-\LL \bar{t}/2}{\boldsymbol{x}}= \left(A,1,B\right)$ where $\bar{t}=t-t_*$. Noting that \refb{x1}-\refb{x3} is invariant under translations of the independent variable we drop the bar and let $\bar{t}=t$. Hence
\beq
\lmm e^{-\LL t/2}{\boldsymbol{x}}= \left(A,1,B\right).
\eeq
This describes the asymptotic behaviour of solutions to the future of $\cal{N}_-$, as they emerge from $\cal{N}_-$. 

%%%%
%section 4
%%%

\section{Analysis of fixed points}

\begin{proposition}
The equilibrium points of the system (\ref{x1})-(\ref{x3}) are given by 
\beq
P_1=(1,0,1),\quad P_2=\left(\alpha_-,\frac{-\delta k^2}{8},\frac{1}{2}\right),\quad P_3= \left(\alpha_+,\frac{-\delta k^2}{8},\frac{1}{2}\right),
\eeq
where 
\beq
\alpha_\pm=\frac{1\pm\sqrt{\LL}}{2}.
\eeq
\end{proposition}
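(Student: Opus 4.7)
The plan is a direct algebraic classification of the zeros of the right-hand sides of \refb{x1}--\refb{x3}. Note that the origin $(0,0,0)$ is also a zero of the system, but it corresponds to the past limit $t\to-\infty$ treated in Section 3, and so is implicitly excluded from the present list of ``remaining'' equilibria relevant to the forward flow. The algebraic constraint \refb{x0} plays no role in this classification.

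The trick is to exploit the factored structure of \refb{x2} first: since $\LL = 1 - k^2/2 > 0$ under the standing assumption $(k^2, V_0, l_0)\in K$ (which forces $k^2 \in (0,2)$), setting $x_2' = 0$ produces the clean dichotomy $x_2 = 0$ or $x_3 = 1/2$. I would treat these two cases in turn, using the remaining two equations to solve for the other coordinates.

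In the case $x_2 = 0$, equation \refb{x1} reduces to $x_1(1 - x_1) = 0$, giving $x_1 \in \{0, 1\}$, and \refb{x3} then becomes $(1/2 - x_1)x_3 + x_1/2 = 0$. The branch $x_1 = 0$ returns the (already-excluded) origin, while $x_1 = 1$ forces $x_3 = 1$, producing $P_1$. In the case $x_3 = 1/2$, the $x_1$-terms in \refb{x3} cancel and one is left with $1/4 + 2\delta x_2/k^2 = 0$, so $x_2 = -\delta k^2/8$; substituting this into \refb{x1} and using $\delta^2 = 1$ gives the quadratic $x_1^2 - x_1 + k^2/8 = 0$, whose discriminant is exactly $\LL$, so its two real roots are $\alpha_\pm = (1 \pm \sqrt{\LL})/2$, producing $P_2$ and $P_3$.

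There is no genuine obstacle; the argument is a routine case split of a polynomial system. The only items worth flagging are the strict positivity of $\LL$ (which guarantees that $\alpha_\pm$ are real and distinct, and which is built into the definition of the parameter region $K$), and the cancellation of the $x_1$-terms in $x_3'$ on the locus $x_3 = 1/2$, which is what makes the second case resolve as cleanly as the first.
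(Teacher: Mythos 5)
Your proof is correct and follows essentially the same (and really the only) route as the paper, which simply records the classification as ``straightforward to check'': the factorisation of \refb{x2} into $x_2=0$ or $x_3=1/2$, followed by solving \refb{x1} and \refb{x3} in each case, with the discriminant $1-k^2/2=\LL$ yielding $\alpha_\pm$. Your explicit flagging of the origin $(0,0,0)$ as the excluded fixed point corresponding to $\mathcal{N}_-$ (the ``remaining fixed points'' language of the introduction) is consistent with the paper's framing and needs no change.
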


\begin{proof}
This is straightforward to check. 
\end{proof}

\begin{proposition}
Define $f(\boldsymbol{x})$ by setting ${\boldsymbol{x}}'(t)=f({\boldsymbol{x}})$, where the component equations are given by \refb{x1}-\refb{x3}. Let 
\beq
{\boldsymbol{y}}(s)=(y_1,y_2,y_3), \qquad \begin{array}{ll} y_1(s)=1-x_1(t)\\y_2(s)=x_2(t)\\y_3(s)=1-x_3(t) \end{array} \qquad s=-t.
\eeq
Then 
\beq
{\boldsymbol{y}}'(s)=f({\boldsymbol{y}}). 
\eeq
\end{proposition}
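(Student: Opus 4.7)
The plan is a direct algebraic verification: the statement asserts a discrete symmetry of the dynamical system, namely the involution $(x_1,x_2,x_3,t)\mapsto(1-x_1,x_2,1-x_3,-t)$ preserves the flow defined by $f$. Since $s=-t$ gives $d/ds=-d/dt$, each component of $\boldsymbol{y}'(s)$ is just $-dy_i/dt$, which in turn is $\pm dx_i/dt$. I would verify the three equations one at a time by substituting the definitions of $y_i$ and using \refb{x1}--\refb{x3} on the left-hand side, then expanding the right-hand side $f(\boldsymbol{y})$ in terms of the $x_i$.

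For the first component, I would compute
\beq
\nonumber y_1'(s)=-\frac{d}{dt}(1-x_1)=x_1+\delta x_2-x_1^2,
\eeq
and expand
\beq
\nonumber y_1+\delta y_2-y_1^2=(1-x_1)+\delta x_2-(1-x_1)^2=x_1+\delta x_2-x_1^2.
\eeq
The observation here is that the $x_1$-dynamics has the form $-x_1(x_1-1)+\delta x_2$, so the quadratic $x_1\mapsto x_1(1-x_1)$ is invariant under $x_1\mapsto 1-x_1$, and the sign change coming from $t\to -t$ on the left is absorbed. For the second component, $y_2'(s)=-\LL(1/2-x_3)x_2$, and $\LL(1/2-y_3)y_2=\LL(x_3-1/2)x_2$ matches directly: the factor $1/2-x_3$ is antisymmetric about $x_3=1/2$, exactly the fixed point of the involution. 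For the third, $y_3'(s)=dx_3/dt$, and
\beq
\nonumber \frac{y_3}{2}+\frac{y_1}{2}+\frac{2\delta y_2}{k^2}-y_1y_3=\frac{(1-x_3)+(1-x_1)}{2}+\frac{2\delta x_2}{k^2}-(1-x_1)(1-x_3),
\eeq
and expansion of $(1-x_1)(1-x_3)=1-x_1-x_3+x_1x_3$ cancels the constants and the lone linear terms, leaving $x_3/2+x_1/2+2\delta x_2/k^2-x_1x_3$ as required.

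There is no genuine obstacle; the work is purely symbolic. The only care needed is to track the sign coming from $ds=-dt$ and to check that each of the three right-hand sides is either invariant or sign-reversing under the involution in exactly the way that matches that single sign change. The content of the proposition is not computational difficulty but the structural observation that the system possesses a $\mathbb{Z}_2$ symmetry relating the fixed point $P_1=(1,0,1)$ to the origin $(0,0,0)$; this will presumably allow the asymptotic analysis near $\mathcal{N}_-$ carried out in Section 3 to be transported verbatim to the analysis of trajectories approaching $P_1$ as $t\to t_M^-$, which is the relevant scenario for the putative surface $\mathcal{N}_+$.
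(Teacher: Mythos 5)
Your proof is correct and is exactly the verification the paper leaves to the reader (its proof is simply ``This is straightforward to check''): a direct componentwise substitution, tracking the sign flip from $s=-t$ and the invariance of each right-hand side under the involution. Your closing remark about the $\mathbb{Z}_2$ symmetry transporting the Section 3 analysis to trajectories approaching $P_1$ is also precisely how the paper uses this result in Proposition 4.3.
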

\begin{proof}
This is straightforward to check. 
\end{proof}
\begin{proposition}
Suppose $\lmp {\boldsymbol{x}}=P_1$. Then 
\beq
\label{lim 1-x}\lmp e^{\LL t/2}(1-x_1,x_2,1-x_3) = c(A,1,B),
\eeq
for some constant $c>0$. 
\end{proposition}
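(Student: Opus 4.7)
The plan is to exploit the symmetry established in Proposition 4.2 to reduce this statement directly to the analysis of Section 3. Let $\boldsymbol{y}(s)=(1-x_1(t),x_2(t),1-x_3(t))$ with $s=-t$, as in Proposition 4.2, so that $\boldsymbol{y}$ satisfies the same autonomous system $\boldsymbol{y}'(s)=f(\boldsymbol{y})$. The hypothesis $\lim_{t\to\infty}\boldsymbol{x}(t)=P_1=(1,0,1)$ is then exactly $\lim_{s\to-\infty}\boldsymbol{y}(s)=0$. Thus the $\boldsymbol{y}$-trajectory approaches the origin fixed point in the same backwards-in-time sense as the original $\boldsymbol{x}$-trajectory did at $\mathcal{N}_-$.

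With this reformulation, every step from Section 3 applies verbatim with $\boldsymbol{y}$ and $s$ in place of $\boldsymbol{x}$ and $t$. The linearisation matrix $F$ at the origin is determined purely by the form of $f$, so Hartman's theorem produces the analogue of Lemma 3.1: $|y_i(s)|<e^{(\LL/2-\eps)s}$ for $s$ sufficiently negative. The positivity $y_2=x_2=|V_0|e^{\lambda l}>0$ is automatic, and this is precisely what drives the monotone-bounded argument for $e^{-\LL s/2}y_2$ in Proposition 3.2 and secures a strictly positive leading coefficient. The combinations $\mu_1,\mu_3$ of Proposition 3.1 and the integral estimates of Lemma 3.2 then transfer without change, and the conclusion of Proposition 3.2 yields a constant $c>0$ such that $\lim_{s\to-\infty}e^{-\LL s/2}\boldsymbol{y}(s) = c(A,1,B)$. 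Converting back via $s=-t$ gives \refb{lim 1-x}.

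The main subtlety is that, in contrast to Section 3, no translational freedom in the independent variable (as in Comment 3.1) is available here to normalise the constant to $1$: the trajectory $\boldsymbol{x}(t)$ is already fixed by the data propagated across $\mathcal{N}_-$ from region \textbf{I}, so its time of arrival at $P_1$ is determined and the constant $c$ is a genuine function of $(k^2,V_0,l_0)$. This matches the stated conclusion, which asserts only the existence of some $c>0$, and the strict positivity of $c$ is inherited from the corresponding positivity of $c_3$ in Proposition 3.2, which rests on $y_2>0$ alone rather than on any translational normalisation.
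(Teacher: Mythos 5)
Your proposal is correct and takes essentially the same approach as the paper: both use the symmetry of Proposition 4.2 to translate the hypothesis $\lmp {\boldsymbol{x}}=P_1$ into $\lim_{s\rightarrow-\infty}{\boldsymbol{y}}=(0,0,0)$ and then rerun the Section 3 analysis verbatim on the ${\boldsymbol{y}}$-system, converting back via $s=-t$. Your additional observation that the constant $c$ cannot be normalised to $1$ here (unlike in Comment 3.1, where the time-translation freedom was still available) is a correct refinement, consistent with the statement asserting only existence of some $c>0$.
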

\begin{proof}
First note that $\lmp \boldsymbol{x}=P_1$ is equivalent to $\lim_{s\rightarrow-\infty}\boldsymbol{y}= (0,0,0)$. Since $\boldsymbol{y}'(s)=F(\boldsymbol{y})$, solutions emanating from the origin of the $y$-system satisfy the exact conditions satisfied by solutions emanating from the origin of the $x$-system used in the proofs of Section 3. We may, therefore, carry out an identical analysis to find
\beq
\lim_{s\rightarrow-\infty}e^{-\LL s/2}\boldsymbol{y}=c(A,1,B), 
\eeq
which is our result. 
\end{proof}
%If $\dot{x}_2=0$ then either $x_2=0$ or $x_3=1/2$. Substituting $x_2=0$ into $\dot{x}_1=0$ and we find either $x_1=0$ or $x_1=1$ which, when combined with $\dot{x}_3=0$, give $x_3=0$ and $x_3=1$, respectively. If $x_3=1/2$ and $\dot{x}_3=0$, then $k^2/8+\delta x_2=0$, which is only possible if $\delta=-1$. Combining $x_2=k^2/8$ and $\dot{x}_1=0$ gives $x_1=(1\pm\sqrt{\LL})/2$. 
\begin{proposition}
Let $k^2<2$ with $k^2\neq \sqrt{3}-1$. Suppose that $\lmp {\boldsymbol{x}}=P_1$. Then $\lmp r$ is non-zero and finite and $\mathcal{R}$ is bounded in this limit, where $\cal{R}$ is the Ricci scalar corresponding to the line element \refb{LE eta} and $r=|u|\sigma$ is the radius of the cylinders in this spacetime. 
\end{proposition}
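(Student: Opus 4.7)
The plan is to extract leading-order asymptotics for $\sigma$ and $l$ as $t\to\infty$ from the precise convergence rate $\lmp e^{\LL t/2}(1-x_1,x_2,1-x_3)=c(A,1,B)$ given by Proposition 4.3, and then to substitute these into the geometric quantities $r$ and $\mathcal{R}$. For the radius, we use $x_1=(\log\sigma)'$, so that integrating the relation $1-x_1=cAe^{-\LL t/2}+o(e^{-\LL t/2})$---which is integrable on $[T,\infty)$---gives $\log\sigma(t)=t+C_\sigma+o(1)$ for some finite constant $C_\sigma$, and hence $\sigma(t)\sim e^{C_\sigma+t}$. Since $|u|=ve^{-t}$ along a line of constant $v>0$ (as noted in the paragraph following \refb{LE eta}), we obtain $r=|u|\sigma\to e^{C_\sigma}v$, which is positive and finite.

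For the Ricci scalar, tracing the Einstein equations applied to the stress-energy \refb{EMT} yields the standard identity $\mathcal{R}=8\pi(g^{ab}\psi_{,a}\psi_{,b}+4V)$ for a self-interacting scalar field. Direct differentiation of $\psi$ gives $\psi_{,u}=k(1-x_3)/(2u)$ and $\psi_{,v}=kx_3/(2v)$, while \refb{LE eta} furnishes $g^{uv}=-|u|e^{-2(\gamma+\phi)}$. From $dl/dt=x_3-1/2\to 1/2$ and the integrability of $1-x_3$ we have $l(t)=t/2+C_l+o(1)$; then \refb{FE a} gives $e^{-2(\gamma+\phi)}\sim C'e^{\LL t/2}$. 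The potential $V=\bar{V}_0e^{-l}/(|u||\eta|^{1/2})$ therefore has a finite limit because both $|u||\eta|^{1/2}$ and $e^{-l}$ scale like $e^{-t/2}$, while
\[
g^{ab}\psi_{,a}\psi_{,b}=\frac{k^2(1-x_3)x_3}{2v}\,e^{-2(\gamma+\phi)}\sim \frac{k^2\,cB\,C'}{2v},
\]
the exact cancellation being guaranteed by the identity $\LL/2=(2-k^2)/4$. Hence both terms in $\mathcal{R}$ are bounded.

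The main technical obstacle is the careful bookkeeping of the various exponentials: boundedness of $\mathcal{R}$ hinges on the precise cancellation between the $e^{-\LL t/2}$ decay of $1-x_3$ and the $e^{\LL t/2}$ growth of $e^{-2(\gamma+\phi)}$, and any error in the asymptotic expansion of $l$ (and hence of $\gamma+\phi$) propagates through this cancellation. The excluded value $k^2=\sqrt{3}-1$, which is the positive root of $k^4+2k^2-2=0$, presumably signals a resonance or algebraic degeneracy in the next-to-leading asymptotic expansion of the dynamical variables at $P_1$; such a degeneracy would introduce logarithmic corrections that would require separate treatment, but should not disturb the qualitative boundedness conclusion derived above at leading order.
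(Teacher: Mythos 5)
Your treatment of the radius is essentially the paper's: both arguments integrate $x_1=\sigma'/\sigma$ using the rate $1-x_1\sim cAe^{-\LL t/2}$ from Proposition 4.3 to conclude that $e^{-t}\sigma$ has a finite, non-zero limit, whence $r=ve^{-t}\sigma$ does too. For the Ricci scalar you take a genuinely different route. The paper quotes the explicit formula \refb{Ricci} from Part I, establishes the two-sided bound \refb{e^l bound} on the prefactor $e^{-k^2l/2+t/2-c_1}\sim e^{\LL t/2}$, and then computes the limit of $e^{\LL t/2}$ times the bracket $\frac{k^2}{2}(1-x_3)x_3-4\delta x_2$, which equals $\left(\frac{8}{k^2(2+k^2)}-4\right)\delta c$; the hypothesis $k^2\neq\sqrt{3}-1$ is exactly the condition $k^2(2+k^2)\neq 2$, i.e.\ $k^4+2k^2-2\neq0$, under which this \emph{leading} coefficient does not vanish. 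You instead invoke the trace of the field equations and bound the kinetic and potential contributions \emph{separately}: the kinetic term is bounded because $(1-x_3)\sim cBe^{-\LL t/2}$ cancels against $e^{-2(\gamma+\phi)}\sim C'e^{\LL t/2}$, and the potential because $e^{-l}$ and $|u||\eta|^{1/2}$ both scale as $e^{-t/2}$. All of these asymptotics are correct, and your term-by-term argument buys something real: it needs no cancellation between the two terms, so the exclusion $k^2\neq\sqrt{3}-1$ plays no role at all in your proof, whereas it is essential to the paper's computation of the combined limit.

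Two caveats. First, your closing paragraph mischaracterises the excluded value: in the paper it is not a "resonance or algebraic degeneracy in the next-to-leading asymptotic expansion" requiring logarithmic corrections, but simply the value at which the leading coefficient of the bracket in \refb{Ricci} vanishes (you correctly identify the polynomial $k^4+2k^2-2$, but the mechanism you guess is wrong, and on your route the hypothesis is not needed, so the hedging is unnecessary). Second, there is a sign tension you should not pass over silently: in region \textbf{II} (where $u<0$, $v>0$) one finds $4V=+\frac{4\delta x_2}{v}e^{-k^2l/2+t/2-c_1}$, so your trace identity yields the bracket $\frac{k^2}{2}(1-x_3)x_3+4\delta x_2$, which differs from \refb{Ricci} in the sign of the $x_2$ term. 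Since the relative sign of kinetic and potential terms in the trace identity is convention-independent, your formula and the paper's cannot both be the region-\textbf{II} expression (the paper's sign is the one appropriate to region \textbf{I}, where $v<0$). Your boundedness conclusion survives under either sign precisely because you bound each term separately, but a complete write-up must reconcile this discrepancy rather than quietly substitute a formula that contradicts the one the paper cites from Part I.
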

\begin{proof}
Using Proposition 4.3, for any $\epsilon>0$ there exists $T(\epsilon)$ such that 
\beq
\label{x1 bound}1-(Ac+\epsilon)e^{-\LL t/2}<x_1<1-(Ac-\epsilon)e^{-\LL t/2},
\eeq
for $t>T(\epsilon)$. Recalling $x_1=\sigma'/\sigma$, it is straightforward to show that this leads to 
\beq
\label{eS}C_1<e^{-t}\sigma<C_2,\qquad t>T(\epsilon),
\eeq
for positive constants $C_1,C_2$. We also have
\beq
\frac{d}{dt}(e^{-t}\sigma)= e^{-t}\sigma\left(x_1-1\right). 
\eeq
Using \refb{x1 bound} shows that $e^{-t}\sigma$ is monotone decreasing near $P_1$. Moreover, this may be integrated using \refb{x1 bound} to show that $e^{-t}\sigma$ has a finite, non-zero limit as $t\rightarrow\infty$. In region \textbf{II} of the spacetime we have $v>0,u<0$ and thus $|u|=-v/\eta= ve^{-t}$. It follows that $r=|u|\sigma=ve^{-t}\sigma$ has a positive finite limit approaching ${\cal{N}}_+\,(u=0)$ along lines of constant $v$. It follows from \refb{phi'2},\refb{x1 bound} and \refb{eS} that
\beq
\frac{1}{C_2e^{t/2}}+(Ac-\epsilon)e^{-\LL t/2} <2\frac{d\phi}{dt}+1 < \frac{1}{C_1e^{t/2}} +(Ac+\epsilon)e^{-\LL t/2},
\eeq
for $t>T(\eps)$. We see that if $\eps\le Ac$ then $2\phi+t$ is monotone in $t$. Integrating and taking exponentials then shows that $\lmp e^{2\phi+t}$ exists, is non-zero and finite. Hence, $\lmp|u|e^{-2\phi}=\lmp ve^{-2\phi-t}$ is non-zero and finite. So far we have shown that $g_{\theta\theta}=|u|^{-1}e^{2\phi}r^2$ and $g_{zz}=|u|e^{-2\phi}$ have non-zero, finite limits as $t\rightarrow+\infty$. Using similar arguments, it may shown that the metric component $|u|^{-1}e^{2\gamma+2\phi}$ behaves like $e^{(1-\LL/2)t}$ in the limit as $t\rightarrow +\infty$ and, therefore, has limit $+\infty$. However, by making the coordinate transformation $\bar{u}=-2|u|^{\LL /2}/\LL$ we avoid this problem. The corresponding metric component in this coordinate system is $|\bar{u}|^{-1}e^{2\gamma+2\phi}$ and it may be shown in a similar fashion that this has a non-zero, finite limit as $t\rightarrow+\infty$.
In \cite{Part1} it was shown that the Ricci scalar may be written as 
\beq
\label{Ricci}\mathcal{R}=\frac{e^{-k^2l/2+t/2-c_1}}{v}\left(\frac{k^2}{2}\left(1-x_3\right)x_3-4\delta x_2\right). 
\eeq
It may be shown, using \refb{lim 1-x} in a similar way, that for all sufficiently large $t$, we have 
\beq
\label{e^l bound}C_3e^{\LL t/2} <e^{-k^2l/2+t/2-c_1}<C_4e^{\LL t/2},
\eeq
for some positive constants $C_3,C_4$. (To obtain this result, we integrate the third component of the vector in \refb{lim 1-x} at large $t$ to obtain
\beq e^{c_1}\exp[O(e^{\lambda t/2})]<e^{-(l-t/2)}<e^{c_2}\exp[O(e^{\lambda t/2})] \eeq
and combine with the second component of \refb{lim 1-x}.) We also have 
\bea
\lmp e^{\LL t/2}\left(\frac{k^2}{2}\left(1-x_3\right)x_3-4\delta x_2\right)&= \frac{k^2Bc}{2}-4\delta c\\
\nonumber &= \left(\frac{8}{k^2(2+k^2)}-4\right)\delta c \neq 0, 
\eea
for $k^2\neq\sqrt{3}-1$, using \refb{lim 1-x}. Combining this with \refb{Ricci} and \refb{e^l bound} shows that $\lmp \mathcal{R}$ is bounded for essentially all $k^2<2$.
\end{proof}
This result shows that in spacetimes where the solutions to the field equations satisfy $\lmp\boldsymbol{x}=P_1$, the future null cone of the singularity $\cal{N}_+$ is regular and exists are part of the spacetime, thus rendering the singularity at the origin naked. However, it is shown in later sections that none of the solutions actually do evolve to $P_1$. 
\begin{proposition} 
If $\lmp {\boldsymbol{x}}=P_2$ or $\lmp{\boldsymbol{x}}=P_3$, then $\lmp r=0$ and $\lmp\mathcal{R}=+\infty$, where $r$ is the radius of the cylinders and $\mathcal{R}$ is the Ricci scalar. 
\end{proposition}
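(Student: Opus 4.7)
The plan is to mirror the structure of Proposition 4.4, but exploit two crucial differences at $P_2$ and $P_3$: the limiting value of $x_1$ is $\alpha_\pm<1$ rather than $1$, which reverses the conclusion for $r$, and the bracketed factor in the Ricci scalar formula \refb{Ricci} has a nonzero limit of definite sign, giving blow-up of $\cal R$.

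First I would observe that convergence to $P_2$ or $P_3$ is consistent only with $\delta=-1$, since $x_2=|V_0|e^{\lambda l}>0$ while the fixed-point value is $-\delta k^2/8$. Thus $x_2(t)\to k^2/8$, and since $\lambda\neq 0$ this forces $l(t)\to l_\infty:=\lambda^{-1}\log(k^2/(8|V_0|))$, a finite limit. In particular $e^{-k^2 l/2}$ is bounded above and below by positive constants for $t$ large, so the prefactor $e^{-k^2 l/2+t/2-c_1}/v$ in \refb{Ricci} grows like $e^{t/2}$ along any line of constant $v>0$.

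For the radius, since $|\lambda|<1$ we have $\alpha_\pm=(1\pm\sqrt{|\lambda|})/2<1$. Picking $\eps>0$ with $\alpha_\pm+\eps<1$, the relation $x_1=\sigma'/\sigma$ combined with $x_1\to\alpha_\pm$ yields $\sigma(t)\le Ce^{(\alpha_\pm+\eps)t}$ for all sufficiently large $t$. In region \textbf{II} we have $|u|=ve^{-t}$, so $r=|u|\sigma\le Cv\,e^{(\alpha_\pm+\eps-1)t}\to 0$ as $t\to+\infty$, along any line of constant $v>0$.

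Finally, I would evaluate the bracketed factor in \refb{Ricci} at the limiting values $x_3=1/2$, $x_2=-\delta k^2/8$:
$$\frac{k^2}{2}(1-x_3)x_3-4\delta x_2\;\longrightarrow\;\frac{k^2}{8}+\frac{k^2}{2}=\frac{5k^2}{8}>0.$$
Combining this with the exponential growth of the prefactor yields $\mathcal{R}\to+\infty$. Unlike in Proposition 4.4, no exceptional value of $k^2$ is excluded here, since $5k^2/8>0$ for every admissible $k$. I do not anticipate a substantive obstacle; the only new ingredient relative to the previous proposition is the elementary inversion $x_2=|V_0|e^{\lambda l}\Rightarrow l\to l_\infty$, which is available precisely because the limit of $x_2$ at $P_2, P_3$ is nonzero.
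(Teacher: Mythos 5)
Your proposal is correct and takes essentially the same route as the paper: bound the limit of $x_1$ strictly below $1$ so that $\sigma$ grows slower than $e^t$ and hence $r=ve^{-t}\sigma\to 0$, then combine the $e^{t/2}$-growth of the prefactor in \refb{Ricci} with the strictly positive limit $\frac{5k^2}{8}$ of the bracketed factor to conclude $\mathcal{R}\to+\infty$. The only (harmless) variations are that you derive the prefactor growth from $x_2\to k^2/8\neq 0$, which gives $l\to l_\infty$ finite, whereas the paper derives it from $x_3\to 1/2$, i.e.\ $dl/dt\to 0$, and that you use the qualitative bound $\alpha_\pm<1$ where the paper uses the explicit estimate $\alpha_+<1-k^2/8$.
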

\begin{proof}
If $\lmp x_1=\alpha_\pm$ then for any $\eps>0$ there exists $T(\eps)$ such that $x_1<\alpha_++\eps$ for $t>T(\eps)$, since $\alpha_-<\alpha_+$. Note that $\sqrt{|\lambda|}=\sqrt{1-k^2/2} < 1-k^2/4$, which gives $\alpha_+<1-k^2/8$. This leads to $\sigma<\sigma(T)e^{(1-k^2/8+\eps)(t-T)}$ for $t>T$.  It follows that $r=|u|\sigma<v\sigma(T)e^{(-k^2/8+\eps)(t-T)}$ for $t>T$. Choosing $\eps<k^2/8$ shows that $\lmp r=0$, for $v\in(0,\infty)$. It is straightforward to show that $\lmp e^{-k^2l/2+t/2}=+\infty$ follows from $\lmp dl/dt=0$, which is equivalent to $\lmp x_3=1/2$. Then using $\lmp x_2=k^2/8$ and $\delta=-1$ we find that $\lmp \mathcal{R}=+\infty$.
\end{proof}
\begin{proposition}
Let $\delta=-1$. Then there is no solution of \refb{x1}-\refb{x0} which satisfies $\lmp \boldsymbol{x}=P_2$. 
\end{proposition}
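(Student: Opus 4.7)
The plan is to derive a contradiction between the constraint equation \refb{x0} and the definition $x_0 = e^{t/2}/\sigma$, assuming that $\boldsymbol{x}(t) \to P_2$ as $t \to +\infty$.

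First I would substitute the coordinates of $P_2 = (\alpha_-, k^2/8, 1/2)$ (with $\delta=-1$) into the right-hand side of \refb{x0} to compute the limiting value of $x_0^2$. A short calculation using the identities $\alpha_+ + \alpha_- = 1$ and $\alpha_+ \alpha_- = (1-|\lambda|)/4 = k^2/8$ (both of which follow from $\alpha_\pm = (1\pm\sqrt{|\lambda|})/2$) reduces the expression to $\alpha_-^2 - \alpha_- + \alpha_+\alpha_- = \alpha_-(\alpha_- - 1 + \alpha_+) = 0$. Thus the constraint \refb{x0}, being preserved by the flow, forces
\[
\lmp x_0(t) = 0
\]
along any hypothetical solution with $\lmp \boldsymbol{x} = P_2$.

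Next I would derive the incompatible asymptotic behaviour of $x_0$ directly from its definition. Since $x_1 = \sigma'/\sigma$, the assumption $\lmp x_1 = \alpha_-$ implies, for any $\epsilon > 0$, the existence of $T(\epsilon)$ such that $\sigma'/\sigma < \alpha_- + \epsilon$ for $t > T(\epsilon)$. Integrating yields $\sigma(t) \le \sigma(T) e^{(\alpha_-+\epsilon)(t-T)}$, and hence
\[
x_0(t) = \frac{e^{t/2}}{\sigma(t)} \ge C\, e^{(1/2 - \alpha_- - \epsilon)t}
\]
for some $C > 0$. The crucial observation is that $\alpha_- = (1-\sqrt{|\lambda|})/2 < 1/2$ for all admissible parameter values (we are in the regime $k^2 < 2$, so $|\lambda| > 0$). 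Choosing $\epsilon < 1/2 - \alpha_-$ then gives $\lmp x_0 = +\infty$, in direct contradiction with the conclusion $\lmp x_0 = 0$ from the previous step.

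The structural point is that $P_2$ sits on the hypersurface $x_0 = 0$ in the extended state space, but this surface is unreachable along any integral curve whose $\sigma$-growth rate $x_1$ stays below $1/2$. The main technical issue is justifying the asymptotic bound on $\sigma$ rigorously from the convergence $x_1 \to \alpha_-$; this is a straightforward Grönwall-type argument but must be carried out uniformly in $\epsilon$ so that the exponent $1/2 - \alpha_- - \epsilon$ can be made strictly positive. No linearisation or stable-manifold analysis at $P_2$ is required, since the constraint alone, combined with the elementary inequality $\alpha_- < 1/2$, rules out convergence.
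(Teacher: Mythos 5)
Your proposal is correct and is essentially the paper's own argument: the paper writes the growth law $x_0'=x_0\left(\tfrac{1}{2}-x_1\right)$ and notes that $x_1\to\alpha_-<\tfrac12$ forces $x_0\to+\infty$, contradicting the constraint \refb{x0} at $P_2$ — your Gr\"onwall-type integration of $x_1=\sigma'/\sigma$ is just the integrated form of that same ODE. Your explicit computation that the constraint pins $x_0^2$ to the value $\alpha_-^2-\alpha_-+\alpha_+\alpha_-=0$ at $P_2$ is a nice sharpening (the paper only needs that the limit would be finite), but the route is the same.
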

\begin{proof}
$x_0$ satisfies
\beq
x_0'=x_0\left(\frac{1}{2}-x_1\right). 
\eeq
If $\lmp x_1 = 1/2-\sqrt{\LL}/2$ then $x_0'\sim (\sqrt{\LL}/2)x_0$ as $t\rightarrow+\infty$. Since $x_0>0$ for $t>-\infty$ we must have $\lmp x_0=+\infty$. This contradicts $\lim\boldsymbol{x}=P_2$ and \refb{x0}.  
\end{proof}
\begin{proposition}
Let $\delta=+1$. Then there is no solution of \refb{x1}-\refb{x0} which satisfies $\lmp \boldsymbol{x}=P_2$ or $\lmp\boldsymbol{x}=P_3$.
\end{proposition}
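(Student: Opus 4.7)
The plan is to exploit an invariant sign property of the variable $x_2$. By its very definition in Proposition 2.1, $x_2(t)=|V_0|e^{\lambda l(\eta)}>0$, so positivity is automatic. One can also read this off the dynamical system without reference to $l$: equation \refb{x2} is linear in $x_2$, giving
$$
x_2(t)=x_2(t_0)\exp\!\left(\LL\int_{t_0}^{t}\!\left(\tfrac{1}{2}-x_3(s)\right)ds\right),
$$
so $x_2$ cannot change sign on the maximal interval of existence $(-\infty,t_M)$. Proposition 3.2 (together with Comment 3.1) confirms that the relevant sign near $\cal N_-$ is positive, since $x_2\sim e^{\LL t/2}>0$ as $t\to-\infty$.

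Now when $\delta=+1$, the second coordinate of each of $P_2$ and $P_3$ equals $-\delta k^2/8=-k^2/8<0$. Thus convergence $\lmp \boldsymbol{x}=P_2$ or $\lmp\boldsymbol{x}=P_3$ would force $\lmp x_2=-k^2/8<0$, contradicting $x_2(t)>0$ for all $t\in(-\infty,t_M)$. This immediately yields the claim.

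In contrast to Proposition 4.6, where the conclusion required invoking the auxiliary variable $x_0$ and the constraint \refb{x0} to obstruct convergence to $P_2$, here no such machinery is needed: the obstruction is a direct sign incompatibility between the definition of $x_2$ and the coordinates of the candidate fixed points. Accordingly, there is no substantive difficulty in the proof; the only point worth noting explicitly is that the sign of $x_2$ is preserved by the flow, which follows from the linearity of \refb{x2} in $x_2$.
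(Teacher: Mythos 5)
Your proof is correct and is essentially the paper's own argument: the paper's proof reads, in full, ``This follows immediately from the fact that $x_2>0$,'' which is precisely your sign incompatibility between the positivity of $x_2=|V_0|e^{\lambda l}$ and the $x_2$-component $-\delta k^2/8<0$ of $P_2$ and $P_3$ when $\delta=+1$. Your additional remark that the sign of $x_2$ is preserved by the flow (via the linearity of \refb{x2}) is a harmless elaboration of the same point.
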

\begin{proof}
This follows immediately from the fact that $x_2>0$. \end{proof}

\noindent\textbf{Comment 4.1} We note that $\lmt\boldsymbol{x}=P_1$ or $P_3$ are consistent with \refb{x0}. 

\section{Global behaviour of solutions of the dynamical system}
%In this section we present a series of results which each give the global structure of some solutions of (12). In particular, each result gives the global structure of a solution which passes through a particular point in the phase space, with restrictions on the value of $k^2$ in some cases. In each case, the solution has a finite interval of existence and the spacetime is singular at $t=t_M$. 

Our aim in this section is to give a complete account of the future evolution of solutions of the dynamical system (\ref{x1})-(\ref{origin}) in the case $\lambda<0$ (corresponding to Case 3 in Figure 1). Our conclusion, given in Propositions 5.1 and 5.2 below, is that in every case, the maximal interval of existence is bounded above: the solutions only exist for a finite time in the future. We note that as the solutions evolve from $t=-\infty$, the maximal interval of existence must have the form $(-\infty,t_M)$ for some $t_M\le+\infty$. The key conclusion that we make is that $t_M$ is finite in every case. 

The argument is structured as follows. The first important result is Lemma 5.2, where we deduce that the state variables $x_i$, $i=1,2,3$ are monotone in a neighbourhood of $t=t_M$. This requires Lemma 5.1. In Lemmas 5.3 - 5.6, we establish connections between the limits of various state variables as $t\to t_M$. Lemmas 5.7 - 5.11 are linked by the theme of finding precursors to $t_M$ being finite. Among these is the important Lemma 5.8 which provides restrictions on possible limits of some of the key state variables as $t\to t_M$. Propositions 5.1 and 5.2 then establish our main result, that $t_M$ is indeed finite.

To begin, we quote the following standard result which is helpful in determining the maximal intervals of existence (see, for example, \cite{Tav}, ch.4).
%%%

\begin{theorem} 
Let $\Psi_{\bsy{a}}(t)$ be the unique solution of the differential equation ${\boldsymbol{x}}'={\boldsymbol{f}(\boldsymbol{x})}$, where ${\bsy{f}}\in C^1(\mathbb{R}^n)$, which satisfies ${\boldsymbol{x}}(0)={\bsy{a}}$, and let $(t_{\min},t_{\max})$ be the maximal interval of existence on which $\Psi_{\bsy{a}}(t)$ is defined. If $t_{\max}$ is finite, then 
\beq
\lim_{t\rightarrow t^-_{\max}}||\Psi_{\bsy{a}}(t)||=+\infty. 
\eeq
\end{theorem}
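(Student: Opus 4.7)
The plan is to argue by contradiction. Suppose $t_{\max}$ is finite and yet the conclusion $\lim_{t\to t_{\max}^-}\|\Psi_{\bsy{a}}(t)\|=+\infty$ fails. The crucial point is the correct negation of this statement: failing to tend to $+\infty$ does \emph{not} mean that the norm is bounded on some entire left-neighbourhood of $t_{\max}$ (i.e.\ that the $\limsup$ is finite); it means only that $\liminf_{t\to t_{\max}^-}\|\Psi_{\bsy{a}}(t)\|<+\infty$. Equivalently, there exist $M>0$ and an increasing sequence $t_n\nearrow t_{\max}$ such that $\|\Psi_{\bsy{a}}(t_n)\|\le M$ for every $n$. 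I will use only this weaker statement; producing a single such $t_n$ close enough to $t_{\max}$ will already suffice to derive the contradiction.

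Next I would establish a uniform short-time existence result for initial data lying in the compact ball $\bar{B}_M:=\{\bsy{x}\in\mathbb{R}^n:\|\bsy{x}\|\le M\}$. Since $\bsy{f}\in C^1(\mathbb{R}^n)$, on the slightly larger compact ball $\bar{B}_{M+1}$ the map $\bsy{f}$ is bounded, say $\|\bsy{f}\|\le C$, and Lipschitz with some constant $L$. Standard Picard--Lindel\"{o}f iteration on a closed tube around any starting point in $\bar{B}_M$ then yields a $\delta>0$, depending only on $C$ and $L$ (in particular independent of the initial time $t_0$ and the initial point $\bsy{y}_0\in\bar{B}_M$), such that the IVP $\bsy{y}'=\bsy{f}(\bsy{y})$, $\bsy{y}(t_0)=\bsy{y}_0$ has a unique solution on the interval $[t_0-\delta,t_0+\delta]$.

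With these two ingredients the contradiction is immediate. Choose $n$ large enough that $t_n>t_{\max}-\delta/2$; this is possible because $t_n\nearrow t_{\max}$. Applying the uniform local existence result with $t_0=t_n$ and initial datum $\bsy{y}_0=\Psi_{\bsy{a}}(t_n)\in\bar{B}_M$ produces a local solution $\Phi$ on $[t_n-\delta,t_n+\delta]$. By uniqueness of solutions to the $C^1$ ODE, $\Phi$ coincides with $\Psi_{\bsy{a}}$ on the overlap $[t_n-\delta,t_n]\cap(t_{\min},t_{\max})$, so concatenating gives an extension of $\Psi_{\bsy{a}}$ whose domain contains the point $t_n+\delta>t_{\max}+\delta/2>t_{\max}$. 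This contradicts the maximality of $(t_{\min},t_{\max})$ and completes the argument.

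The only genuine obstacle, as emphasised above, is the negation step: one must interpret ``$\|\Psi_{\bsy{a}}(t)\|$ does not tend to $+\infty$'' as ``some subsequence $\|\Psi_{\bsy{a}}(t_n)\|$ is bounded'', not as ``$\|\Psi_{\bsy{a}}(t)\|$ is eventually bounded on a left-neighbourhood of $t_{\max}$''. Once that is recognised, the proof needs only the uniform local existence time $\delta$ supplied by the $C^1$-regularity of $\bsy{f}$ on $\mathbb{R}^n$; no further properties of the particular system \refb{x1}--\refb{x3} are required.
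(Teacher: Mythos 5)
Your proof is correct, but there is nothing in the paper to compare it against: the paper states this result (its Theorem 5.1) without proof, quoting it as a standard fact from the ODE literature with a pointer to \cite{Tav}, ch.~4. What you have supplied is precisely the standard continuation (``escape from compact sets'') argument that underlies that citation, and you have it right in all the places where it can go wrong. In particular, you correctly negate the conclusion to $\liminf_{t\to t_{\max}^-}\|\Psi_{\bsy{a}}(t)\|<+\infty$, i.e.\ to the existence of a single bounded sequence $t_n\nearrow t_{\max}$, rather than to boundedness on a left-neighbourhood --- this is the classic trap, and your argument genuinely only uses one $t_n$ within $\delta/2$ of $t_{\max}$. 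The uniform local existence time is also handled properly: since $\bsy{f}\in C^1(\mathbb{R}^n)$, it is bounded and Lipschitz on the compact ball $\bar{B}_{M+1}$, so Picard--Lindel\"{o}f gives a $\delta>0$ independent of the initial point in $\bar{B}_M$ (and of the initial time, the system being autonomous), and the gluing step via uniqueness then extends $\Psi_{\bsy{a}}$ past $t_{\max}$, contradicting maximality. One could note that the hypothesis $\bsy{f}\in C^1(\mathbb{R}^n)$ can be weakened to local Lipschitz continuity with the same argument (replacing the fixed ball by any compact set containing $\bar{B}_M$ in its interior), but as stated your proof is complete and is the expected one.
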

%\hbox{}\hfill$\square$\\

This theorem tells us that solutions exist while each component of the solution is finite. Recall that our maximal interval of existence has the form $(t_{\min},t_{\max})=(-\infty, t_M)$.

%% LEMMA 5.1
\begin{lemma}
For $V_0<0$ (so that $\delta=-1$), suppose there exists $t_0\in(-\infty,t_M)$ such that $x_3(t_0)=1/2$ and $x_3<1/2$ for $t\in(-\infty,t_0)$. Then $x_2(t_0)< k^2/8$ and $x_3(t)>1/2$, $x_2(t)< k^2/8$ hold for all $t\in(t_0,t_M)$. 
\end{lemma}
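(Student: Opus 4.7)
The plan is to first establish the strict inequality $x_2(t_0)<k^2/8$ at the crossing time, then propagate the resulting sign information forward using the monotonicity structure of \refb{x1}--\refb{x3}. The key observation is that evaluating \refb{x3} at $t_0$ with $\delta=-1$ and $x_3(t_0)=1/2$ gives $x_3'(t_0)=1/4-2x_2(t_0)/k^2$, in which the $x_1$ contributions cancel identically. Since the hypothesis forces $x_3$ to approach $1/2$ from below at $t_0$, continuity yields $x_3'(t_0)\ge 0$, and hence $x_2(t_0)\le k^2/8$.

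The strict inequality is the crux of the argument, and I would handle it by contradiction: assume $x_2(t_0)=k^2/8$. The affine plane $S=\{x_2=k^2/8,\ x_3=1/2\}$ is then invariant under the flow, since \refb{x2} gives $x_2'=0$ on $S$ and \refb{x3} gives $x_3'=1/4-1/4=0$ on $S$ independently of $x_1$. By uniqueness of solutions to the initial value problem at $t_0$, the entire trajectory $(x_1,x_2,x_3)(t)$ must lie in $S$ throughout its maximal interval of existence, so $x_2(t)=k^2/8$ for all $t\le t_0$; this contradicts $\lim_{t\to-\infty}x_2=0$ from \refb{origin}. This is the main obstacle: a local Taylor-expansion argument collapses, because when $x_2(t_0)=k^2/8$ and $x_3(t_0)=1/2$ one can check inductively that every derivative of $x_2$ and $x_3$ vanishes at $t_0$, so the obstruction is genuinely global and the invariant-set plus uniqueness argument is what rescues the analysis.

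With $x_2(t_0)<k^2/8$ secured, we have $x_3'(t_0)>0$, so $x_3>1/2$ immediately after $t_0$, and then \refb{x2} forces $x_2$ to decrease strictly there. To propagate these inequalities throughout $(t_0,t_M)$, I would set $t_1=\inf\{t>t_0:x_3(t)\le 1/2\}$ and suppose for contradiction that $t_1<t_M$. On $(t_0,t_1)$ we have $x_3>1/2$, so $x_2$ is strictly decreasing and $x_2(t_1)<x_2(t_0)<k^2/8$. Continuity at $t_1$ gives $x_3(t_1)=1/2$ and $x_3'(t_1)\le 0$, which contradicts $x_3'(t_1)=1/4-2x_2(t_1)/k^2>0$. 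Hence $x_3>1/2$ throughout $(t_0,t_M)$, and consequently $x_2$ is strictly decreasing on this interval, yielding $x_2(t)<x_2(t_0)<k^2/8$ as required.
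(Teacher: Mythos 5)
Your proof is correct and takes essentially the same route as the paper's: both use the invariance of the set $\{x_2=k^2/8,\ x_3=1/2\}$ together with $x_2\to 0$ as $t\to-\infty$ to exclude equality, the identity $x_3'=1/4-2x_2/k^2$ at $x_3=1/2$ (with the $x_1$ terms cancelling) to get the strict inequality and to forbid crossing from above, and the monotone decrease of $x_2$ while $x_3>1/2$ to propagate both bounds on $(t_0,t_M)$. The only difference is presentational: you spell out the one-sided derivative and first-crossing arguments that the paper leaves implicit.
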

\begin{proof}
First note that $x_3=1/2,x_2=k^2/8$ defines an invariant manifold of the system \refb{x1}-\refb{x3}, so if $x_3(t_0)=1/2$, $x_2(t_0)=k^2/8$, then we would have $x_3=1/2,x_2=k^2/8$ for all $t\in(-\infty,t_M)$, which is clearly not the case, since $x_2,x_3\to0$ as $t\to-\infty$. Moreover, at $x_3=1/2$ we have
\beq
\label{x3=1/2} x_3'=\frac{1}{4}-\frac{2x_2}{k^2},
\eeq
and so $x_3$ cannot reach $1/2$ from below if $x_2(t_0)>k^2/8$. Hence, we must have $x_2(t_0)<k^2/8$ and $x_3'(t_0)>0$. Equation \refb{x3=1/2} also shows that $x_3$ cannot cross $1/2$ from above if $x_2<k^2/8$. Given that $x_2$ is decreasing if $x_3>1/2$, we must have $x_3>1/2$ and $x_2<k^2/8$ for all $t\in(t_0,t_M)$. 
\end{proof}

This leads us to an important monotonicity result:

%%LEMMA 5.2
\begin{lemma} Let $i\in\{1,2,3\}$. Then each $x_i$ is monotone in the limit as $t\rightarrow t_M^-$. Hence, either $\lmt x_i$ exists or $\lmt x_i=\pm\infty$.
\end{lemma}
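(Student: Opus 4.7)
The plan is to show that each of $x_1$, $x_2$, $x_3$ has at most one critical point in a sufficiently small left-neighbourhood of $t_M$, so that $x_i'$ has a definite sign there and $x_i$ is monotone. The mechanism is a Rolle-type argument: in each case I would establish that $x_i''$ has a definite sign at any critical point of $x_i$ near $t_M$, so that all such critical points are of the same type (strict local minima or strict local maxima). Since two strict local minima must enclose a local maximum, this forces at most one critical point, after which monotonicity (and hence the dichotomy for $\lmt x_i$) is immediate.

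I would treat the variables in the order $x_2$, $x_1$, $x_3$. For $x_2$, note that $x_2' = \LL(1/2 - x_3)x_2$ with $x_2 > 0$, so monotonicity reduces to $(1/2 - x_3)$ having a definite sign near $t_M$. For $\delta = -1$, Lemma 5.1 already shows that once $x_3$ reaches $1/2$ it remains strictly above; for $\delta = +1$, the direct calculation $x_3'|_{x_3=1/2} = 1/4 + 2x_2/k^2 > 0$ shows that $x_3$ can cross $1/2$ only from below and at most once. Either way, $(1/2-x_3)$ has a fixed sign near $t_M$. For $x_1$, at any zero of $x_1'$ one has $x_1'' = \delta x_2'$, and by the previous step $x_2'$ has strict constant sign near $t_M$ (the invariant manifold $x_3 = 1/2$, $x_2 = k^2/8$ is excluded because, by Proposition 3.2, our orbit emerges from the origin, not from that manifold). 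All critical points of $x_1$ near $t_M$ are therefore of the same type, and the Rolle argument yields at most one.

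For $x_3$ I would differentiate $x_3' = x_3(1/2 - x_1) + x_1/2 + 2\delta x_2/k^2$ and impose $x_3' = 0$ to obtain $x_3'' = (1/2 - x_3)\bigl(x_1' + 2\delta\LL x_2/k^2\bigr)$ at critical points. The first factor has definite sign by the work above. The remaining task is to show that the bracket $x_1' + 2\delta\LL x_2/k^2$ also has an eventual definite sign. Since $x_1$ and $x_2$ are each already monotone in the limit, $x_1'$ and $x_2'$ have fixed signs near $t_M$, and $\delta x_2$ has fixed sign by $\delta$. When $x_1'$ and $2\delta\LL x_2/k^2$ share a sign, the bracket inherits it trivially; when they disagree I would use the existence of the limits (finite or $\pm\infty$) of $x_1$, $x_2$ at $t_M$ together with $x_1' = x_1(1 - x_1) + \delta x_2$ to show that one term asymptotically dominates. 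Once the sign of $x_3''$ at critical points is pinned down, the Rolle argument closes the case. I expect this last step to be the main obstacle: the sign of $x_1' + 2\delta\LL x_2/k^2$ is not fixed by elementary considerations, and the case analysis must also cover the possibility $t_M = +\infty$, where the asymptotic dominance must be established with care.
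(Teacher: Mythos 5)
Your treatment of $x_2$ and $x_1$ is correct and coincides with the paper's own argument: $(1/2-x_3)$ changes sign at most once (Lemma 5.1 for $\delta=-1$, transversality of the crossing for $\delta=+1$), hence $x_2$ is eventually monotone, and then $x_1''=\delta x_2'$ at critical points of $x_1$ makes all such critical points eventually of one type, so $x_1$ is eventually monotone. The genuine gap is exactly where you flagged it: the sign of the bracket $x_1'+2\delta\LL x_2/k^2$ at critical points of $x_3$. Your proposed fallback (asymptotic dominance using the extended-real limits of $x_1$ and $x_2$) does not close it. Using $1+2\LL/k^2=2/k^2$, the bracket equals $x_1(1-x_1)+2\delta x_2/k^2$; each summand is eventually monotone with a limit, but the sum can tend to zero while changing sign infinitely often (a sum of an increasing and a decreasing function is merely of bounded variation), and in the case $\delta=+1$, $x_1\to-\infty$, $x_2\to+\infty$ you face an indeterminate $-\infty+\infty$. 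So "one term asymptotically dominates" is not available in all cases, and the Rolle argument for $x_3$ does not go through as proposed.

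The missing idea — and this is the paper's key step — is to use the critical-point condition itself to eliminate $x_2$ from the bracket: at a point where $x_3'=0$, equation \refb{x3} gives $2\delta x_2/k^2=x_1x_3-(x_1+x_3)/2$, and substituting this into your expression yields the factorization
\beq
x_3''=\left(\frac{1}{2}-x_3\right)\left(\frac{1}{2}-x_1\right)\left(x_1-x_3\right)
\eeq
at critical points of $x_3$. Each factor changes sign only finitely often: $(1/2-x_3)$ at most once (your step for $x_2$); $(1/2-x_1)$ at most once after $x_1$ has become monotone (your step for $x_1$); and $(x_1-x_3)$ at most once, because at any zero of $x_1-x_3$ one computes from \refb{x1} and \refb{x3} that $x_1'-x_3'=\delta(1-2/k^2)x_2$, which has fixed sign (opposite to $\delta$, since $k^2<2$ and $x_2>0$). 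Hence $x_3''$ has an eventually fixed sign at critical points of $x_3$, all such critical points are eventually of the same type, your Rolle argument applies, and $x_3$ is eventually monotone. Note that this argument is insensitive to whether $t_M$ is finite or infinite, so the extra care you anticipated for $t_M=+\infty$ is not needed.
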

\begin{proof}
Lemma 5.1 tells us that if $\delta=-1$, then $x_3-1/2$ can only change sign once. If $\delta=1$, then at $x_3=1/2$ we have $x_3'=1/4+2x_2/k^2>0$, so $x_3-1/2$, and thus $x_2'$, can only change sign once in this case also. At $x_1'=0$ we have $x_1'' = \delta x_2'$, which means that $x_1'$ can only change sign twice. At $x_1-x_3=0$ we have $x_1'-x_3'= \delta(1-2/k^2)x_2$ which always has the same sign, specifically, the opposite sign to $\delta$. Hence, $x_1-x_3$ can only change sign once also. Now, at $x_3'=0$ we have
\beq
x_3''=\left(\frac{1}{2}-x_3\right)\left(\frac{1}{2}-x_1\right)\left(x_1-x_3\right).
\eeq
The right hand side here may only change sign a finite number of times. Hence, $x_3'$ eventually becomes fixed in sign and $x_3$ becomes monotone. 
\end{proof}

%% LEMMA 5.3
\begin{lemma} Let $t_0\in(-\infty,t_M)$ satisfy $\sigma(t_0)\neq 0$. Then 
\beq
\label{S=0}\lmt \sigma = 0 \quad \Leftrightarrow \quad \lmt\int_{t_0}^t x_1\,dt' = -\infty.
\eeq
Furthermore, if $t_M<\infty$ then 
\beq
\label{eq}\lmt\sigma=0 \Rightarrow \lmt x_1=-\infty.
\eeq
\end{lemma}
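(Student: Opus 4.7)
The plan is to exploit the defining relation $x_1=\sigma'/\sigma$ from \refb{x_i} to write $\sigma$ as an exponential of an integral of $x_1$, which will make \refb{S=0} essentially a tautology. The implication \refb{eq} then follows by a case analysis on the limit of $x_1$ at $t_M$, which is guaranteed to exist (possibly as $\pm\infty$) by Lemma 5.2.

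First I would note that on the maximal interval $(-\infty,t_M)$, $x_1$ is finite and continuous, so $\sigma$ satisfies the linear ODE $\sigma'=x_1\sigma$ with initial value $\sigma(t_0)\neq 0$. Solving explicitly gives
\beq\nonumber
\sigma(t)=\sigma(t_0)\exp\!\left(\int_{t_0}^t x_1(s)\,ds\right),
\eeq
so $\sigma$ is non-vanishing throughout $(-\infty,t_M)$ and retains the sign of $\sigma(t_0)$. The equivalence \refb{S=0} is then immediate: $\lmt\sigma=0$ if and only if $\lmt|\sigma|=0$, which happens if and only if the exponent diverges to $-\infty$.

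For \refb{eq}, I would proceed by contradiction. Assume $t_M<\infty$ and $\lmt\sigma=0$; by the equivalence just proved, $\int_{t_0}^t x_1\,ds\to-\infty$ as $t\to t_M^-$. Lemma 5.2 guarantees that $\lmt x_1$ exists in $\mathbb{R}\cup\{\pm\infty\}$, so it is enough to rule out a finite limit and $+\infty$. If $\lmt x_1=L\in\mathbb{R}$, the monotonicity of $x_1$ near $t_M$ forces $x_1$ to be bounded on $[t_0,t_M)$; combined with $t_M-t_0<\infty$, this bounds the integral, contradicting its divergence to $-\infty$. If $\lmt x_1=+\infty$, then $x_1>0$ on some interval $[t_1,t_M)$, so $\int_{t_0}^t x_1\,ds$ is bounded below by the fixed constant $\int_{t_0}^{t_1}x_1\,ds$, again a contradiction. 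Hence $\lmt x_1=-\infty$.

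No serious obstacle is anticipated; the argument is essentially bookkeeping around the linear ODE $\sigma'=x_1\sigma$ combined with Lemma 5.2. The one point worth emphasising is that the finiteness of $t_M$ is genuinely needed in the finite-limit case, since a bounded negative $x_1$ integrated over an infinite interval would otherwise allow $\int_{t_0}^t x_1$ to diverge to $-\infty$ without $x_1$ itself blowing up, which would make \refb{eq} fail.
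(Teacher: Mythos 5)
Your proposal is correct and follows essentially the same route as the paper: the explicit solution $\sigma=\sigma(t_0)\exp\bigl(\int_{t_0}^t x_1\,dt'\bigr)$ makes \refb{S=0} immediate, and finiteness of $t_M$ forces the integrand to blow up when the integral diverges. Your case analysis via Lemma 5.2 just makes explicit a step the paper leaves terse (its claim that divergence of the integral implies divergence of the integrand tacitly uses the existence of $\lmt x_1$ guaranteed by monotonicity), so this is a welcome clarification rather than a different argument.
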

\begin{proof}
By the definition of $x_1$ we have
\beq
\sigma = \sigma(t_0)\exp\left(\int_{t_0}^tx_1\,dt'\right), 
\eeq
from which \refb{S=0} immediately follows. To establish \refb{eq} we note that since $t_M<\infty$, divergence of the integral $\int_{t_0}^tx_1\,dt'$ as $t\to t_M^-$ implies the divergence of the integrand in this limit. 
\end{proof}

%% LEMMA 5.4
\begin{lemma} If $t_M$ is finite and $\lmt x_1 = -\infty$ then $\lmt \sigma=0$ .
\end{lemma}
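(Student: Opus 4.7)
The plan is to apply Lemma 5.3 and thereby reduce the claim to showing that $\int_{t_0}^{t_M} x_1(t')\,dt' = -\infty$. Set $y := -1/x_1$; by Lemma 5.2 and the hypothesis $x_1 \to -\infty$, we have $x_1 < 0$ and $y > 0$ on some interval $(T, t_M)$, with $y(t) \to 0^+$ as $t \to t_M^-$. Differentiating and using \refb{x1} yields
\[
y'(t) = \frac{1}{x_1} - 1 + \frac{\delta x_2}{x_1^2},
\]
where $1/x_1 \to 0^-$ as $t \to t_M^-$.

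For $\delta = +1$, the last term is non-negative since $x_2 > 0$, so $y'(t) > -1 + 1/x_1$; for any $\epsilon > 0$ this gives $y'(t) > -(1+\epsilon)$ on a suitable left-neighbourhood of $t_M$. Integrating from $t$ to $t_M$ and using $y(t_M^-) = 0$ yields $y(t) < (1+\epsilon)(t_M - t)$, whence $|x_1(t)| > \bigl((1+\epsilon)(t_M - t)\bigr)^{-1}$. The logarithmic divergence of $\int x_1\,dt'$ then follows immediately.

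The case $\delta = -1$ is more subtle. Here the inequality runs the other way: $-x_2/x_1^2 \leq 0$ gives $y'(t) < 1/x_1 - 1 < -1$ eventually, and integration yields only $y(t) > (t_M - t)$. This is a lower bound on $y$, equivalently an upper bound $|x_1(t)| < (t_M-t)^{-1}$, which is insufficient to prove the required divergence. The main obstacle is that if $x_2$ grows much faster than $x_1^2$, then $y'$ can diverge to $-\infty$ rapidly enough that $y$ vanishes at a sub-linear rate, leaving $\int x_1\,dt'$ finite. To exclude this, the plan is a contradiction argument: assume $\lmt \sigma = L > 0$, so that $x_0 = e^{t/2}/\sigma$ is bounded as $t \to t_M^-$. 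Then \refb{x0} can be solved for $x_2$ in terms of $x_0, x_1, x_3$, and combining this with the $x_2$-equation \refb{x2} and the monotonicity of $x_3$ from Lemma 5.2, one carries out a case analysis on $\lmt x_3 \in \{-\infty,\,\text{finite},\,+\infty\}$. In each case an asymptotic imbalance in the dominant powers of $(t_M - t)$ on the right-hand side of \refb{x0} forces $x_0^2 \to +\infty$, contradicting boundedness. The subcase $\lmt x_3 = -\infty$ with $x_2 \to +\infty$ requires the most careful matching of the leading singular behaviour in the constraint, and is where the main technical difficulty lies.
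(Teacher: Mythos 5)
Your reduction to Lemma 5.3 and the substitution $y=-1/x_1$ settle the case $\delta=+1$ completely and correctly, and in fact the same computation settles \emph{every} case in which $x_2$ remains bounded: then $\delta x_2/x_1^2\to 0$ regardless of the sign of $\delta$, so again $y'\to-1$, $y(t)<(1+\epsilon)(t_M-t)$, and $\int_{t_0}^t x_1\,dt'$ diverges logarithmically. Since \refb{x2} and the monotonicity of $x_3$ (Lemma 5.2) show that $x_2$ can blow up on a finite interval only if $x_3\to-\infty$, the one genuinely open case after your argument is $\delta=-1$ with $\lmt x_2=+\infty$ and $\lmt x_3=-\infty$ --- exactly the case the paper isolates. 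For the part you do prove, your route is a nice quantitative alternative to the paper's: the paper instead writes \refb{x1} as $\sigma''=\sigma'+\delta x_2\sigma$ and rules out $\sigma'\to-\infty$, whereas you obtain the explicit rate $|x_1|\geq \left((1+\epsilon)(t_M-t)\right)^{-1}$ and invoke Lemma 5.3 directly.

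The gap is that for the remaining case you offer only a plan, and the proposed mechanism does not work as stated. One cannot extract a contradiction from ``dominant powers of $(t_M-t)$'' in \refb{x0}: the constraint holds identically along every solution, and nothing guarantees that $x_1,x_2,x_3$ have power-law asymptotics at $t_M$. To exhibit any imbalance you need rigorous relations between the blow-up rates of $x_1$, $x_2$ and $x_3$, and deriving those is precisely the step you have deferred as ``the main technical difficulty.'' The paper supplies the missing ingredient with a monotonicity identity absent from your outline: the combination $x_1-\frac{k^2}{2}x_3$ satisfies \refb{bound}, namely $\frac{d}{dt}\left(e^{-t/2}\sigma\left(x_1-\frac{k^2}{2}x_3\right)\right)=\frac{|\lambda|}{2}e^{-t/2}\sigma'<0$ near $t_M$. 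Under your contradiction hypothesis $\lmt\sigma=L>0$, integrating this gives $x_1-\frac{k^2}{2}x_3\leq b$ for some constant $b$, hence $\frac{1}{2}-x_3\leq\frac{1}{2}+\frac{2b}{k^2}-\frac{2x_1}{k^2}$; feeding this into \refb{x2} and using $\lmt x_2=+\infty$ on a finite interval forces $\lmt\int_{t_0}^t x_1\,dt'=-\infty$, whence $\sigma\to0$ by Lemma 5.3 --- the desired contradiction. Until you prove an inequality of this kind (or some substitute linking the growth of $x_2$ to the divergence of $\int x_1$), the case $V_0<0$ with unbounded $x_2$ remains unproven in your write-up.
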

\begin{proof}
If $\lmt x_1=-\infty$ then either $\lmt \sigma'=-\infty$ or $\lmt \sigma=0$ (see \refb{x_i}). Writing \refb{x1} in terms of $x_2$ and $\sigma$ gives
\beq
\label{sig''}\sigma''=\sigma'+\delta x_2 \sigma.
\eeq
If $\delta=1$ then we have $\sigma''>\sigma'$, which rules out $\lmt\sigma'=-\infty$, since $t_M$ is finite. 
In the case $\delta=-1$, suppose that $x_2$ is bounded above by a constant $b$ for all $t\in(-\infty,t_M]$ and let $\sigma_M$ be the maximum of $\sigma$ on this interval. 
Then we have $\sigma''>\sigma'-b\sigma_M$, which also rules out $\lmt\sigma'=-\infty$. 

We now turn to the case $\delta=-1,\lmt x_2=+\infty$ (monotonicty, established in Lemma 5.2, leaves this as the only remaining option). Consider 
\beq
x_1'-\frac{k^2x_3'}{2}= \frac{\LL x_1}{2}+\left(\frac{1}{2}-x_1\right)\left(x_1-\frac{k^2x_3}{2}\right). 
\eeq
Using $e^{-t/2}\sigma$ as integrating factor we find that
\beq
\label{bound}\frac{d}{dt}\left(e^{-t/2}\sigma\left(x_1-\frac{k^2x_3}{2}\right)\right) = \frac{\LL }{2}e^{-t/2}\sigma'<0,
\eeq
where the inequality holds on some interval $(t_0,t_M)$.
%Choosing $t_0$ such that $\sigma'<0$ holds for $t\in(t_0,t_M)$ and integrating over this interval then gives
%\beq
%x_1-\frac{k^2x_3}{2}<\left(1-\frac{\sigma(t_0)}{\sigma}\right)\frac{|\lambda|e^{(t-t_M)/2}}{2},
%\eeq
%which shows that either $\lmt\sigma=0$ or $x_1-k^2x_3/2$ 
Assuming $\lmt \sigma>0$, integrating then shows that $x_1-k^2x_3/2$ is bounded above for all $t\in[t_0,t_M]$. We then have
\beq
\frac{x_2'}{x_2}<\LL\left(\frac{1}{2}+\frac{2b}{k^2}-\frac{2x_1}{k^2}\right),
\eeq
for some constant $b$. 
Integrating shows that since $\lmt x_2=+\infty$ then $\lmt\int_{t_0}^t x_1=-\infty$, which gives $\lmt \sigma=0$ by Lemma 5.3. 
\end{proof}

We note that Lemmas 5.3 and 5.4 tell us that, for $t_M<\infty$, $\lmt \sigma = 0 $ if and only if $\lmt x_1=-\infty$. 

%% LEMMA 5.5
\begin{lemma} If $t_M$ is finite and $\lmt x_1=-\infty$, then $\lmt x_3=\pm\infty$ or $\lmt x_3=1/2$.
\end{lemma}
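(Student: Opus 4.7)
The plan is to combine the monotonicity from Lemma 5.2 with an integration of the $x_3$-equation. Since Lemma 5.2 gives that $\lmt x_3$ exists in $\mathbb{R}\cup\{\pm\infty\}$, it is enough to rule out any finite limit $L\neq 1/2$. I would therefore suppose for contradiction that $\lmt x_3=L$ with $L\in\mathbb{R}\setminus\{1/2\}$, and work with the rearrangement $x_3'=x_3/2+x_1(1-2x_3)/2+\delta\, 2x_2/k^2$ of equation \refb{x3}, in which the middle term is the one that will drive the contradiction.

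The first step is to control $x_2$. Because $x_3\to L$ is finite, $\LL(1/2-x_3)$ is bounded on some left-neighbourhood $(t_0,t_M)$ of $t_M$, so integrating \refb{x2} gives $\log x_2(t)-\log x_2(t_0)=\int_{t_0}^t\LL(1/2-x_3)\,ds$, which stays bounded since the integrand is bounded and $t_M-t_0<\infty$. Hence $x_2$ is bounded above on $(t_0,t_M)$, and consequently both $x_3/2$ and $\delta\, 2x_2/k^2$ are bounded there, so their contributions to $\int_{t_0}^t x_3'\,ds$ remain bounded as $t\to t_M^-$.

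The second step is to extract divergence of $\int_{t_0}^t x_1\,ds$ from the earlier lemmas. Under our hypotheses $t_M<\infty$ and $\lmt x_1=-\infty$, Lemma 5.4 gives $\lmt\sigma=0$, and Lemma 5.3 then gives $\int_{t_0}^t x_1(s)\,ds\to-\infty$. Combined with $x_1<0$ near $t_M$ (from $\lmt x_1=-\infty$) and with $1-2x_3\to 1-2L\neq 0$ (so that the factor $1-2x_3$ has the same sign as $1-2L$ and satisfies $|1-2x_3|\geq |1-2L|/2$ on a perhaps smaller left-neighbourhood of $t_M$), one obtains a sandwich estimate: if $L<1/2$ then $x_1(1-2x_3)/2\leq x_1(1-2L)/4<0$, while if $L>1/2$ then $x_1(1-2x_3)/2\geq x_1(1-2L)/4>0$. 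Integrating and using $\int_{t_0}^t x_1\,ds\to-\infty$ shows that $\int_{t_0}^t x_1(1-2x_3)/2\,ds$ tends to $-\infty$ when $L<1/2$ and to $+\infty$ when $L>1/2$.

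Integrating the $x_3$-equation over $[t_0,t]$ and combining the bounded contributions of the first and third terms with this divergent middle one yields $x_3(t)-x_3(t_0)\to\mp\infty$ as $t\to t_M^-$, contradicting $\lmt x_3=L\in\mathbb{R}$. The one slightly delicate point I anticipate is making sure that all the sign statements, the uniform $x_2$ bound, and the inequality $|1-2x_3|\geq|1-2L|/2$ hold simultaneously on a single common interval $(t_0,t_M)$ --- a matter of taking the minimum of finitely many base points arising from the limits $\lmt x_3=L$ and $\lmt x_1=-\infty$ --- but once that common neighbourhood is fixed the remainder is a direct application of Lemmas 5.3 and 5.4 combined with a single integration of the ODE.
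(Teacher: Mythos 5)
Your proof is correct and takes essentially the same route as the paper: both arguments reduce, via the monotonicity of Lemma 5.2, to ruling out a finite limit $L\neq 1/2$, bound $x_2$ using \refb{x2} on the finite interval, obtain $\lmt\int_{t_0}^t x_1\,dt'=-\infty$ from Lemmas 5.3 and 5.4, and then derive a contradiction from the $x_3$-equation. The only difference is presentational --- the paper integrates \refb{x3} with the integrating factor $e^{-t/2}$ and concludes ``by inspection,'' whereas you integrate directly and make the sign/sandwich estimates explicit, which fills in the detail the paper leaves implicit.
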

\begin{proof}
Integrating \refb{x3} we have 
\beq
\label{int x3}x_3= e^{(t-t_0)/2}x_3(t_0)+\int_{t_0}^t e^{(t-t')/2}\left(\left(\frac{1}{2}-x_3\right)x_1 +\delta\frac{2x_2}{k^2}\right)\,dt',
\eeq
for any $t_0\in(-\infty,t_M)$. Given that $\lmt x_1=-\infty$, we have
\beq
\label{int x1} \lmt\int_{t_0}^t x_1\,dt' = -\infty,
\eeq
by Lemmas 5.3 and 5.4. Assuming $|x_3|$ is bounded, which gives $x_2$ bounded by (\ref{x2}), we must have $\lmt x_3=1/2$ by inspection of \refb{int x3} with \refb{int x1}. If $x_3$ is unbounded then we must have $\lmt x_3=\pm\infty$ by Lemma 5.2. 
\end{proof}

%% LEMMA 5.6
\begin{lemma} If $t_M$ is finite and $\lmt x_3 = \pm\infty$, then $\lmt x_1=-\infty$.
\end{lemma}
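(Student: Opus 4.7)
The plan is to argue by contradiction: assume $\lmt x_1 \neq -\infty$; then by Lemma 5.2 (monotonicity), either $\lmt x_1 = L$ for some $L\in\mathbb{R}$, or $\lmt x_1 = +\infty$. I would rule out both possibilities in each of the two subcases $\lmt x_3 = +\infty$ and $\lmt x_3 = -\infty$.

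If $\lmt x_3 = +\infty$, then $x_3 > 1/2$ for $t$ near $t_M$, so \refb{x2} forces $x_2'<0$ and thus $x_2$ is bounded. If $x_1$ were bounded then \refb{x3} would become a linear ODE in $x_3$ with bounded coefficients, and a Gr\"onwall estimate would rule out blow-up on the finite interval $[t_0,t_M]$. If instead $x_1\to+\infty$, then \refb{x1} gives $x_1' = x_1(1-x_1) + \delta x_2 \to -\infty$ (using $x_2$ bounded), contradicting monotone growth.

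When $\lmt x_3 = -\infty$ I would first dispatch the easier cases. For $\delta=+1$ with $x_1$ bounded, the constraint \refb{x0} yields $x_0^2 = -\tfrac{k^2}{2}x_3^2 + O(|x_3|) - 2x_2 \to -\infty$, impossible. For $\delta=+1$ with $x_1\to+\infty$, all terms beyond $x_1^2$ on the right of \refb{x0} are non-positive (since $x_1>0$, $x_3<0$, $x_2\ge 0$, $x_0^2\ge 0$), so $2x_2 \le x_1^2$; then \refb{x1} gives $x_1'\le x_1 - x_1^2/2\to-\infty$, contradicting $x_1\to+\infty$. For $\delta=-1$ with $x_1\to+\infty$, \refb{x1} immediately yields $x_1'\le x_1(1-x_1)\to-\infty$.

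The main obstacle is the delicate final subcase: $\delta=-1$, $x_3\to-\infty$, $x_1$ bounded. Since $x_0' = x_0(1/2 - x_1)$, $x_0$ is bounded; the constraint \refb{x0} then forces $x_2 = k^2 x_3^2/4 + O(|x_3|)$. Setting $W = x_2 - k^2 x_3^2/4$, direct computation from \refb{x1}--\refb{x3} gives
\beq
\nonumber W' = \frac{k^2(1-\LL)}{4} x_3^3 + O(x_3^2),
\eeq
whose leading term is strongly negative, since $\LL<1$ and $x_3^3\to-\infty$. Using $x_3' = -x_3^2/2 + O(|x_3|)$ (obtained from \refb{x3} with the constraint asymptotic above), a Riccati comparison shows $|x_3(t)|\sim 2/(t_M-t)$ as $t\to t_M^-$. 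Integrating the bound $-W' \gtrsim |x_3|^3$ then produces $|W(t)| \gtrsim (t_M-t)^{-2}$, in flat contradiction with the constraint bound $|W| = O(|x_3|) = O((t_M-t)^{-1})$. The crux of the argument—and the reason it succeeds—is precisely the gap between $\LL$ and $1$ appearing in the leading coefficient of $W'$; without it the dynamics and the constraint would be simultaneously consistent and no contradiction would arise.
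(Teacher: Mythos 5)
Your proof is correct, and for the case $\lmt x_3=+\infty$ it is essentially the paper's own argument: $x_2$ is bounded because $x_2'<0$ once $x_3>1/2$, and a variation-of-constants/Gr\"onwall bound on the linear equation \refb{x3} for $x_3$ shows blow-up on a finite interval is impossible unless $x_1\to-\infty$ (your extra subcase $x_1\to+\infty$ is excluded in the paper once and for all by Lemma 5.7, which gives $x_1<1$, resp.\ $x_1<0$). For $\lmt x_3=-\infty$, however, you take a genuinely different route. The paper never touches the constraint \refb{x0} there: assuming $\lmt x_1>-\infty$, it deduces $\lmt\sigma>0$ via Lemmas 5.3--5.4 and integrates the identity \refb{bound}, $\frac{d}{dt}\left(e^{-t/2}\sigma\left(x_1-\frac{k^2x_3}{2}\right)\right)=\frac{\LL}{2}e^{-t/2}\sigma'$, whose right-hand side is bounded on the finite interval when $x_1$ is bounded; since $e^{-t/2}\sigma$ stays away from zero, $x_1-\frac{k^2}{2}x_3$ is bounded above, which is immediately incompatible with $x_3\to-\infty$ and $x_1$ bounded below. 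This disposes of both signs of $\delta$ in a few lines. You instead split on $\delta$ and lean on \refb{x0}: your three easy subcases all check out (e.g.\ $x_0^2\to-\infty$ when $\delta=1$ and $x_1$ is bounded), and your core subcase ($\delta=-1$, $x_1$ bounded) is also sound --- I verified that $W=x_2-\frac{k^2}{4}x_3^2$ satisfies $W'=\frac{k^2(1-\LL)}{4}x_3^3+O(x_3^2)$ with $\frac{k^2(1-\LL)}{4}=\frac{k^4}{8}>0$, that the Riccati comparison gives $|x_3|\sim 2/(t_M-t)$, and that the resulting growth $|W|\gtrsim(t_M-t)^{-2}$ genuinely contradicts the constraint bound $W=O(|x_3|)=O((t_M-t)^{-1})$. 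What your route buys: it is self-contained in the phase-space variables $(x_0,x_1,x_2,x_3)$ and needs neither $\sigma$, nor Lemmas 5.3--5.4, nor the integrating-factor identity. What it costs: a four-way case split and delicate asymptotic bookkeeping where the paper needs a single integration of \refb{bound}; the shortcut you missed is that the combination $x_1-\frac{k^2}{2}x_3$ satisfies a linear equation whose integrating factor is exactly $e^{-t/2}\sigma$, so its boundedness comes for free.
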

\begin{proof}
Suppose that $\lmt x_3 = +\infty$. Then $\lmt x_2<\infty$, by \refb{x2}, and it follows from \refb{x3} that $x_3'<(\frac12-x_1)x_3+\frac{x_1}{2}+b$ for all $t\in(-\infty,t_M)$ and some constant $b$.  This yields the inequality
\begin{equation}
x_3(t) < \frac{\mu(t)}{\mu(t_0)}x_3(t_0)+\int_{t_0}^t \frac{\mu(t')}{\mu(t)}(\frac{x_1}{2}+b)dt',\end{equation} where 
\begin{equation}
\mu(t) = \exp\left\{\int_{t_0}^t(x_1-\frac12)\right\}.\end{equation}
Since $t_M<\infty$ we must have $\lmt x_1=-\infty$ in order that $x_3\to+\infty$. \\
Now suppose $\lmt x_3=-\infty$ and $\lmt x_1>-\infty$. Then we have $\lmt \sigma>0$ and it can be easily shown via \refb{bound} that $x_1-k^2x_3/2$ is bounded above. 
It follows that if $\lmt x_3=-\infty$ then $\lmt x_1=-\infty$.
\end{proof}

%% LEMMA 5.7
\begin{lemma} If $V_0<0$ (respectively $V_0>0$), then $x_1(t)<1$ (respectively $x_1(t)<0$) for all $t\in(-\infty,t_M)$.
\end{lemma}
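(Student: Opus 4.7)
The plan is to treat the two signs of $\delta = \text{sgn}(V_0)$ separately, in each case exploiting the strict positivity of $x_2$ (which follows from \refb{x2} as a linear ODE for $x_2$, together with $x_2 \to 0^+$ as $t\to -\infty$ from Proposition 3.2) and the initial sign information on $x_1$ supplied by Lemma 3.2 and Proposition 3.2. The general strategy is a first-crossing argument: if $x_1$ ever attains the forbidden value, look at the earliest such time $t^*$ and derive a contradiction by examining either $x_1'(t^*)$ from the ODE \refb{x1} or by evaluating the constraint \refb{x0}.

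For the case $V_0 < 0$ ($\delta = -1$), I first note that Proposition 3.2 gives $x_1 \to 0$ as $t\to -\infty$, so $x_1(t) < 1$ for all sufficiently negative $t$. Suppose for contradiction that there is a first time $t^* \in (-\infty, t_M)$ with $x_1(t^*) = 1$, so $x_1(t) < 1$ for $t < t^*$ and hence $x_1'(t^*) \ge 0$. Substituting $x_1 = 1$, $\delta = -1$ into \refb{x1} gives
\beq
x_1'(t^*) = 1 - x_2(t^*) - 1 = -x_2(t^*) < 0,
\eeq
since $x_2 > 0$ throughout $(-\infty, t_M)$. This contradiction forces $x_1 < 1$ for all $t$.

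For $V_0 > 0$ ($\delta = +1$), the same ODE strategy fails: evaluating \refb{x1} at $x_1 = 0$ gives $x_1' = x_2 > 0$, which is compatible with a crossing from below. Instead I would invoke the constraint \refb{x0}. Setting $x_1 = 0$ in \refb{x0} yields
\beq
-x_0^2 - \frac{k^2}{2}x_3^2 - 2x_2 = 0,
\eeq
which is impossible since $x_0^2, x_3^2 \ge 0$ and $x_2 > 0$. Therefore $x_1$ can never vanish on $(-\infty, t_M)$. Since Lemma 3.2 gives $x_1(t) < 0$ for $t$ near $-\infty$, continuity then forces $x_1(t) < 0$ for all $t \in (-\infty, t_M)$.

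The only mildly delicate step is verifying that $x_2$ remains strictly positive throughout the maximal interval; this is immediate from the explicit form $x_2 = |V_0|e^{\lambda l}$, or alternatively from \refb{x2} which makes $x_2 = 0$ an invariant manifold incompatible with the asymptotic value $x_2 \sim c_3 e^{|\lambda|t/2} > 0$ at $t = -\infty$. Once this and the initial sign of $x_1$ are in hand, both barriers ($x_1 = 1$ for $\delta = -1$, $x_1 = 0$ for $\delta = +1$) are impenetrable, and the lemma follows.
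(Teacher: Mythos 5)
Your proof is correct and follows essentially the same route as the paper: for $\delta=-1$ a barrier argument at $x_1=1$ using \refb{x1} and $x_2>0$, and for $\delta=+1$ the initial negativity of $x_1$ from Lemma 3.2 combined with evaluating the constraint \refb{x0} at a putative first zero of $x_1$. The only (immaterial) difference is that at that last step the paper derives the contradiction from $x_0(t_0)>0$ (obtained via $x_0'=x_0(\tfrac12-x_1)>0$ on the initial interval), whereas you use the strict positivity of $x_2=|V_0|e^{\lambda l}$; both close the argument equally well.
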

\begin{proof}
If $\delta=-1$ then it follows directly from \refb{x1} that $x_1$ cannot cross 1 from below. If $\delta=1$ then by Lemma 3.2 we have $x_1<0$ on an initial interval, say $(-\infty,t_0)$. Now suppose that $x_1(t_0)=0$. It is clear that $x_0'>0$ on $(-\infty,t_0)$ and so $x_0(t_0)>0$. At $t_0$, \refb{x0} with $\delta=1$ reduces to 
\beq
-x_0^2(t_0)-\frac{k^2x_3^2(t_0)}{2}-2x_2(t_0)=0,
\eeq
which clearly contradicts $x_0(t_0)>0$. Hence, no such $t_0$ exists. 
\end{proof}

%% LEMMA 5.8
\begin{lemma} If $t_M$ is finite then 
\beq
\label{lim}\lmt x_1=-\infty, \qquad \lmt \sigma=0,
\eeq
and either 
\beq
\lmt x_3 =\pm \infty, \quad \mbox{or} \quad \lmt x_3 = \frac{1}{2}.
\eeq
\end{lemma}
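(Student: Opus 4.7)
The plan is to assume $t_M<\infty$ and use Theorem 5.1 together with Lemmas 5.2, 5.6 and 5.7 to force $\lmt x_1=-\infty$, after which Lemmas 5.4 and 5.5 deliver the remaining two conclusions essentially for free. So the work really lies in establishing the first statement, $\lmt x_1=-\infty$.

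First I would record that by Theorem 5.1 at least one of $x_1,x_2,x_3$ must be unbounded as $t\to t_M^-$, and that by Lemma 5.2 each $x_i$ has a (finite or infinite) limit there. Lemma 5.7 gives $x_1$ bounded above on the whole interval of existence (by $1$ if $\delta=-1$ and by $0$ if $\delta=+1$), so if $x_1$ diverges the only possibility is $\lmt x_1=-\infty$, which is what I want. The task therefore reduces to ruling out the remaining scenario, namely that $x_1$ has a finite limit while one of $x_2,x_3$ blows up.

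So suppose $\lmt x_1$ is finite. Since $x_2=|V_0|e^{\lambda l}>0$ by definition, if $x_2$ diverges then $\lmt x_2=+\infty$; and if $x_3$ diverges then by monotonicity $\lmt x_3=\pm\infty$. The case $\lmt x_3=\pm\infty$ is handled directly by Lemma 5.6, which forces $\lmt x_1=-\infty$, contradicting the assumption. The only remaining subcase is $x_3$ bounded while $\lmt x_2=+\infty$. Here I use \refb{x2}: with $|x_3|\le M$ on a final interval $[t_0,t_M)$, one has
\beq
\left|\frac{x_2'}{x_2}\right|=\LL\left|\tfrac{1}{2}-x_3\right|\le \LL(\tfrac{1}{2}+M),
\eeq
which, since $t_M-t_0<\infty$, integrates to the bound $x_2(t)\le x_2(t_0)\exp(\LL(\tfrac12+M)(t_M-t_0))<\infty$, contradicting $\lmt x_2=+\infty$. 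Hence $x_1$ must diverge, and by Lemma 5.7 we conclude $\lmt x_1=-\infty$.

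With this in hand, Lemma 5.4 gives $\lmt\sigma=0$ immediately, and Lemma 5.5 then gives $\lmt x_3=\pm\infty$ or $\lmt x_3=1/2$, completing the proof. The main obstacle I anticipate is the case analysis rather than any single estimate: the subtle point is ensuring that one really has exhausted the possibilities after invoking Lemmas 5.2 and 5.7, in particular that the positivity of $x_2$ and the monotonicity of each $x_i$ together collapse the apparent three-way blow-up possibility into the single scenario $\lmt x_1=-\infty$ treated by Lemmas 5.4--5.6.
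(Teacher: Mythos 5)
Your proof is correct and follows essentially the same route as the paper's: Theorem 5.1 plus Lemma 5.2 force some $x_i$ to diverge, the bound from \refb{x2} (on a finite interval with $x_3$ bounded) rules out $x_2$ blowing up alone, Lemmas 5.6 and 5.7 pin the divergence down to $\lmt x_1=-\infty$, and Lemmas 5.4 and 5.5 finish. The only difference is presentational — you argue by contradiction where the paper does a direct case split, and you write out the Gronwall-type bound on $x_2$ that the paper merely asserts.
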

\begin{proof}
Using Theorem 5.1 and Lemma 5.2 we must have $\lmt |x_i|=+\infty$ for some $i\in\{1,2,3\}$. By (\ref{x2}), if $x_3$ is bounded and $t_M$ is finite, then $x_2$ is bounded. By Lemma 5.7, $x_1<1$ for all $t\in(-\infty,t_M)$, and so $x_1\to-\infty$ in this case. Alternatively, we must have $\lim_{t\to t_M^-}x_3=\pm\infty$ in which case $x_1\to-\infty$ by Lemma 5.6. Lemmas 5.4 and 5.5 complete the proof.
\end{proof}

%% LEMMA 5.9
\begin{lemma} For $V_0<0$, suppose there exists $t_0\in(-\infty,t_M)$ such that $x_1(t_0)<0$. Then $t_M$ is finite.
\end{lemma}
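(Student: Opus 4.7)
The plan is to show that, once $x_1$ becomes negative, a Riccati-type comparison forces $x_1\to-\infty$ in finite time; by Theorem 5.1, $t_M$ must then be finite. The computation rests on the form of \refb{x1} with $\delta=-1$:
\beq
\nonumber x_1'=x_1-x_2-x_1^2.
\eeq

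First I would show that $x_1<0$ throughout $[t_0,t_M)$. If there were a first time $t_1\in(t_0,t_M)$ at which $x_1$ returned to $0$, continuity would require $x_1'(t_1)\ge 0$; but the displayed equation evaluated at $t_1$ gives $x_1'(t_1)=-x_2(t_1)<0$, since $x_2=|V_0|e^{\lambda l}>0$ everywhere, a contradiction.

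Next I would introduce $f(t)=-1/x_1(t)$, which by the previous step is well defined and strictly positive on $[t_0,t_M)$. A direct differentiation and substitution give
\beq
\nonumber f'=\frac{x_1'}{x_1^2}=\frac{1}{x_1}-\frac{x_2}{x_1^2}-1=-1-f-x_2f^2<-1,
\eeq
since $f>0$ and $x_2>0$. Integrating this strict inequality yields $f(t)<f(t_0)-(t-t_0)$ on $[t_0,t_M)$. Were $t_M>t_0+f(t_0)$, then at $t=t_0+f(t_0)\in(t_0,t_M)$ one would have $f(t)<0$, contradicting $f>0$. Hence
\beq
\nonumber t_M\le t_0+f(t_0)=t_0+\frac{1}{|x_1(t_0)|}<\infty,
\eeq
and Theorem 5.1 identifies the blow-up $x_1\to-\infty$ as the mechanism terminating the solution.

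I do not anticipate a serious obstacle here: the key point is that the quadratic term in \refb{x1} and the sign of $x_2$ both drive $x_1$ downward, so the Riccati comparison in the variable $f=-1/x_1$ closes immediately once the sign of $x_1$ has been pinned down on $[t_0,t_M)$.
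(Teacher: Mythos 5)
Your proof is correct and takes essentially the same route as the paper: the paper likewise notes that for $\delta=-1$ and $x_1<0$ one has $x_1'<-x_1^2$, so negativity of $x_1$ persists, and then integrates this Riccati inequality to conclude blow-up of $x_1$ (hence finiteness of $t_M$) in finite time. Your substitution $f=-1/x_1$ with $f'<-1$ is just the explicit form of that integration, so no gap and no genuinely different method.
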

\begin{proof} If $x_1(t_0)<0$ and $\delta=-1$ then \refb{x1} yields $x_1'(t_0)<-x_1(t_0)^2$ and so $x_1<0$ persists. That is, $x_1<0$ and $x_1'<-x_1^2$ for all $t\in(t_0,t_M)$. Integrating shows that $x_1$ diverges to $-\infty$ in finite $t$. 
\end{proof}

%% LEMMA 5.10
\begin{lemma}
For $V_0<0$, suppose there exists $t_0\in(-\infty,t_M)$ such that $x_2(t_0)=k^2/8$ and $x_2<k^2/8$ for all $t\in(-\infty,t_0)$. Then $x_3(t_0)<1/2$ and $x_3<1/2$, $x_2>k^2/8$ hold for all $t\in(t_0,t_M)$. 
\end{lemma}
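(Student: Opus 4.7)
The plan is to mirror the reasoning of Lemma 5.1 on the ``other side'' of the invariant manifold $\{x_2=k^2/8,\,x_3=1/2\}$, using the sign structure of \refb{x2} together with \refb{x3} evaluated at $x_3=1/2$. Throughout, $\delta=-1$.

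First I would establish $x_3(t_0)<1/2$. Since $x_2<k^2/8$ on $(-\infty,t_0)$ and $x_2(t_0)=k^2/8$, continuity forces $x_2'(t_0)\ge 0$. Substituting $x_2(t_0)=k^2/8$ into \refb{x2} gives $x_2'(t_0)=\LL(1/2-x_3(t_0))k^2/8$, so $x_3(t_0)\le 1/2$. To rule out equality, I invoke the same observation used at the start of the proof of Lemma 5.1: the set $\{x_2=k^2/8,\,x_3=1/2\}$ is an invariant manifold of \refb{x1}-\refb{x3}. By uniqueness of solutions, any trajectory meeting this manifold must lie on it for all $t\in(-\infty,t_M)$, which contradicts \refb{origin}. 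Hence $x_3(t_0)<1/2$.

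Next I would run a continuation/bootstrap argument for the persistence of the two inequalities on $(t_0,t_M)$. Since $x_3(t_0)<1/2$, by continuity there is some $\epsilon>0$ with $x_3<1/2$ on $(t_0,t_0+\epsilon)$; on that interval \refb{x2} gives $x_2'>0$, so $x_2>k^2/8$. Define
\beq
\nonumber T=\sup\{t\in[t_0,t_M):\ x_3(s)<1/2\text{ for all }s\in(t_0,t]\}.
\eeq
I would suppose for contradiction that $T<t_M$. Then $x_3(T)=1/2$ by continuity, and since $x_2'>0$ on $(t_0,T)$, we have $x_2(T)>k^2/8$. Evaluating \refb{x3} at $T$ with $x_3=1/2$ yields, as in \refb{x3=1/2},
\beq
\nonumber x_3'(T)=\frac{1}{4}-\frac{2x_2(T)}{k^2}<0.
\eeq
But this is incompatible with $x_3<1/2$ on $(t_0,T)$ and $x_3(T)=1/2$, since $x_3'(T)<0$ would force $x_3>1/2$ just to the left of $T$. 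This contradiction gives $T=t_M$, so $x_3<1/2$ on $(t_0,t_M)$, and then \refb{x2} implies $x_2'>0$ on the whole interval, yielding $x_2>k^2/8$ on $(t_0,t_M)$.

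The argument is essentially a sign-tracking exercise, and none of the steps should require heavy estimates; the only subtle point is the exclusion of the possibility $x_3(t_0)=1/2$, where one must appeal to the invariant-manifold observation together with the boundary condition \refb{origin} rather than to a local ODE sign alone. Everything else is a standard continuation argument driven by the sign of $x_3'$ at the threshold $x_3=1/2$.
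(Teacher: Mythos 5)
Your proof is correct and takes essentially the same route as the paper's: both hinge on the same two sign facts, that $x_2'>0$ while $x_3<1/2$ and that $x_3$ cannot cross $1/2$ from below while $x_2>k^2/8$ (via \refb{x3=1/2}), followed by a continuation argument. The only difference is in how $x_3(t_0)<1/2$ is obtained: you derive it directly from $x_2'(t_0)\ge0$ together with the invariant-manifold observation, whereas the paper gets it by citing Lemma 5.1, whose proof contains exactly that invariant-manifold argument.
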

\begin{proof}
Using Lemma 5.1, we must have $x_3<1/2$ for all $t\in(-\infty,t_0]$. We also have $x_2'>0$ while $x_3<1/2$ and since $x_3$ cannot cross $1/2$ from below while $x_2> k^2/8$ then we must have $x_2>k^2/8$ and $x_3<1/2$ for all $t\in(t_0,t_M)$. 
\end{proof}

%% LEMMA 5.11
\begin{lemma}
For $V_0<0$, suppose there exists $t_0\in(-\infty,t_M)$ such that $x_1(t_0)=1/2$. Then $t_M$ is finite.
\end{lemma}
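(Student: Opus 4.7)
The plan is to assume $t_M=+\infty$ and derive a contradiction using the constraint (\ref{x0}) combined with the monotonicity from Lemma 5.2, the crossing behaviour in Lemma 5.1, and Lemma 5.9.

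First I would evaluate (\ref{x0}) at $t_0$ with $x_1=1/2$ and $\delta=-1$; after rearrangement this reads
\beq
x_0^2 = -\frac{1}{4} - \frac{k^2}{4} + \frac{k^2}{2}x_3(t_0)(1-x_3(t_0)) + 2\, x_2(t_0).
\eeq
Since $x_0>0$ throughout and $x_3(1-x_3)\le 1/4$, this gives the strict inequality $x_2(t_0) > \frac{1}{8}+\frac{k^2}{16}$, which exceeds $k^2/8$ since $k^2<2$. Next I would argue $x_3(t_0)<1/2$: because $x_3\rightarrow 0$ as $t\rightarrow-\infty$, any situation with $x_3(t_0)\ge 1/2$ would require a first crossing of $x_3$ through $1/2$ from below at some $t_1\le t_0$, at which Lemma 5.1 forces $x_2(t_1)<k^2/8$ and $x_2<k^2/8$ for all $t\in(t_1,t_M)$, contradicting $x_2(t_0)>k^2/8$.

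An almost identical argument shows that $x_3<1/2$ throughout $[t_0,t_M)$: if $x_3$ first crossed $1/2$ from below at some $t_1>t_0$, then on $[t_0,t_1)$ we would have $x_3<1/2$, hence $x_2$ strictly increasing by (\ref{x2}), giving $x_2(t_1)>x_2(t_0)>k^2/8$ and contradicting Lemma 5.1. Therefore $x_3<1/2$ and $x_2>k^2/8$ on all of $[t_0,t_M)$, with $x_2$ strictly monotone increasing.

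Supposing now for contradiction that $t_M=+\infty$, Lemma 5.2 provides limits $L_i\in[-\infty,+\infty]$ as $t\rightarrow+\infty$. A short case analysis rules out every possibility: if $L_1=-\infty$, Lemma 5.9 forces $t_M<\infty$; if $L_2=+\infty$, then $x_1'=\frac{1}{4}-x_2-(x_1-\frac{1}{2})^2\rightarrow-\infty$ drives $L_1=-\infty$; if $L_3=-\infty$, integrating (\ref{x2}) yields $L_2=+\infty$; and if all $L_i$ are finite, $\bsy{x}$ must converge to a fixed point of (\ref{x1})-(\ref{x3}) by continuity of the vector field, but every fixed point has $x_2\in\{0,k^2/8\}$ whereas $L_2\ge x_2(t_0)>k^2/8$. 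Each case contradicts $t_M=+\infty$, so $t_M<\infty$. I expect the main technical obstacle to be extracting and propagating the strict inequality $x_2(t_0)>k^2/8$ via Lemma 5.1; once this is secured, the case analysis at $t_M$ is essentially automatic.
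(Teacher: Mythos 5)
Your first half coincides with the paper's: you evaluate \refb{x0} at $x_1=1/2$, $\delta=-1$ to obtain $x_2(t_0)>\frac18+\frac{k^2}{16}>\frac{k^2}{8}$ (this is exactly the paper's inequality \refb{x2(t0)}), and you then propagate $x_3<1/2$, $x_2>k^2/8$ forward to $t_M$. The paper does this propagation by locating the first crossing $x_2(t_*)=k^2/8$ at some $t_*<t_0$ and invoking Lemma 5.10, while you work directly from Lemma 5.1; these are the same mechanism, since Lemma 5.10 is itself a corollary of Lemma 5.1. Where you genuinely diverge is the endgame. The paper stays quantitative: from $x_2>\frac18+\frac{k^2}{16}$ it extracts a uniform bound $x_3-\frac12<m<0$ on $(t_0,t_M)$, hence $x_2'>\lambda m\,x_2$ with $\lambda m>0$, so that $t_M=+\infty$ would force $x_2\to+\infty$ exponentially, and then \refb{x1} would drive $x_1$ negative in finite time, contradicting Lemma 5.9. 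You instead argue softly: assuming $t_M=+\infty$, Lemma 5.2 supplies extended-real limits $L_i$, and you dispose of the possibilities by a case analysis terminating in the fixed-point classification. This trade is legitimate: you avoid the paper's delicate estimate on $x_3'$ entirely, at the price of needing the full equilibrium list and a genuinely exhaustive enumeration of limiting behaviours. (One remark on the fixed-point step: the equilibrium set of \refb{x1}--\refb{x3} also contains the origin, which Proposition 4.1 does not list; since the origin has $x_2=0$, your claim that every equilibrium has $x_2\in\{0,k^2/8\}$ survives, but you should state it with the origin included.)

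The one real gap is that your enumeration of cases is not exhaustive as written: you treat $L_1=-\infty$, $L_2=+\infty$, $L_3=-\infty$ and the all-finite case, but not $L_1=+\infty$, $L_2=-\infty$, $L_3=+\infty$. Two of these are immediate: $x_2>0$ by definition rules out $L_2=-\infty$, and your own bound $x_3<1/2$ on $[t_0,t_M)$ rules out $L_3=+\infty$; you should say so. But $L_1=+\infty$ is excluded by nothing you write. It is easily patched: Lemma 5.7 gives $x_1<1$ for all $t$ when $V_0<0$ (and is cited for exactly this purpose in the paper's own proof), or one can argue directly from \refb{x1} that for $\delta=-1$ and $x_1>1$ one has $x_1'\le x_1-x_1^2<0$, so $x_1$ cannot diverge to $+\infty$. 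With those lines added, your case analysis is complete and the proof goes through.
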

\begin{proof}
At $x_1=1/2$, equation \refb{x0} with $\delta=-1$ simplifies to 
\beq
\frac{1}{4}+x_0^2+\frac{k^2}{4}+\frac{k^2}{2}\left(x_3^2-x_3\right)=2x_2. 
\eeq
Using the fact the $x_3^2-x_3\ge-1/4$ we then have 
\beq
\label{x2(t0)} x_2(t_0)>\frac{1}{8}+\frac{k^2}{16}+\frac{x_0^2}{2}>\frac{k^2}{8}. 
\eeq
There must then exist $t_*\in(-\infty,t_0)$ such that $x_2(t_*)=k^2/8$ and $x_2(t)<k^2/8$ for all $t<t_*$. Using Lemma 5.10 we have $x_3<1/2$, and thus $x_2'>0$, for all $t\in(t_*,t_M)$. Using \refb{x2(t0)} we have
$x_2>1/8+k^2/16$, from which it follows that
\beq
\label{x3'}x_3'<\frac{\lambda}{4k^2}-\left(\frac{1}{2}-x_1\right)\left(\frac{1}{2}-x_3\right)<\frac{\lambda}{4k^2}+\frac{1}{2}\left(\frac{1}{2}-x_3\right)
\eeq
for all $t\in(t_0,t_M)$, where we have used $x_1<1$ (see Lemma 5.7). This shows that $x_3'<0$ if $x_3>1/2+\lambda/2k^2$. It follows that $x_3-1/2<m=\max\{x_3(t_0)-1/2,\lambda/2k^2\}<0$, which gives $x_2'>\lambda m x_2$ , for all $t\in(t_0,t_M)$. If $t_M=+\infty$ then $\lmt x_2=+\infty$ which would cause $x_1$ to become negative in finite $t$, contradicting Lemma 5.9.
\end{proof}

%%% PROP 5.1
\begin{proposition}
If $V_0<0$, then $t_M$ is finite and
\beq
\lmt x_1=-\infty, \qquad \lmt \sigma=0,
\eeq
and either 
\beq
\lmt x_3 =\pm \infty, \quad \mbox{or} \quad \lmt x_3 = \frac{1}{2}.
\eeq\end{proposition}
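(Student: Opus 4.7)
The plan is as follows. The limit assertions in Proposition 5.2 are direct consequences of Lemma 5.8 once $t_M<\infty$ is established, so the entire task is to prove finiteness of $t_M$. I argue by contradiction and suppose $t_M=+\infty$.

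The first step is to confine $x_1$ to the open interval $(0,1/2)$ for every $t$. By Lemma 3.2, $x_1(t)>0$ near $t=-\infty$. At any candidate zero $t_0$ of $x_1$, \refb{x1} with $\delta=-1$ gives $x_1'(t_0)=-x_2(t_0)<0$, so $x_1$ strictly decreases through zero and becomes negative on some interval to the right; Lemma 5.9 would then force $t_M<\infty$, contradicting the standing assumption. Lemma 5.11 similarly excludes the value $x_1=1/2$. Continuity and the initial positivity then yield $x_1(t)\in(0,1/2)$ for every $t\in\mathbb{R}$. By Lemma 5.2 each component of $\boldsymbol{x}$ is eventually monotone, so $L_i:=\lmp x_i$ exists in $\mathbb{R}\cup\{\pm\infty\}$ for $i=1,2,3$, with $L_1\in[0,1/2]$.

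The contradiction is then extracted by a case analysis on $(L_2,L_3)$. If all three limits are finite, $\boldsymbol{x}$ must converge to an equilibrium of \refb{x1}--\refb{x3}; the constraint $L_1\in[0,1/2]$ leaves only the origin or $P_2$. The origin is a source of \refb{x1}--\refb{x3} (the matrix $F$ in the proof of Lemma 3.1 has only positive eigenvalues), so no non-trivial orbit can tend to it as $t\to+\infty$, while $P_2$ is ruled out by Proposition 4.6. If $L_3=+\infty$, then $(\log x_2)'=\LL(1/2-x_3)\to-\infty$ integrates to $x_2\to 0$, whereupon \refb{x0} with $\delta=-1$ gives $x_0^2\sim-k^2x_3^2/2\to-\infty$, contradicting $x_0^2\ge 0$. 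If $L_3$ is finite while $L_2=+\infty$, the $-2x_2/k^2$ term dominates in \refb{x3}, driving $x_3'\to-\infty$ and hence $x_3\to-\infty$, contradicting the finiteness of $L_3$. The main obstacle is the remaining case $L_3=-\infty$: here I would use \refb{x0} quantitatively to extract $x_2\gtrsim k^2 x_3^2/4$ for $|x_3|$ large (which in particular forces $L_2=+\infty$ automatically), then feed this bound into \refb{x3} to deduce $x_3'\lesssim-x_3^2/3$ for $|x_3|$ large, which produces a finite-time blow-up of $x_3$ and again contradicts $t_M=+\infty$. With every possibility eliminated, $t_M$ must be finite, and Lemma 5.8 then delivers the stated limits.
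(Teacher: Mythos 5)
Your proof is correct, and its skeleton is the paper's: assume $t_M=+\infty$, use the eventual monotonicity from Lemma 5.2 to extract limits $L_i\in\mathbb{R}\cup\{\pm\infty\}$, eliminate every possible limiting behaviour, and finish with Lemma 5.8. The differences lie in how the cases are dispatched. Where you treat $L_3=-\infty$ self-containedly (the constraint \refb{x0} with $\delta=-1$ and $x_1\in(0,1/2)$ gives $2x_2\ge k^2x_3^2/2-1/4$, which fed into \refb{x3} yields a Riccati-type inequality and finite-time blow-up of $x_3$), the paper argues in one line that $x_3\to-\infty$ forces $x_2\to+\infty$ via \refb{x2}, and that $x_2\to+\infty$ drives $x_1$ negative in finite time via \refb{x1}, invoking Lemma 5.9; both work, yours being heavier but fully explicit. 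Your up-front confinement $x_1\in(0,1/2)$ is a clean repackaging of the paper's appeals to Lemmas 5.7, 5.9 and 5.11 (which the paper uses to exclude $P_1$ and $P_3$ as possible limits, since their $x_1$ components exceed $1/2$). The genuinely valuable difference is your finite-limits case: the origin is an equilibrium of \refb{x1}--\refb{x3} (it is missing from the list in Proposition 4.1), and the paper's proof, which disposes only of $P_1,P_2,P_3$ and of ``limit cycles and other behaviours'', never addresses the possibility $\lmp\boldsymbol{x}=(0,0,0)$; your observation that the origin is a source (the matrix $F$ in Lemma 3.1 has only positive eigenvalues, so no nontrivial orbit can tend to it as $t\to+\infty$) closes that case, patching a small gap in the published argument.
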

\begin{proof}
The preceding lemma rules out the possibility that $\boldsymbol{x}$ limits to $P_1$ or $P_3$ as $t\rightarrow\infty$, 
since the $x_1$ components of $P_1$ and $P_3$ are greater than one half. Proposition 4.6 rules out the possibility that $\boldsymbol{x}$ limits to $P_2$. Taking note of Lemma 5.2 which rules out limit cycles and other behaviours, we see that we must either have $\lmp ||\boldsymbol{x}||=+\infty$ or $t_M$ finite with $\lmt ||\boldsymbol{x}||=\infty$. We may rule out the former case as follows. We can't have $\lmp x_1=-\infty$, because in that case there would exist $t_0<+\infty$ such that $x_1(t_0)<0$ and thus $t_M$ would be finite by Lemma 5.9. Nor can we have $\lmp x_2=+\infty$ since this would cause $x_1$ to become negative in finite $t$, via \refb{x1}, so we would have $t_M$ finite here also. This also rules out $\lmp x_3=-\infty$ since this would give $\lmp x_2=+\infty$, by \refb{x2}. Given that $x_2>0$ and $x_1<1$ by Lemma 5.7, this leaves the possibility that $\lmp x_3=+\infty$. However, it is easy to see that if $\lmp x_3=+\infty$ and $\lmp|x_1|<\infty$, then \refb{x0} is not satisfied, since in that case we have $\lmp x_2<\infty$ and the left hand side has limit $-\infty$. We must, therefore, have $t_M$ finite. Then Lemma 5.8 applies to give the limits stated.
\end{proof}

%% PROP 5.2
\begin{proposition}
If $V_0>0$, then $t_M$ is finite and
\beq
\lmt x_1=-\infty, \qquad \lmt \sigma=0,
\eeq
and either 
\beq
\lmt x_3 =\pm \infty, \quad \mbox{or} \quad \lmt x_3 = \frac{1}{2}.
\eeq
\end{proposition}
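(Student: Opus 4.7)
My plan is to mirror the proof of Proposition 5.1, adapted to the sign restriction $x_1(t) < 0$ provided by Lemma 5.7 when $V_0 > 0$. First I would rule out convergence to any equilibrium as $t \to t_M^-$: the point $P_1 = (1, 0, 1)$ is excluded because Lemma 5.7 gives $x_1 < 0$ throughout $(-\infty, t_M)$, while Proposition 4.7 rules out $P_2$ and $P_3$ (they have $x_2 = -k^2/8 < 0$ for $\delta = 1$). Lemma 5.2 establishes monotonicity of each $x_i$ near $t_M$, so if $\boldsymbol{x}$ were bounded and $t_M = +\infty$, each component would converge to a finite limit and that limit point would have to be an equilibrium --- contradicting the above. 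Thus either $t_M < \infty$ (the desired conclusion) or $t_M = +\infty$ with $\lmp ||\boldsymbol{x}|| = +\infty$, and the task reduces to excluding the second alternative.

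The hard part will be to extract a useful differential inequality for $x_1$ even when $x_2$ is allowed to grow without bound. My key step is to rewrite the constraint \refb{x0} with $\delta = 1$ by completing the square in $x_3$, obtaining
\[
2x_2 = \left(1 + \tfrac{k^2}{2}\right) x_1(x_1 - 1) - \tfrac{k^2}{2}(x_3 - x_1)^2 - x_0^2 \le \left(1 + \tfrac{k^2}{2}\right) x_1(x_1 - 1).
\]
Inserting this upper bound on $x_2$ into \refb{x1} and simplifying (using $(1 + k^2/2)/2 - 1 = -\LL/2$) produces the Riccati-type bound
\[
x_1'(t) \le \tfrac{\LL}{2}\, x_1(1 - x_1) \le -\tfrac{\LL}{2}\, x_1^2,
\]
where the second inequality uses $x_1 < 0$. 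This bound is the engine that drives the remaining case analysis.

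With this bound in hand, I would assume for contradiction that $t_M = +\infty$ with $\lmp ||\boldsymbol{x}|| = +\infty$. Since $x_1 < 0$ throughout, at least one of $\lmp x_1 = -\infty$, $\lmp x_2 = +\infty$, or $\lmp x_3 = \pm \infty$ must occur. If $\lmp x_1 = -\infty$, picking $t_0$ with $|x_1(t_0)|$ large, the Riccati inequality drives $x_1 \to -\infty$ in time at most $2/(\LL |x_1(t_0)|)$, contradicting $t_M = +\infty$. If $\lmp x_2 = +\infty$ while $x_1$ stays bounded, \refb{x1} gives $x_1' \to +\infty$, so $x_1$ eventually becomes positive, contradicting Lemma 5.7; hence $x_1 \to -\infty$, already excluded. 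If $\lmp x_3 = -\infty$, then $x_2'/x_2 = \LL(\tfrac{1}{2} - x_3) \to +\infty$ by \refb{x2}, forcing $\lmp x_2 = +\infty$, again excluded. Finally, if $\lmp x_3 = +\infty$ with $x_1$ bounded, the $-(k^2/2)(x_3 - x_1)^2$ term in the constraint drives the right-hand side to $-\infty$, contradicting $x_2 \ge 0$. All four possibilities yield contradictions, so $t_M$ must be finite, and the stated limits of $x_1$, $\sigma$, and $x_3$ then follow directly from Lemma 5.8.
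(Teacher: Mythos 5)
Your proposal is correct, and its core is the same algebraic step the paper uses: rewrite the constraint \refb{x0} with $\delta=1$ by completing the square in $x_3$ and discard the manifestly non-positive terms. The difference lies in which variable carries the resulting Riccati structure. The paper deduces $\tfrac12 - x_1 > x_0/\kappa$ and inserts this into $x_0'=x_0(\tfrac12-x_1)$ to obtain $x_0' > x_0^2/\kappa$, so that $x_0=e^{t/2}/\sigma$ blows up in finite time; this yields $t_M<\infty$ and $\lmt\sigma=0$ in a single stroke, with no case analysis. You instead use the constraint to bound $x_2\le \tfrac{\kappa^2}{2}x_1(x_1-1)$ and insert this into \refb{x1}, obtaining $x_1'\le \tfrac{\LL}{2}x_1(1-x_1)\le-\tfrac{\LL}{2}x_1^2$; this is valid, and your verification that $(1+k^2/2)/2-1=-\LL/2$ is the right bookkeeping. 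However, the surrounding scaffolding --- ruling out the equilibria $P_1,P_2,P_3$ and then running a four-way case analysis under the hypothesis $t_M=+\infty$, $\lmp ||\boldsymbol{x}||=+\infty$ --- is redundant: since Lemma 5.7 gives $x_1<0$ on all of $(-\infty,t_M)$, your Riccati bound holds from any $t_0$, and the comparison argument (set $w=-x_1>0$, so $w'\ge \tfrac{\LL}{2}w^2$ and $(1/w)'\le -\LL/2$) forces $x_1\to-\infty$ by time $t_0+2/(\LL|x_1(t_0)|)$, giving $t_M<\infty$ unconditionally. In other words, the single inequality you call the ``engine'' already proves the entire finiteness claim, exactly as the paper's $x_0$-inequality does; the equilibrium/monotonicity apparatus is genuinely needed in Proposition 5.1 for $V_0<0$ (where no such global Riccati bound is available), but can be deleted here. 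Your final appeal to Lemma 5.8 for the limits of $x_1$, $\sigma$ and $x_3$ matches the paper exactly.
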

\begin{proof}
It is easily checked that \refb{x0} with $\delta=1$ may be written as
\beq
\left(1+\frac{k^2}{2}\right)\left(x_1-x_1^2\right)= -2x_2-x_0^2-\frac{k^2}{2}\left(x_1-x_3\right)^2,
\eeq
from which it follows that 
\bea
-\left(x_1-\frac{1}{2}\right)^2&<-\left(\frac{x_0}{\kappa}\right)^2,
\eea
and so
\bea
x_1-\frac{1}{2}&<-\frac{x_0}{\kappa},
\eea
where $\kappa^2=1+k^2/2$ and we have used $x_0>0$ and $x_1<0$ (which is given by Lemma 5.7). This is equivalent to 
\beq
x_0'>\frac{x_0^2}{\kappa}.
\eeq
Integrating shows that $x_0=e^{t/2}/\sigma$ blows up in finite time, and so $t_M$ is finite with $\lmt\sigma=0$. As in Proposition 5.1, the limits follow by Lemma 5.8.
\end{proof}

%%%%%%%%%%%%%%%%%%%%%%%%%%%%%%%%%%%%%%%%%%%%%%%%%%%%%%%%%%%%
%%%%%%%%%%%%% SECTION 6 %%%%%%%%%%%%%%%%%%%%%%%%%%%%%%%%%%%%
%%%%%%%%%%%%%%%%%%%%%%%%%%%%%%%%%%%%%%%%%%%%%%%%%%%%%%%%%%%%

\section{Global structure and strong cosmic censorship}
The aim of this section is to prove a strong cosmic censorship theorem for the class of spacetimes considered here. Strong cosmic censorship is a statement about solutions of the Cauchy initial value problem in General Relativity (see p.305 of \cite{Wald}). Clearly, we are not dealing with the Cauchy problem here, but the spirit of the result is the same as that of strong cosmic censorship. We prove that the solutions considered here (which evolve from a regular axis rather than an initial data surface) are globally hyperbolic and $C^1$-inextendible. This follows from the results established below regarding radial null geodesics, and from an argument based on the behaviour of a certain invariant $E$ of the spacetime which depends only on the metric and its first derivatives. This invariant satisfies $\lim_{t\to t_M^-} E = +\infty$, proving $C^1$-inextendibility. (The use here of the quantity $E$ mirrors the use of the Hawking mass to prove the $C^1$-inextendibility of solutions of the Einstein-Maxwell-Scalar Field equations \cite{Dafermos}.)\\
The quantity $E$ in question is the $C-$energy defined by Thorne \cite{Thorne}. In a cylindrically symmetric spacetime $(M,g)$ with axial and translational Killing vectors $\bsy{\xi}_{(\theta)}$ and $\bsy{\xi}_{(z)}$, the circumferential radius $\rho$, the specific length $L$ and the areal radius $r$ are defined as
\beq
\label{specL}\rho^2 = \xi_{(\theta)a}\xi_{(\theta)}^a,\quad L^2= \xi_{(z)a}\xi_{(z)}^a,\quad r = \rho L.
\eeq
The $C$-energy is then defined as 
\beq
\label{C-en} E= \frac{1}{8}\left(1-L^{-2}g^{ab}\partial_ar\partial_br\right).
\eeq
As noted in \cite{HNN}, this does not yield a uniquely defined quantity in a given cylindrical spacetime. Furthermore, $E$ can blow up even in flat spacetime. However, as we will see below, this pathology is linked to the over-abundance of Killing vector fields (KVF's) in flat spacetime.\par
As we see from the definition, $E$ is not a function of the metric alone, but depends also on the KVF's:
\beq
E = E(g_{ab}, \bsy{\xi}_{(\theta)}, \bsy{\xi}_{(z)}).
\eeq
It should be more correctly understood as a function of the axis of a cylindrical spacetime, relative to a given translation along the axis. In our class of spacetimes, the axis - and corresponding KVF \cite{Carot} -  is given. Likewise, the definition of the class considered means that we have another KVF ($\bsy{\xi}_{(z)}$) that both commutes with and is orthogonal to $\bsy{\xi}_{(\theta)}$ (it is this orthogonality requirement that puts us in the class of \textit{whole} cylinder symmetry). However, in a given spacetime with whole cylinder symmetry, with the axis and axial KVF specified, the translational KVF is not necessarily uniquely defined. Consequently, the $C-$energy relative to the axis is not necessarily well-defined. See \cite{HNN} for a counter-example, which arises in flat spacetime. This presents a difficulty if we wish to make invariant statements about the spacetime in terms of the $C-$energy $E$. However, Proposition 6.1 below shows that this problem does not arise in the present class of spacetimes: the translational KVF, and hence $E$, are both (essentially) uniquely defined.

This section is structured as follows. We begin with the proof of the result outlined above (Proposition 6.1). In Proposition 6.4, we use the results of Section 5 to show that $E$ blows up as $t\to t_M^-$. Proposition 6.2 shows that this is also the case for a certain curvature invariant of the spacetime except in the case $\lmt x_3=1/2$. A different type of pathology arises in this latter case (Proposition 6.3). The remainder of the section gives the results on radial null geodesics required to derive the globally hyperbolic structure of spacetime (Propositions 6.5 and 6.6). We conclude by collecting the relevant results required for the proof of Theorem 1.1.

%% PROP 6.1
\begin{proposition}
Consider the spacetime with line element \refb{LE eta}, subject to the Einstein-Scalar Field equations (10) and the regular axis conditions \refb{axis}. Let $\bsy{\xi}_{(\theta)}=\partial/\partial \theta$ be the KVF generating the axial symmetry and let $\bsy{\xi}$ be another KVF of the spacetime that commutes with and is orthogonal to $\bsy{\xi}_{(\theta)}$. Then $\bsy{\xi}=c\bsy{\xi}_{(z)}$ for some constant $c$. Hence, the $C-$energy relative to the axis $\eta=1$ is well-defined and is given by
\beq
\label{C-energy}E=\frac{1}{8}\left(1-2e^{-2\gamma}\frac{x_1}{x_0^2}(1-x_1)\right).
\eeq
\end{proposition}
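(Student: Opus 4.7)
The proposition contains two essentially independent claims: the uniqueness statement $\bsy{\xi}=c\bsy{\xi}_{(z)}$, and the explicit formula \refb{C-energy} for the $C$-energy. My plan is to treat them in this order, since the formula for $E$ only becomes unambiguous once the translational KVF has been pinned down.

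For the uniqueness, I would expand $\bsy{\xi}=\xi^u\partial_u+\xi^v\partial_v+\xi^\theta\partial_\theta+\xi^z\partial_z$ and extract the consequences of the hypotheses in turn. Commutativity with $\bsy{\xi}_{(\theta)}=\partial_\theta$ forces each $\xi^a$ to be $\theta$-independent, and orthogonality $g_{\theta\theta}\xi^\theta=0$ (with $g_{\theta\theta}\neq 0$ off the axis) gives $\xi^\theta\equiv 0$. I would then write the Killing equations $\mathcal{L}_{\bsy{\xi}}g_{ab}=0$ componentwise. Since $g_{uu}=g_{vv}=0$, the $(uu)$ and $(vv)$ components immediately yield $\xi^v=\xi^v(v,z)$ and $\xi^u=\xi^u(u,z)$. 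The remaining components must be used to rule out any non-trivial $(\xi^u,\xi^v)$ and to force $\xi^z$ to be constant. The plan is to use the $(\theta\theta)$ equation, which in terms of $h(\eta):=e^{2\phi(\eta)}S^2(\eta)$ reads
\begin{equation}
\xi^u(u,z)\bigl(\eta h'(\eta)-h(\eta)\bigr)=\xi^v(v,z)\,h'(\eta),
\end{equation}
and to exploit the fact that the left side factors as a function of $(u,z)$ times a function of $\eta=v/u$, while the right factors as a function of $(v,z)$ times a function of $\eta$. Differentiating in $z$ and dividing into the original separates variables, forcing $\xi^u=A(u)Z(z)$, $\xi^v=B(v)Z(z)$. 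Substituting back and demanding that the constraint hold for all $(u,v)$ with $\eta$ ranging over an open interval imposes on $h$ a power-law ODE of the form $\eta-h/h'=\text{const}\cdot\eta^c$, which is incompatible with the $h$ determined by the field equations \refb{FE a}--\refb{FE e} and the regular-axis condition $S(1)=0$. This pins $\xi^u=\xi^v=0$; the $(uz)$, $(vz)$ and $(zz)$ equations then collapse to $\partial_u\xi^z=\partial_v\xi^z=\partial_z\xi^z=0$, giving $\xi^z=c$.

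For the formula \refb{C-energy}, the computation is direct once the relevant metric components are extracted from \refb{LE eta}: $L^2=g_{zz}=|u|e^{-2\phi}$ and $\rho^2=g_{\theta\theta}=|u|e^{2\phi}S^2$, so $r=\rho L=|u|\sigma$. Since $r$ depends only on $(u,v)$ and $g^{uv}=-|u|e^{-2\gamma-2\phi}$ is the only nonzero inverse metric component in the $(u,v)$-sector, \refb{C-en} reduces to $E=\tfrac{1}{8}\bigl(1-2L^{-2}g^{uv}(\partial_u r)(\partial_v r)\bigr)$. Evaluating $\partial_u r$ and $\partial_v r$ in region \textbf{II} (where $u<0$, so $|u|/u=-1$) and converting to the state variables via $\eta S'(\eta)=\sigma'(t)=x_1\sigma$ together with $-\eta=e^t=x_0^2\sigma^2$ then delivers \refb{C-energy} after a short cancellation.

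The main obstacle is the uniqueness step — specifically, ruling out a nontrivial $(\xi^u,\xi^v)$. The ODE-incompatibility argument for $h$ is the technical heart of the proof; the remainder (separation, collapsing $(uz),(vz),(zz)$, and the $E$ computation) is routine. If the ODE argument proves awkward in full generality, I would fall back on combining the $(\theta\theta)$ equation with the $(zz)$ equation (which involves the analogous function $j(\eta)=e^{-2\phi}$) to obtain a $2\times 2$ linear homogeneous system for $(\xi^u,\xi^v)$ at each point, and then use the field equations \refb{FE a}--\refb{FE e} to verify that its determinant, a definite expression in $h,h',j,j'$, is non-vanishing on an open subset of $\eta$, forcing $\xi^u=\xi^v=0$.
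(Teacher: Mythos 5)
Your proposal follows the same skeleton as the paper's proof up to and including the separation of variables: the paper likewise reduces to $\xi^u=\alpha_0|u|^{p}H(z)$, $\xi^v=\beta_0|v|^{p}H(z)$ (its case $(iii)$, reached after disposing of the degenerate cases, which in your notation are $\eta h'-h\equiv 0$ and $h'\equiv 0$; you should dispose of these too before dividing, though both are killed instantly by $h(1)=0$, $h\not\equiv 0$). The real divergence is in how the contradiction is then obtained, and this is where your proposal has a gap. The paper does \emph{not} ask the $(\theta\theta)$ equation to do all the work: it substitutes the power-law ansatz into both the $(\theta\theta)$ equation \refb{K4} and the $(uv)$ Killing equation \refb{K2}, evaluates both on the regular axis to fix $\alpha_0=\beta_0$ and $p=1/2$, and then solves \refb{K2} off the axis; via \refb{FE a} this yields $k^2\eta l'/2=0$, i.e.\ $l$ constant, an immediate contradiction. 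Your plan instead rests entirely on the assertion that the ODE $\eta-h/h'=\mathrm{const}\cdot\eta^{c}$ is incompatible with the field equations and $S(1)=0$. That assertion happens to be true, but it is the crux of the whole proof and as stated it is unproven, and closing it takes genuinely more than $S(1)=0$: (i) since $h=e^{2\phi}S^2$ has a double zero at $\eta=1$ (this needs $S'(1)=-1$ as well), the blow-up of $h'/h$ there forces $\mathrm{const}=1$ and $c=1/2$, whence $h=C(1-\sqrt{\eta})^2$; (ii) combining this with \refb{FE d} and integrating the resulting linear ODE, then imposing $\phi'(1)=-1/4$, gives exactly $S=1-\eta$; (iii) only then does \refb{FE b} produce the contradiction, since $S''\equiv 0$ forces $V_0e^{\lambda l}\equiv 0$ while $V_0\neq 0$. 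So your primary route can be completed, but it trades the paper's short argument through the $(uv)$ equation (which brings $\gamma,\phi$ and hence $l$ into play through \refb{FE a}) for a multi-step integration against \refb{FE d} and \refb{FE b} that you have not carried out.

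Your fallback option, as stated, fails. The $(zz)$ Killing equation is not homogeneous in $(\xi^u,\xi^v)$: because $\xi^z$ may depend on $z$, it reads $\xi^u\partial_u g_{zz}+\xi^v\partial_v g_{zz}+2g_{zz}\,\partial_z\xi^z=0$, which is the paper's \refb{K5} with its $y,_z$ term. Hence there is no $2\times 2$ homogeneous linear system coming from the $(\theta\theta)$ and $(zz)$ components alone, and no determinant argument is available until $\partial_z\xi^z$ has been eliminated using the $(uz)$, $(vz)$ equations \refb{K3} -- at which point you are back to an argument of the paper's type. Finally, your computation of \refb{C-energy} (identifying $L^2=g_{zz}$, $\rho^2=g_{\theta\theta}$, $r=|u|\sigma$, and contracting with $g^{uv}$) is correct and matches what the paper dismisses as a straightforward calculation.
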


\begin{proof}
Let $\bsy{\xi}$ be as in the statement of the proposition. Then $\bsy{\xi}$ has components $\xi^{a}=(\alpha,\beta, 0, y)$ where the components depend on $u,v$ and $z$ only. We then have the following seven non-trivial Killing equations for the components of $\bsy{\xi}$:
\begin{subequations}\bea
\beta,_u =0, \qquad \alpha,_v=0, \\
\label{K2}\alpha,_u+2(\bar{\gamma},_u+\bar{\phi},_u)\alpha+\beta,_v+2(\bar{\gamma},_v+\bar{\phi},_v)\beta=0 ,\\
\label{K3}e^{2\bar{\gamma}+4\bar{\phi}}\beta,_z-y,_u=0,\qquad e^{2\bar{\gamma}+4\bar{\phi}}\alpha,_z-y,_v=0,\\
\label{K4}\left(\bar{\phi},_u+\frac{r,_u}{r}\right)\alpha +\left(\bar{\phi},_v+\frac{r,_v}{r}\right)\beta=0,\\
\label{K5}\bar{\phi},_u\alpha+\bar{\phi},_v\beta - y,_z = 0.
\eea\end{subequations} 
We see immediately that $\alpha=\alpha(u,z)$ and $\beta=\beta(v,z)$. Now let 
\beq
P(u,v)= \bar{\phi},_u+\frac{r,_u}{r},\qquad Q(u,v)=\bar{\phi},_v+\frac{r,_v}{r}.
\eeq
Our aim is to show that $\alpha=\beta=0$, which gives $y=y_0$ constant and $\bsy{\xi} = y_0\partial_z$. We consider the three cases which arise from \refb{K4}:
\begin{subequations}
\bea
(i)\quad &P=Q=0, \\
(ii)\quad &P=0, Q\neq 0,\\
(iii) \quad&PQ\neq 0
\eea
\end{subequations}
where the remaining case $Q=0,P\neq0$ is equivalent to case $(ii)$. 
In case $(i)$ we find that $e^{\bar{\phi}}r$ is constant. However, our self-similar solutions have $e^{\bar{\phi}}r=|u|^{1/2}e^{\phi(\eta)}S(\eta)$ which is not constant, so we have a contradiction. \\
In case $(ii)$ we find that $e^{\bar{\phi}}r=q(v)$. Equating this to our self-similar solution we find that $e^{\bar{\phi}}r=|u|^{1/2}e^{\phi(\eta)}S(\eta)=q(v)$ is consistent only if $q=q_0|v|^{1/2}$, which gives $e^{\phi(\eta)}S(\eta)=q_0|\eta|^{1/2}$. However, at the regular axis we have $S(1)=0$ and $\phi(1)$ finite. This sets $q_0=0$, and thus $S=0$, for all $\eta$, which is clearly not the case.\\
Hence, only case $(iii)$ remains. It follows immediately from \refb{K4} that 
\beq
P\alpha+Q\beta=0,\quad
P\alpha,_z+Q\beta,_z=0.
\eeq
Since $PQ\neq0$ we must have 
\beq
\det\left(\begin{array}{cc}\alpha&\beta\\ \alpha,_z&\beta,_z\end{array}\right) = 0,
\eeq
which yields 
\beq
\frac{\alpha,_z(u,z)}{\alpha(u,z)} = \frac{\beta,_z(v,z)}{\beta(v,z)} = h(z). 
\eeq
Integrating then produces 
\beq
\alpha = \bar{\alpha}(u)H(z),\qquad \beta = \bar{\beta}(v)H(z),
\eeq
for some functions $\bar{\alpha},\bar{\beta},H$. Clearly $H=0$ returns $\alpha=\beta=0$. If $H\neq0$ then \refb{K4} may be written as 
\beq
\label{K6}\left(r\bar{\phi},_u+r,_u\right)\bar{\alpha} +\left(r\bar{\phi},_v+r,_v\right)\bar{\beta}=0. 
\eeq
It follows from \refb{SSS} that $r\bar{\phi},_u+r,_u$ and $r\bar{\phi},_v+r,_v$ depend only on $\eta$. It then follows from \refb{K6} that $\bar{\alpha}/\bar{\beta}$ also depends only on $\eta$. It is then straightforward to show that $\bar{\alpha}= \alpha_0|u|^p,\bar{\beta}=\beta_0|v|^p$ for constants $p,\alpha_0,\beta_0$. Substituting these solutions and the self-similar forms for $\bar{\gamma}$ and $\bar{\phi}$ into \refb{K2} and \refb{K6} we find
\begin{subequations}
\bea
\label{exact1}\alpha_0\left(2\eta\phi'+2\eta\gamma'+1-p\right)&=\beta_0\eta^{p-1}(2\eta\gamma'+2\eta\phi'+p),\\
\label{exact2}\left(\eta\phi'S+\eta S'-\frac{S}{2}\right)\alpha_0&=\beta_0\left(\phi'S+S'\right)\eta^p,
\eea
\end{subequations}
assuming $v<0,u<0$ as is the case in region \textbf{I}. 
On the axis, equations \refb{exact1} and \refb{exact2} reduce to
\beq
\alpha_0\left(p-\frac{1}{2}\right)=\beta_0\left(\frac{1}{2}-p\right),\qquad \beta_0=\alpha_0,
\eeq
using \refb{axis}. Assuming $\alpha_0=\beta_0\neq0,$ these have the simultaneous solution $p=1/2$. Solving equation \refb{exact1} we find that
\beq
2\eta\gamma'+2\eta\phi'+\frac{1}{2}=k^2\eta l'/2=0,
\eeq
using \refb{FE a}. This gives constant $l$ which clearly contradicts previous results and so we must have $\alpha_0=\beta_0=0$. It follows that $\alpha=\beta=0$ and $\bsy{\xi}=y_0\partial_z$. Hence, the translational Killing vector $\bsy{\xi}_{(z)}=\partial_z$ is uniquely defined up to a multiplicative constant. It follows that the C-energy relative to the axis $\eta=1$, as given by \refb{C-en}, is well-defined. A straightforward calculation gives the form \refb{C-energy} in terms of the metric components. 
\end{proof}

%%% PROP 6.2
\begin{proposition}
Let $\cal{T}$ be the scalar curvature invariant $T^{ab}T_{ab}$. Then in the case $\lmt x_3=\pm\infty$ we have $\lmt\cal{T}=+\infty$.
\end{proposition}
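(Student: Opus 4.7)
The proof proceeds by bounding $\mathcal{T}$ below by an invariant that diverges as $t\to t_M^-$. A direct expansion of $T^{ab}T_{ab}$ from \refb{EMT}, with $X:=\nabla^c\psi\nabla_c\psi$ and $V:=V(\psi)$, yields the purely algebraic identity
$$\mathcal{T} = X^2 + 2XV + 4V^2 = \frac{3}{4}X^2 + \left(\frac{X}{2}+2V\right)^2 = 3V^2 + (X+V)^2,$$
giving the two lower bounds $\mathcal{T}\ge \frac{3}{4}X^2$ and $\mathcal{T}\ge 3V^2$. Using the self-similar forms for $\psi$ and $V(\psi)$ together with \refb{FE a}, a short calculation produces
$$X = \frac{k^2}{2}x_3(1-x_3)W,\qquad V = \delta x_2 W,\qquad W := \frac{1}{v}\exp\!\left(-c_1-\frac{k^2}{2}l+\frac{t}{2}\right),$$
evaluated along lines of constant $v>0$. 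In the subcase $\lmt x_3=-\infty$, integrating $dl/dt = x_3-1/2$ forces $l\to -\infty$, so both $x_2=|V_0|e^{\lambda l}\to +\infty$ and $e^{-k^2l/2}\to+\infty$ (recall $\lambda<0$); hence $V\to\pm\infty$ and $\mathcal{T}\ge 3V^2\to+\infty$.

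The subcase $\lmt x_3=+\infty$ is more delicate because $V\to 0$, so one must instead show $|X|\to+\infty$. Completing the square in the constraint \refb{x0} in the variable $x_3$ yields
$$\frac{k^2}{2}(x_3-x_1)^2 + x_0^2 = \left(1+\frac{k^2}{2}\right)(x_1^2-x_1) - 2\delta x_2.$$
Proposition 5.1 gives $x_1\to-\infty$ and $x_2\to 0$, so combined with $x_0^2\ge 0$ this yields $|x_3-x_1|\le\sqrt{1+2/k^2}\,|x_1|(1+o(1))$. The elementary estimate $\sqrt{1+2/k^2}\le 1+1/k^2$, together with $x_1<0<x_3$, then produces the asymptotic ratio bound
$$x_3 \le \frac{1}{k^2}|x_1|(1+o(1)),\qquad\text{equivalently}\qquad -x_1\ge k^2 x_3\,(1+o(1)),$$
as $t\to t_M^-$.

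Set $Y := 2\log x_3 - (k^2/2)l$; then $|X|$ is bounded below by a positive multiple of $e^Y$ once $x_3$ is large, so it suffices to prove $Y\to+\infty$. Differentiating via \refb{x3} gives
$$Y' = -2x_1 + \frac{x_1}{x_3} + \frac{4\delta x_2}{k^2 x_3} - \frac{k^2}{2}x_3 + \frac{k^2}{4} + 1.$$
For $x_3\ge 2$ we have $-2x_1+x_1/x_3 = |x_1|(2-1/x_3)\ge\frac{3}{2}|x_1|$, while the ratio bound above gives $(k^2/2)x_3\le\frac{1}{2}|x_1|(1+o(1))$; the remaining terms are $o(1)$ plus a constant. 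Hence $Y'\ge |x_1|(1-o(1))$ asymptotically. Integrating and invoking Lemmas 5.3 and 5.4 (which imply $\int_{t_0}^{t_M^-} x_1\,ds = -\infty$ because $\sigma\to 0$), we conclude $Y\to+\infty$, so $|X|\to+\infty$ and $\mathcal{T}\ge\frac{3}{4}X^2\to+\infty$. The principal technical obstacle lies in the constraint-based ratio bound $|x_1|/x_3\ge k^2(1+o(1))$, which ensures that the competing term $-\frac{k^2}{2}x_3$ in $Y'$ is dominated by $-2x_1$.
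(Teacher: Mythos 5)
Your computations of $X=\nabla^c\psi\nabla_c\psi$, $V$, and the identity $\mathcal{T}=X^2+2XV+4V^2$ are correct, and your overall route coincides with the paper's: a lower bound $\mathcal{T}\gtrsim X^2$ with $X\propto v^{-1}e^{-k^2l/2+t/2}x_3(1-x_3)$, and, in the delicate case, a ratio bound $x_3\le |x_1|/k^2\,(1+o(1))$ extracted from the constraint \refb{x0}, followed by integration of a logarithmic quantity using $\int^{t_M}x_1\,dt=-\infty$ (your $Y$ is exactly twice the logarithm of the paper's quantity $e^{-k^2l/4}x_3$). One step of yours is genuinely neater: you obtain the ratio bound by completing the square and discarding $x_0^2\ge 0$, which bypasses the paper's intermediate argument that $\sigma'$ has a finite negative limit. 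However, your treatment of the case $\lmt x_3=-\infty$ has a genuine gap. You claim that integrating $dl/dt=x_3-1/2$ ``forces $l\to-\infty$,'' hence $x_2\to+\infty$ and $V\to\pm\infty$. This is a non sequitur: $t_M$ is finite (Propositions 5.1 and 5.2), and a function diverging to $-\infty$ at a finite time need not have a divergent integral --- for instance $x_3\sim-(t_M-t)^{-1/2}$ is integrable, in which case $l$, $x_2$ and $V$ would all have finite limits; nothing you have established rules this out. So $\mathcal{T}\ge 3V^2\to+\infty$ is not proved. The repair uses only tools you already have (and is what the paper does): since $dl/dt<0$ eventually, $l$ is bounded \emph{above}, so $e^{-k^2l}$ is bounded below by a positive constant, and then $\mathcal{T}\ge\frac{3}{4}X^2$ diverges because $X^2\propto v^{-2}e^{-k^2l+t}\,x_3^2(1-x_3)^2$ and $x_3^2(1-x_3)^2\to+\infty$.

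A second, more minor, flaw: in the case $\lmt x_3=+\infty$ you assert that ``Proposition 5.1 gives $x_1\to-\infty$ and $x_2\to 0$.'' Propositions 5.1/5.2 say nothing about $x_2$; in fact $x_2\to 0$ holds only if $\int^{t_M}x_3\,dt=+\infty$, which is not guaranteed. The paper handles this by splitting into $\lmt x_2>0$ (easy, as in the $-\infty$ case, since then $l$ is bounded above) and $\lmt x_2=0$ (the hard argument). Fortunately, your argument never truly needs $x_2\to 0$: it needs only that $x_2$ is bounded near $t_M$, so that $x_2=o(x_1^2)$ in the constraint and $x_2/x_3=o(1)$ in the expression for $Y'$; boundedness follows from \refb{x2}, which makes $x_2$ eventually decreasing once $x_3>1/2$. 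With the hypothesis corrected to boundedness, your single argument covers both of the paper's subcases, and with the repair of the $x_3\to-\infty$ case described above, your proof becomes complete.
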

\begin{proof}
\noindent In \cite{Part1} it was shown that
\beq
\label{T}\mathcal{T}\ge \frac{k^4e^{-4\gamma-4\phi}}{16v^2}\left(\eta^2 l'(\eta)^2-\frac{1}{4}\right)^2=\frac{k^4e^{-k^2l+t-2c_1}}{16v^2}x_3^2(1-x_3)^2.
\eeq
We first consider the case $\lmt x_3=-\infty$.  In this case, (\ref{x_i}) shows that $l$ is eventually decreasing, so we must have $\lmt e^{-k^2l}>0$. It follows that $\lmt\cal{T}=+\infty$, since $t_M<\infty$.\\
We now consider the case $\lmt x_3=+\infty$. Here, $x_2$ is eventually decreasing and bounded below by zero, so $\lmt x_2$ exists. If $\lmt x_2>0$, then $\lmt e^{-k^2l}>0$ and $\mathcal{T}\to+\infty$ as above. Suppose then that $\lmt x_2=0$. Dividing \refb{x0} by $x_1^2$ and taking the limit $t\rightarrow t_M$ yields
\beq 
\label{lim con} 1-\lmt\left[\frac{e^t}{\sigma'^2}+\frac{k^2}{2}\left(\frac{x_3}{x_1}\right)^2-k^2\left(\frac{x_3}{x_1}\right)\right]=0,
\eeq
where we note that Lemma 5.6 was used to deduce $\lmt 1/x_1=0$. Note that $\sigma''/\sigma = x_1+\delta x_2\to-\infty$ and so $\sigma'$ is decreasing in a neighbourhood of $t_M$. On the other hand we have $\sigma''> \sigma'-b\sigma$ for some positive constant $b$, since $x_2\to 0$. Rewriting this as $\sigma''-\sigma'>-b\sigma$ and integrating yields a lower bound for $\sigma'$. Thus $\sigma_M'=\lmt \sigma'<0$ exists. Letting $ e^{t_M}/\sigma_M'^2=\omega>0$ and $\lmt x_3/x_1 = \ell$ we have
\beq
\ell^2-2\ell= \frac{2}{k^2}(1-\omega)<\frac{2}{k^2},
\eeq
where $\omega>0$. It follows that 
\beq
\ell >1-\sqrt{1+\frac{2}{k^2}}>-\frac{1}{k^2}. 
\eeq
Now consider $Y=e^{-k^2l/4}x_3$, which obeys
\beq
\frac{Y'}{Y}= \frac{1}{2}+\frac{k^2}{8}+\delta\frac{2x_2}{k^2x_3}-x_1\left(1-\frac{1}{2x_3}\right)-\frac{k^2x_3}{4}. 
\eeq
Using $\lmt x_2/x_3=0$ and $\lmt x_3/x_1=\ell$, there must exist $t_0$ sufficiently close to $t_M$ such that $x_3<-x_1/k^2$ and
\beq
\frac{Y'}{Y}>-x_1\left(\frac{3}{4}-\frac{1}{2x_3}\right),
\eeq
on $(t_0,t_M)$. Integrating then shows that $\lmt Y=+\infty$. We observe that the lower bound for $\cal{T}$ seen in \refb{T} behaves like $Y^4$ as $t\rightarrow t_M$, so the proof is complete. 
\end{proof}

%%% PROP 6.3
\begin{proposition} In the case $\lmt x_3 = 1/2$ the specific length of the cylinders $L$ limits to zero as $t\to t_M$ and the axis located at $t=t_M$ is, therefore, irregular. 
\end{proposition}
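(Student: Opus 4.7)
The plan is to identify $L$ explicitly in the self-similar coordinates, reduce the claim to an asymptotic statement about $2\phi+t$, and then read this off from \refb{phi'2} together with the endpoint behaviour of the dynamical system already established in Section 5.

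From the line element \refb{LE eta}, the translational Killing vector $\bsy{\xi}_{(z)}=\partial_z$ satisfies $L^2=g_{zz}=|u|e^{-2\phi}$. In region \textbf{II}, along lines of constant $v>0$, we have $|u|=ve^{-t}$, so $L^2 = v\,e^{-(2\phi+t)}$. Hence $L\to 0$ as $t\to t_M^-$ along such lines is equivalent to showing $2\phi+t\to+\infty$.

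Next I would differentiate $2\phi+t$ using \refb{phi'2} to obtain $\frac{d}{dt}(2\phi+t) = x_0-x_1+1$. By Propositions 5.1 and 5.2 we have $t_M<\infty$, $\lmt x_1=-\infty$ and $\lmt \sigma=0$. Since $x_1=\sigma'/\sigma$, direct integration gives
\beq
\nonumber \int_{t_0}^{t}x_1\,dt' = \log\sigma(t)-\log\sigma(t_0)\longrightarrow -\infty
\eeq
as $t\to t_M^-$, and because $x_0=e^{t/2}/\sigma>0$ everywhere, we obtain the lower bound $\int_{t_0}^t(x_0-x_1+1)\,dt' \ge \int_{t_0}^t(1-x_1)\,dt' \to +\infty$, which yields the required divergence of $2\phi+t$.

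Finally, to conclude that the axis at $t=t_M$ is irregular, I would note that a regular axis requires the specific length to attain a positive finite limit (as it does at the regular axis $\eta=1$, where $L=|u|^{1/2}e^{-\phi(1)}$ is positive and finite for each $u\neq 0$); the collapse $L\to 0$ along every line of constant $v>0$ is incompatible with this. I do not anticipate a substantive obstacle: the argument reduces to integrating a single inequality. In fact the reasoning only uses $\lmt \sigma=0$, so $L\to 0$ also holds in the branch $\lmt x_3=\pm\infty$, but there it is overshadowed by the curvature blow-up of Proposition 6.2; it is precisely when $\lmt x_3=1/2$, where Proposition 6.2 fails to apply, that the collapse of $L$ becomes the principal obstruction to extending the spacetime past $t_M$.
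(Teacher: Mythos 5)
Your proposal is correct and follows essentially the same route as the paper: both identify $L$ from the metric ($L=|u|^{1/2}e^{-\phi}$), use \refb{phi'2} together with $x_0>0$ to bound $d\phi/dt$ from below by $-x_1/2$, and integrate using $\lmt\sigma=0$, $\lmt x_1=-\infty$ (Lemma 5.8 / Propositions 5.1--5.2) to force $\phi\to+\infty$ and hence $L\to 0$, violating the regular axis conditions. Your closing observation is also consistent with the paper's logic: the hypothesis $\lmt x_3=1/2$ is not actually used in the integration, it simply marks the case where this degeneracy, rather than the curvature blow-up of Proposition 6.2, is the operative pathology.
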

\begin{proof}
It follows from \refb{specL} that $L=|u|^{1/2}e^{-\phi}$. From \refb{phi'2} we have $d\phi/dt>-x_1/2$ which may be integrated to show that $\lmt \phi=+\infty$, using Lemma 5.8. This gives $\lmt L=0$ which violates the regular axis conditions \cite{Hayward}. 
\end{proof}

%%% PROP 6.4
\begin{proposition}
The $C$-energy $E$ satisfies $\lmt E=+\infty$. 
\end{proposition}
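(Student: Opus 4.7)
The plan is to start from the explicit formula \refb{C-energy},
\[
E=\frac{1}{8}\left(1-2e^{-2\gamma}\frac{x_1(1-x_1)}{x_0^2}\right),
\]
and show that the second summand inside the parentheses tends to $-\infty$. Propositions 5.1 and 5.2 yield $\lmt x_1=-\infty$ and $\lmt\sigma=0$, so that $\lmt x_0=+\infty$ since $x_0=e^{t/2}/\sigma$ and $t_M<\infty$. In particular $x_1(1-x_1)\sim -x_1^2<0$ near $t_M$, which already fixes the sign, and the task reduces to showing $e^{-2\gamma}\,x_1^2/x_0^2\to+\infty$.

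For the ratio $x_1^2/x_0^2$, I would use the constraint \refb{x0} directly. Dividing through by $x_1^2$,
\[
\frac{x_0^2}{x_1^2}=1+k^2\frac{x_3}{x_1}-\frac{k^2}{2}\left(\frac{x_3}{x_1}\right)^{\!2}+O(|x_1|^{-1})+O(x_2/x_1^2).
\]
Lemma 5.8 lists three cases for $\lmt x_3$. When $\lmt x_3=1/2$, the ratio $x_3/x_1$ vanishes and $x_0^2/x_1^2\to 1$. When $\lmt x_3=\pm\infty$, the monotonicity from Lemma 5.2 combined with the asymptotic analysis in the proof of Proposition 6.2 (with a symmetric argument for $x_3\to-\infty$) gives the limit $\ell:=\lmt x_3/x_1$, and the nonnegativity of $x_0^2/x_1^2$ together with the bound $\ell>1-\sqrt{1+2/k^2}$ place $\ell$ strictly inside the zero set of $f(X):=1+k^2X-(k^2/2)X^2$, so $f(\ell)>0$. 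In every case $x_1^2/x_0^2$ has a finite positive limit $1/f(\ell)$ and is bounded below by a positive constant near $t_M$.

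To show $e^{-2\gamma}\to+\infty$, I would use \refb{FE a} to write $e^{-2\gamma}=e^{-c_1}\,e^{t/2}\,e^{-k^2l/2}\,e^{2\phi}$. Since $t_M<\infty$, $e^{t/2}$ is bounded. The factor $e^{2\phi}$ diverges: \refb{phi'2} gives $d(2\phi)/dt=x_0-x_1\to+\infty$, and the ODE $x_0'/x_0=1/2-x_1$ together with $\lmt x_1=-\infty$ and finite $t_M$ forces $\int^{t_M}(-x_1)\,dt'=+\infty$, so this integrand is non-integrable at $t_M$. The factor $e^{-k^2l/2}$, via $e^{\lambda l}=x_2/|V_0|$ with $\lambda<0$, is bounded below by a positive constant in every case except the sub-case $\lmt x_3=+\infty$ with $\lmt x_2=0$; in that exceptional sub-case one instead shows that $d(2\phi-k^2l/2)/dt=(x_0-x_1)-(k^2/2)(x_3-1/2)\to+\infty$ by substituting the asymptotics $x_0\sim\sqrt{f(\ell)}\,|x_1|$ and $x_3\sim\ell x_1$, reducing positivity to an algebraic inequality in $\ell$ that holds throughout the admissible range. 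Combining the two steps yields $E\sim e^{-2\gamma}/(4f(\ell))\to+\infty$.

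The main obstacle I anticipate is the exceptional sub-case $\lmt x_3=+\infty$, $\lmt x_2=0$: there $e^{-k^2l/2}$ and $e^{2\phi}$ pull in opposite directions, and the net divergence of $e^{-2\gamma}$ hinges on the sharp bound $\ell>1-\sqrt{1+2/k^2}$ already established in the proof of Proposition 6.2 together with the algebraic inequality indicated above. Once that is in hand, the factorisation $E\sim e^{-2\gamma}/(4f(\ell))$ delivers $\lmt E=+\infty$ uniformly across the three cases of Lemma 5.8.
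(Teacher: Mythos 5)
Your skeleton coincides with the paper's: both arguments bound $E$ from below by the term $e^{-2\gamma}x_1^2/(4x_0^2)$ (using $x_1(1-x_1)\sim -x_1^2$), and then show separately that $e^{-2\gamma}\to+\infty$ and that $x_1^2/x_0^2=\sigma'^2e^{-t}$ stays away from zero. Your handling of $e^{-2\gamma}$ and of the cases $\lmt x_3=1/2$ and $\lmt x_3=+\infty$ is essentially sound, though for $x_3\to+\infty$ your detour through $\ell$ and $f(\ell)$ is just a repackaging of the facts already proved in Proposition 6.2 ($\sigma'\to\sigma_M'<0$ finite, $\omega=e^{t_M}/\sigma_M'^2>0$, and $f(\ell)=\omega$), so it adds nothing beyond what you cite.

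The genuine gap is the case $\lmt x_3=-\infty$. There is no ``symmetric argument'': the symmetry is broken by \refb{x2}. When $x_3\to+\infty$, equation \refb{x2} makes $x_2$ eventually decreasing, hence bounded; that is what makes $\sigma''=\sigma(x_1+\delta x_2)$ eventually negative, yields convergence of $\sigma'$ to a finite negative limit, and hence both the existence of $\ell=\lmt x_3/x_1$ and the bound $\ell>1-\sqrt{1+2/k^2}$ (which in Proposition 6.2 is equivalent to $\omega>0$, i.e.\ precisely to the non-vanishing of $\lmt\sigma'$ --- the statement you are trying to prove). When $x_3\to-\infty$, by contrast, \refb{x2} makes $x_2$ increasing and possibly unbounded ($x_2\to+\infty$ whenever $\int(1/2-x_3)\,dt'$ diverges, which is compatible with $t_M<\infty$). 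Then: (a) your error term $2\delta x_2/x_1^2$ in the divided constraint is not shown to vanish; (b) for $\delta=1$ the sign of $\sigma''=\sigma(x_1+x_2)$ is uncontrolled, so the monotonicity route to the existence of $\ell$ collapses; and (c) the strict bound on $\ell$ has no independent source in this case. The paper closes exactly this case ($\delta=1$, $x_3\to-\infty$) by a different device: integrating \refb{bound} to get $\sigma'<\frac{k^2}{2}\sigma x_3+be^{t/2}$, supposing $\lmt\sigma'=0$, deducing that $\sigma x_3$ is bounded and $\sigma'(\sigma x_3)\to0$, and then multiplying the constraint \refb{x0} by $\sigma^2$ and letting $t\to t_M^-$ to obtain $0=\lmt(\sigma')^2\geq e^{t_M}$, a contradiction. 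You need this argument, or some substitute for it, to close your proof. A smaller point: in your exceptional sub-case ($x_3\to+\infty$, $x_2\to0$) the algebraic inequality $\sqrt{f(\ell)}+1+(k^2/2)\ell>0$ is not automatic --- it fails for $\ell<-2/k^2$ --- and must be extracted from the Proposition 6.2 bound $\ell>-1/k^2$; the paper sidesteps this computation by inserting $x_3<-x_1/k^2$ directly into \refb{gam'2} to get $2\,d\gamma/dt<x_1/2$.
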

\begin{proof}
We consider the behaviour of the term 
\beq
\frac{e^{-2\gamma}x_1^2}{4x_0^2}= \frac{e^{-2\gamma}\sigma'^2}{4e^t},
\eeq
in \refb{C-energy} as $t$ approaches $t_M$. By \refb{C-energy}, this term is a lower bound for $E$ since $x_1\to-\infty$. In the proof of Proposition 6.2, we showed that if $\lmt x_3=+\infty$ then $x_3<-x_1/k^2$ approaching $t_M$. Then in the three cases $\lmt x_3=\pm\infty$ and $\lmt x_3=1/2$ we have $2d\gamma/dt<x_1/2$ approaching $t_M$, via \refb{gam'2}. Integrating then shows that $\lmt \gamma=-\infty$, using Lemma 5.8. If we can show that $\lim_{t\to t_M^-}\sigma'e^{-t/2}\neq 0$, then the conclusion of the proposition holds. We proceed on a case by case basis (sign of $\delta$, limiting value of $x_3$), taking as our starting point the following form of (\ref{x1}):
\beq
\sigma'' = \sigma x_2\left(\frac{x_1}{x_2}+\delta\right). \label{sigpp}
\eeq

In the case $\delta=-1$, we have $\sigma''<\sigma x_1<0$. Thus $\sigma'$ is decreasing and negative (since $\sigma'=\sigma x_1$ and $x_1\to-\infty$), and so $\sigma'\to 0$ cannot arise. 

In the case $\delta=1$ and $\lim_{t\to t_M^-}x_3=+\infty$, we find that (eventually) $\sigma''<0$ - see the proof of Proposition 6.2. If $\delta=1$ and $\lim_{t\to t_M^-} x_3=\frac12$, then (\ref{x2}) shows that $x_2$ is bounded and (\ref{sigpp}) shows that $\sigma''$ is eventually negative. So in both of these cases, we can rule out $\sigma'\to 0$ as we did in the case $\delta=-1$. 

The remaining case is $\delta=1$ and $\lim_{t\to t_M^-}x_3=-\infty$. In this case, we integrate (\ref{bound}) (which holds in the case $\delta=1$) to obtain 
\begin{equation}
\sigma' < \frac{k^2}{2}\sigma x_3 + b e^{t/2} 
\label{sigp-bound}
\end{equation}
for some constant $b$. Now suppose that $\lim_{t\to t_M^-} \sigma'=0$. Then $\sigma x_3$ must be bounded below in the limit $t\to t_M^-$ (remember that $\sigma\to 0$ and $x_3\to-\infty$), and so $\lim_{t\to t_M^-}\sigma'(\sigma x_3)=0$. If we multiply (\ref{x0}) by $\sigma^2$ and take the limit $t\to t_M^-$, we obtain (using $\sigma x_1=\sigma'$)
\begin{eqnarray*}
0 &=& \lim_{t\to t_M^-} (\sigma')^2 \\
&=& \lim_{t\to t_M^-} \left[ e^t +(\frac{k^2}{2}+1)\sigma\sigma'+\frac{k^2}{2}(\sigma x_3)^2 - k^2\sigma'(\sigma x_3)+2\delta\sigma^2x_2\right]\\
&\geq& e^{t_M},
\end{eqnarray*}
yielding a contradiction. Therefore $\lim_{t\to t_M^-}\sigma'\neq0$. 

So in all cases, $\lim_{t\to t_M^-}\sigma'$ cannot be zero, and $E$ diverges as claimed.
\end{proof}

\noindent\textbf{Comment 6.1} The three preceding results establish that, in one way or other, the spacetimes corresponding to the semi-global solutions of Section 5 are singular at $t=t_M$: in fact the blow-up of the well-defined C-energy $E$ shows that the spacetimes are $C^1-$inextendible across this surface. We now derive results relating to radial null geodesics of these spacetimes, and thereby establish their global hyperbolicity. We note that the term `radial' here means that $z$ and $\theta$ are constant along the geodesic. 

%% PROP 6.5
\begin{proposition}
All outgoing radial null geodesics terminate in the future at $t=t_M$ in finite affine parameter time.
\end{proposition}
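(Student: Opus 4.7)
The plan is to compute the affine parameter along an outgoing radial null geodesic as an explicit integral in the self-similar variable $t$, and to verify its convergence by case analysis based on Lemma 5.8 and the asymptotic estimates derived in the proof of Proposition 6.2. An outgoing radial null geodesic in region \textbf{II} has $u=u_0<0$, $\theta$ and $z$ constant, and $v>0$ increasing; its tangent is $k^a=k^v\partial_v$. A direct calculation of the only non-trivial Christoffel symbol, $\Gamma^v_{vv}=2(\bar{\gamma}+\bar{\phi})_{,v}$, reduces the geodesic equation to $d\log k^v/dv=-2(\bar{\gamma}+\bar{\phi})_{,v}$, which integrates to $k^v=Ke^{-2(\bar{\gamma}+\bar{\phi})}$ for a positive constant $K$. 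Along the geodesic $v=|u_0|e^t$, and so substituting and using \refb{FE a} gives
\[
\lambda(t)-\lambda(t_0)=K^{-1}e^{c_1}\int_{t_0}^{t}e^{k^2 l(t')/2+t'/2}\,dt'.
\]
Propositions 5.1 and 5.2 ensure $t_M<\infty$, so $e^{t/2}$ is bounded on the range of integration and the geodesic meets the surface $\{t=t_M\}$ at the finite value $v=|u_0|e^{t_M}$. Finite affine parameter therefore reduces to showing $\int_{t_0}^{t_M}e^{k^2 l/2}\,dt<\infty$.

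Lemma 5.8 gives three subcases for $\lmt x_3$. If $\lmt x_3=1/2$, then $l'=x_3-1/2$ is bounded on $[t_0,t_M)$, so $l$ has a finite limit and $e^{k^2 l/2}$ is bounded. If $\lmt x_3=-\infty$, then $l\to-\infty$ and $e^{k^2 l/2}\to 0$, so the integrand is bounded on the finite interval. In either case the integral converges trivially.

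The remaining case $\lmt x_3=+\infty$ is the crux. Monotonicity forces the existence of $\lmt x_2=x_2^*\in[0,\infty)$. If $x_2^*>0$, then the relation $x_2=|V_0|e^{\lambda l}$ (with $\lambda<0$) forces $l$ to have a finite limit, and the integrand is bounded as before. In the subcase $x_2^*=0$ we have $l\to+\infty$ and must control the rate. The proof of Proposition 6.2 establishes in this situation the existence of $\ell:=\lmt x_3/x_1$ together with the strict inequality $|\ell|<1/k^2$. Since $x_1\to-\infty$ and $x_2\to 0$, the equation $x_1'=x_1+\delta x_2-x_1^2$ is asymptotically Riccati: integrating $d(1/x_1)/dt\to 1$ yields $-x_1(t)\sim(t_M-t)^{-1}$, and hence $x_3(t)\le(|\ell|+o(1))/(t_M-t)$ as $t\to t_M^-$. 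Integrating $l'=x_3-1/2$ then gives $l(t)\le-|\ell|\log(t_M-t)+O(1)$, so $e^{k^2 l(t)/2}\le C(t_M-t)^{-k^2|\ell|/2}$ on a left neighbourhood of $t_M$. Since $|\ell|<1/k^2$, the exponent satisfies $k^2|\ell|/2<1/2$, and the integral converges.

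The hardest step is making the asymptotic $-x_1(t)\sim(t_M-t)^{-1}$ rigorous --- i.e.\ obtaining uniform upper and lower bounds of the form $C_1/(t_M-t)\le -x_1(t)\le C_2/(t_M-t)$ on a left neighbourhood of $t_M$, with $C_i\to 1$ as the neighbourhood shrinks. This is achieved by bounding the subdominant terms $x_1$ and $\delta x_2$ in the equation $x_1'=-x_1^2+x_1+\delta x_2$ against $x_1^2$ using the fact that $x_2\to 0$, and applying Riccati-type comparison. Once this asymptotic is in hand, the bound on $x_3/x_1$ from the proof of Proposition 6.2 gives the required control on the integrand.
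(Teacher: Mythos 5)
Your proposal is correct and takes essentially the same route as the paper's own proof: the same reduction of the affine parameter to $\int e^{k^2l/2+t'/2}\,dt'$, the same trichotomy from Lemma 5.8, and, in the critical subcase $x_3\to+\infty$, $x_2\to 0$, the same Riccati asymptotic $x_1\sim (t-t_M)^{-1}$ combined with the bound $x_3<-x_1/k^2$ inherited from the proof of Proposition 6.2. The only cosmetic difference is that you phrase the final estimate as an upper bound on $l$ (giving an integrand bound $C(t_M-t)^{-k^2|\ell|/2}$), while the paper phrases it as a lower bound on $x_2$ (giving $C(t_M-t)^{-1/2}$); these are equivalent via $x_2=|V_0|e^{\lambda l}$.
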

\begin{proof} Outgoing null geodesics are lines of constant $u$ (see Figure 2). For outgoing radial null rays we have $\dot{u}=\dot{\theta}=\dot{z}=0$ along the geodesic where the overdot represents a derivative with respect to an affine parameter $\mu$, which is chosen such that $\dot{v}>0$ and $\mu(\eta=0)=0$. The equation governing these geodesics then reduces to 
\beq
\ddot{v}+(2\bar{\gamma},_v+2\bar{\phi},_v)\dot{v}^2=0.
\eeq
Dividing by $\dot{v}$, integrating and taking the exponential yields
\beq
e^{2\bar{\gamma}+2\bar{\phi}}\dot{v}=|u_0|^{-1}e^{2\gamma+2\phi}\dot{v}=|u_0|^{-1}e^{k^2l/2-t/2+c_1}\dot{v}=C>0,
\eeq
where $u_0$ is constant. This equation is valid along lines $u=u_0$ and we remind the reader that $u<0$ everywhere to the past of the singularity. On these lines we have $v=u_0\eta=-u_0e^t$ and thus $dv=-u_0e^tdt$. Integrating the above equation over $\mu'\in(0,\mu)$ then gives
\beq
\label{geo2}\int_{0}^v |u_0|^{-1}e^{k^2l/2-t/2+c_1}dv'=\int_{-\infty}^te^{k^2l/2+t/2+c_1}dt'=C\mu.
\eeq
Given that $\lmm l'=-1/2$ we have $e^{k^2l/2+t/2}\sim e^{\LL t/2}$ as $t\rightarrow-\infty$ and it is straightforward to show that there exists $t_0\in(-\infty,t_M)$ such that the portion of the integral in \refb{geo2} over the interval $(-\infty,t_0)$ is finite. In fact, this is true of any $t_0$ which is bounded away from $t_M$. 
Thus the nature of $\mu$ - that is, the question of whether it is finite or infinite - is completely determined by the limiting behaviour as $t\to t_M^-$ of
\begin{equation} \tilde{\mu}(t) = \int_{t_0}^t e^{k^2l/2} dt' = |V_0|^{\frac{k^2}{2|\lambda|}}\int_{t_0}^t (x_2(t'))^{-\frac{k^2}{2|\lambda|}}dt'.\label{mu-tilde} \end{equation}
We complete the proof by showing that $\tilde{\mu}$ is finite in each of the three cases $\lim_{t\to t_M^-}x_3=-\infty,\frac12,+\infty$. It is convenient for this to recall (\ref{x2}):
\begin{equation}
x_2'(t) = |\lambda|(\frac12-x_3(t))x_2(t),\label{x2-new}
\end{equation}
or equivalently, 
\begin{equation} x_2(t)=x_2(t_0)\exp\{\int_{t_0}^t|\lambda|(\frac12-x_3(t'))dt'\},\label{x2-new-int}
\end{equation}
It follows immediately that if $\lim_{t\to t_M^-}x_3=-\infty$ or $\frac12$, then $x_2$ has a positive lower bound, and so $\tilde{\mu}$ is bounded above. This proves the proposition in these two cases. 

In the case where $x_3\to +\infty$, (\ref{x2-new}) shows that $x_2$ (which is non-negative by definition) is eventually decreasing, and therefore has a non-negative limit. (In addition, $x_2$ is bounded above.) If this limit is positive, we repeat the argument used in the previous lines. 

Thus it remains only to deal with the case where $x_3\to +\infty$ and $x_2\to 0$ as $t\to t_M^-$.  Recall also that we must have $x_1\to-\infty$ in the limit (see Propositions 5.1 and 5.2). Let $u_1=x_1^{-1}$. Then $u_1\to 0$ as $t\to t_M^-$, and we can write (\ref{x1}) as 
\begin{equation}
u_1'=1-u_1-\delta x_2 u_1^2.\label{u1p}
\end{equation}
Thus $u_1'\sim 1$ as $t\to t_M^-$, and we can integrate to obtain
\begin{equation} u_1 \sim t-t_M, \qquad t\to t_M^-, \label{u1-asymp} \end{equation}
and so 
\begin{equation} x_1 \sim (t-t_M)^{-1}, \qquad t\to t_M^-. \label{x1-asymp} \end{equation}
In the proof of Proposition 6.2, we deduced in the case where $x_3\to + \infty$ and $x_2\to 0$ (cf. the paragraph containing (\ref{lim con})) that for all $t$ sufficiently close to $t_M$,
\begin{equation} x_3 < -\frac{x_1}{k^2}.\label{x3-bound-x1}\end{equation} Choosing $t_0$ sufficently close to $t_M$ in (\ref{x2-new-int}) then gives
\begin{eqnarray}
x_2(t) &>& x_{2,L}:=x_2(t_0)\exp\{ \int_{t_0}^t |\lambda|(\frac12 + \frac{x_1}{k^2}) dt'\}\nonumber\\
&\sim& x_2(t_0)|t-t_M|^{|\lambda|/k^2},\qquad t\to t_M^- \label{x2-asymp}
\end{eqnarray}
where we have used (\ref{x1-asymp}). Then
\begin{eqnarray}
e^{k^2l/2} &=& \left(\frac{x_2}{|V_0|}\right)^{-\frac{k^2}{2|\lambda|}}\nonumber\\
&<& \left(\frac{x_{2,L}}{|V_0|}\right)^{-\frac{k^2}{2|\lambda|}}\nonumber\\
&\sim& C|t-t_M|^{-1/2},\qquad t\to t_M^-
\end{eqnarray}
for some positive constant $c$. Since this term is integrable on any interval of the form $(t_0,t_M)$, we see from (\ref{mu-tilde}) that $\tilde{\mu}(t)$ is finite in the limit $t\to t_M^-$. This completes the proof.
\end{proof}

%%% PROP 6.6
\begin{proposition} 
Ingoing radial null geodesics have infinite affine length to the past. For $v>0$ they have finite affine length to the future. 
\end{proposition}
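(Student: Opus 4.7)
The approach is to mirror the proof of Proposition 6.5, exchanging the roles of $u$ and $v$. Ingoing radial null geodesics are curves of constant $v$, $\theta$, $z$; the geodesic equation reduces to $\ddot{u}+2(\bar{\gamma},_u+\bar{\phi},_u)\dot{u}^2=0$, which integrates once to $e^{2\bar{\gamma}+2\bar{\phi}}\dot{u}=C$ for a constant $C$. Specialising to a line $v=v_0>0$ in region \textbf{II} and switching to $t=\log(-\eta)$ via $u=-v_0e^{-t}$, together with \refb{FE a}, produces the integral representation
\beq
|C|\mu(t)=\int_{t_*}^{t} e^{k^2 l(t')/2-t'/2+c_1}\,dt',
\eeq
for any $t_*\in(-\infty,t_M)$. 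This is identical to the integral appearing in the proof of Proposition 6.5 except for the sign of the $t/2$ term.

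For the future direction ($t\to t_M^-$, corresponding to $u\to 0^-$ along the ray), Propositions 5.1 and 5.2 give $t_M<\infty$, so the factor $e^{-t/2}$ is uniformly bounded on $(t_*,t_M)$ and the above integral is bounded above by a constant times $\tilde{\mu}(t_M)$. The finiteness of $\tilde{\mu}(t_M)$ in each of the three possible limiting regimes ($\lmt x_3 = -\infty,\tfrac{1}{2},+\infty$) was exactly what was established in the proof of Proposition 6.5, so this bound delivers the second statement of the proposition without further work.

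For the past direction ($t\to-\infty$, corresponding to $u\to-\infty$), I would invoke Comment 3.1, which gives $x_2(t)\sim e^{\LL t/2}$ as $t\to-\infty$. Inverting $x_2=|V_0|e^{\lambda l}$ (with $\lambda<0$) then yields $e^{k^2 l/2}=(x_2/|V_0|)^{-k^2/(2\LL)}\sim \text{const}\cdot e^{-k^2 t/4}$, so the integrand behaves like $e^{-(k^2+2)t/4}$ and is therefore \emph{not} integrable at $-\infty$. Consequently $\mu\to+\infty$, giving infinite affine length to the past.

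The only real care needed is the sign bookkeeping for the $u\leftrightarrow v$ switch and for the change of variable from $u$ to $t$ in region \textbf{II}; all of the substantive analytic content has already been proved in Section 3 (asymptotics at $\mathcal{N}_-$) and in the proof of Proposition 6.5 ($\tilde{\mu}(t_M)<\infty$). Ingoing rays in region \textbf{I} ($v_0<0$) can be handled by an entirely parallel computation in terms of $\eta$, using the asymptotic behaviour $l\sim -\tfrac{1}{2}\log|\eta|$ at $\eta\to 0^+$ furnished by \refb{lim 0}, and the resulting integral diverges in the same manner at $\eta=0$.
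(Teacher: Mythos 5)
Your treatment of the rays with $v_0\neq 0$ is correct and is essentially the paper's own argument: the same first integral $e^{2\bar{\gamma}+2\bar{\phi}}\dot{u}=C$ of the geodesic equation, the same reduction to the integral of $e^{k^2l/2-t'/2+c_1}$, finiteness to the future for $v>0$ obtained by citing the finiteness of $\tilde{\mu}(t_M)$ from Proposition 6.5, and divergence to the past from the asymptotics $l\sim -t/2$ (your route through $x_2\sim e^{|\lambda|t/2}$ of Comment 3.1 is equivalent to the paper's direct use of $\lim_{t\rightarrow-\infty}l'=-1/2$). Your remark about region \textbf{I} rays ($v_0<0$) is also sound, and is in fact a case the paper leaves implicit.

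There is, however, a genuine gap: you never treat the ingoing ray with $v_0=0$, i.e.\ the null generator of $\mathcal{N}_-$ itself, which is an ingoing radial null geodesic of these spacetimes (recall that for $(k^2,V_0,l_0)\in K$ the surface $\mathcal{N}_-$ is regular and is part of the spacetime). This case cannot be absorbed into your ``entirely parallel computation'': along $v=0$ the similarity variable is identically zero, so neither $\eta$ nor $t$ can serve as a parameter along the ray, and the metric coefficient $e^{2\bar{\gamma}+2\bar{\phi}}=|u|^{-1}e^{2\gamma+2\phi}$ blows up there, since $e^{2\gamma+2\phi}\sim|\eta|^{-(k^2/4+1/2)}$ as $\eta\rightarrow 0$; hence the first integral written in these coordinates gives no information. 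The paper devotes the entire second half of its proof to precisely this case: it introduces the regular coordinate $\xi(\eta)=\int_0^\eta e^{2\gamma(\eta')+2\phi(\eta')}d\eta'$ (finite because $0<k^2/4+1/2<1$ when $k^2<2$), derives the geodesic equation restricted to $\mathcal{N}_-$ in the $(u,\xi)$ chart, where the relation $\eta l'=-1/2$ on $\mathcal{N}_-$ reduces it to $\ddot{u}/\dot{u}=|\lambda|\dot{u}/(2u)$, and integrates to obtain $\mu\sim|u|^{1-|\lambda|/2}\rightarrow+\infty$ as $u\rightarrow-\infty$. Without this separate argument, the first sentence of the proposition is established only for the rays off $\mathcal{N}_-$, and the past completeness of $\mathcal{N}_-$ itself---which enters the conformal diagram of Figure 3 and hence the global hyperbolicity claim---is left unproven. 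The missing case is $v_0=0$, not $v_0<0$.
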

\begin{proof}
For ingoing geodesics we have $v=v_0,\dot{\theta}=\dot{z}=0$ and $u=v/\eta=-v_0e^{-t}, du=-udt$. Solutions to the geodesic equation are then given by
\beq
\label{geo length}e^{2\gamma+2\phi}\frac{dt}{d\mu}=e^{k^2l/2-t/2+c_1}\frac{dt}{d\mu}=\tilde{C}.
\eeq
To determine whether the spacetime has a past null infinity we look for $\lim_{u\rightarrow-\infty}\mu$. In terms of $t$, this is given by $\lmm\mu$. Integrating over $(t,t_0)$ and taking this limit we find
\beq
\tilde{C}\lmm(\mu_0-\mu)=\lmm\int_t^{t_0}e^{k^2l/2-t/2+c_1}dt'.
\eeq
Given that $\lmm l'=-1/2$, we clearly have $\lmm\mu=-\infty$. To calculate the future affine length along the geodesics from a fixed $u_0$ we integrate \refb{geo length} over $(t_0,t_M)$. It then follows from the proof of Proposition 6.5 that this length is finite (this argument applies when $v>0$; these ingoing geodesics extend to $t=t_M$). For completeness we now examine the behaviour of the null geodesic along $\cal{N}_-$. Our current coordinate system is not suited to the task since some of the metric functions blow up there. Specifically, we have seen that $e^{2\gamma+2\phi}\sim e^{-(k^2/4+1/2)t}=(-\eta)^{-(k^2/4+1/2)}$ in the limit as $t\rightarrow-\infty,\eta\rightarrow0^-$. We define 
\beq
\xi(\eta)=\int_0^\eta e^{2\gamma(\eta')+2\phi(\eta')}d\eta',\qquad \beta = \frac{k^2}{4}+\frac{1}{2}.
\eeq
Since $e^{2\gamma+2\phi}\sim(-\eta)^{-\beta}$ and $0<\beta<1$ for $k^2<2$, it is straightforward to show that $\xi\sim\eta^{1-\beta},\xi(0)=0$. 
Writing the line element in terms of $\xi$ we have 
\bea
ds^2&=2e^{2{\gamma}+2{\phi}}(dud\eta+u^{-1}\eta\, du^2)+e^{2\bar{\phi}}r^2d\theta^2+e^{-2\bar{\phi}}dz^2\\
\nonumber&= 2dud\xi + 2e^{2\gamma(\eta(\xi))+2\phi(\eta(\xi))}u^{-1}\eta(\xi)\, du^2+e^{2\bar{\phi}}r^2d\theta^2+e^{-2\bar{\phi}}dz^2.
\eea
To derive the geodesic equation we consider the Langrangian $\mathscr{L}$ which simplifies to 
\beq
\Lag =2\dot{u}\dot{\xi}+2u^{-1}\eta e^{2\gamma+2\phi} \dot{u}^2,
\eeq
for radial null geodesics. We then have
\beq
\frac{d}{d\mu}\frac{\partial\Lag}{\partial\dot{\xi}}-\frac{\partial\Lag}{\partial\xi}=2\ddot{u}-2u^{-1}\frac{d}{d\xi}\left(\eta e^{2\gamma+2\phi}\right)\dot{u}^2=0.
\eeq
Using $d\xi=e^{2\gamma+2\phi}d\eta$ and the derivative of \refb{FE a} we have 
\bea
\frac{d}{d\xi}\left(\eta e^{2\gamma+2\phi}\right)=1+\eta(2\gamma'(\eta)+2\phi'(\eta))=\frac{k^2\eta l'}{2}+\frac{1}{2}.
\eea
Given that $\eta l'=-1/2$ everywhere on $\cal{N}_-$, the geodesic equation reduces to 
\beq
\frac{\ddot{u}}{\dot{u}}-\frac{\LL\dot{u}}{2u}=0,
\eeq
which may be integrated to give 
\beq
|\dot{u}|=C|u|^{\LL/2}, 
\eeq
for some $C>0$. Choosing the affine parameter such that $\mu(u=0)=0$ and $\dot{u}<0$ and integrating over $(u,0)$ we find $|u|^{1-\LL/2}=\tilde{C}\mu$ where $\tilde{C}=(1-\LL/2)C$. Hence, we find that $\lim_{u\rightarrow-\infty}\mu=+\infty$. 
\end{proof}

\noindent\textbf{Comment 6.2} The results of this section show that the structure of the spacetimes are as shown in Figure 3. The point ${\cal P}$ corresponds to the limit $(v,u)\to(\infty,-\infty)$ subject to $\eta=\eta_M$. Any spacelike surface extending from the axis to the point $\cal{P}$ such as the one depicted by the dashed line represents a Cauchy surface of the spacetime. The spacetimes are, therefore, globally hyperbolic. 

\begin{center}\begin{figure}[h]
\includegraphics[scale=0.5]{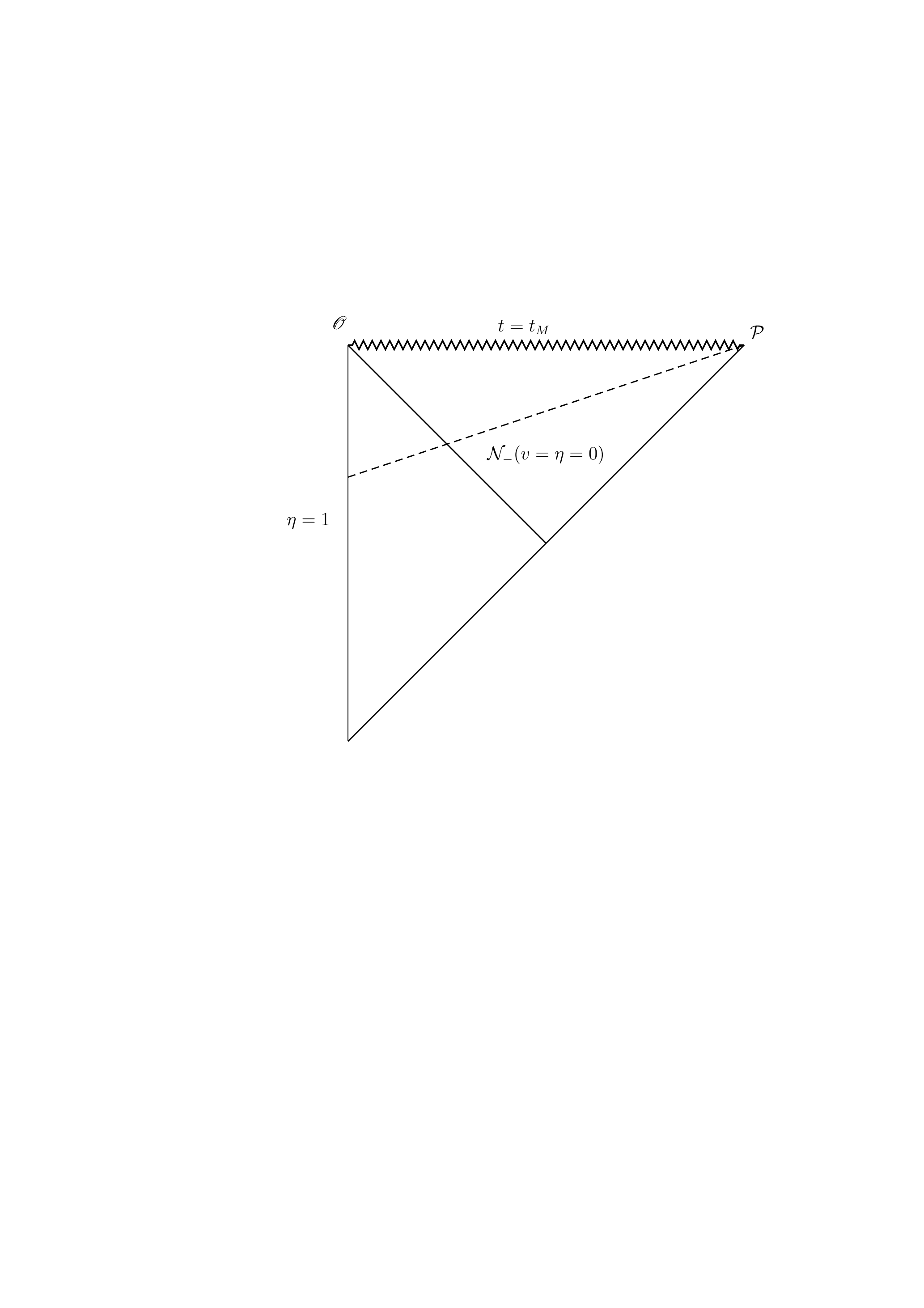}
\vskip-150pt
\caption{Global structure of the spacetimes with $k^2<2$. The dashed line represents a Cauchy surface for the spacetime.}
\end{figure}
\end{center}

\noindent\textbf{Proof of Theorem 1.1}
\begin{proof}
We collect the results above to show that this class of spacetimes is globally hyperbolic and $C^1$-inextendible. The latter follows from Propositions 6.1 and 6.4 which show that the invariant $E$, which depends only on the metric and its first derivatives, blows up as $t\to t_M$. Global hyperbolicity follows from Propositions 6.5 and 6.6, which yield the conformal diagram of Figure 3. Regularity of the axis ensures that ingoing causal geodesics meeting the axis make a smooth transition to outgoing causal geodesics.
\end{proof}
%\section{Conclusions}
%The main result of this paper is that for spacetimes corresponding to all values $k^2<2$, the singularity at the origin is spacelike and censored. Coupled with results from \cite{Part1}, this shows that Cosmic Censorship Hypothesis is upheld in all cases for this class of spacetimes. This contrasts to many similar models which contain naked singularities in both spherical and cylindrical symmetry [refs]. 
\section{Acknowledgments}
We thank Filipe Mena for useful comments on the results of this paper. We are also very grateful to the referees for their invaluable feedback, particularly in highlighting errors in a previous version of this paper. This project was funded by the Irish Research Council for Science, Engineering and Technology, grant number P07650. 

\section*{References}


\begin{thebibliography}{10}

\bibitem{Part1} Condron E and Nolan B C 2014 Collapse of a self-similar cylindrical scalar field with non-minimal coupling. \textit{Class. Quantum Grav.} \textbf{31} 015015 (arxiv:1305.4866)
\bibitem{Carr} Carr B J and Coley A A 1999 Self-similarity in general relativity. \textit{Class. Quant. Grav.} \textbf{16} 081502
\bibitem{Berger} Berger, B K, Chrusciel P T, and Moncrief V 1995 On ``Asymptotically Flat'' Space-Times with $G_2$-Invariant Cauchy Surfaces. \emph{Annals of Physics} \textbf{237} 322-354
\bibitem{WE} Wainwright J. and Ellis G.F.R., Geometry of cosmological models. In \textit{Dynamical systems in cosmology} (ed.) Wainwright J. and Ellis G.F.R. (Cambridge University Press, Cambridge, 1997)
\bibitem{Carot-Colligne} Carot J and Collinge M M 2001 Scalar field spacetimes. \textit{Class. Quantum Grav.} \textbf{18} 5441
\bibitem{Malec} Malec E 1993 Self-gravitating nonlinear scalar fields \textit{J. Math. Phys.} \textbf{38} 3650-3668.
\bibitem{Dafermos1} Dafermos M 2005 On naked singularities and the collapse of self-gravitating Higgs fields. \emph{Advances in Theoretical and Mathematical Physics} \textbf{9} 575-591.
\bibitem{halliwell} Halliwell J J 1987 Scalar fields in cosmology with an exponential potential \textit{Phys. Lett. B} \textbf{185} 341-344
\bibitem{heinzle&rendall} Heinzle J M and Rendall A D 2007 Power-law inflation in spacetimes without symmetry \textit{Commun. Math. Phys.} \textbf{269} 1-15
\bibitem{ringstrom} Ringstr\"om H 2009 Power law inflation \textit{Commun. Math. Phys.} \textbf{290} 155-218\bibitem{A&T} Apostolatos T A and Thorne K S 1992 Rotation halts cylindrical, relativistic collapse. \textit{Phys. Rev.} \textbf{D 46} 2435
\bibitem{cylindrical-collapse} Echeverria F 1993 Gravitational collapse of an infinite, cylindrical dust shell \textit{Phys. Rev.} \textbf{D 47} 2271-2282; Letelier P S and Wang A 1994 Singularities formed by the focusing of cylindrical null fluids. \textit{Phys. Rev.} \textbf{D 49} 064006; Nolan B C 2002 Naked singularities in cylindrical collapse of counterrotating dust shells. \textit{Phys. Rev.} \textbf{D 65} 104006; Wang A 2003 Critical collapse of a cylindrically symmetric scalar field in four-dimensional Einstein’s theory of gravity. \textit{Phys. Rev.} \textbf{D 68} 064006; Nolan B C and Nolan L V 2004 On isotropic cylindrically symmetric stellar models. \textit{Class. Quant. Grav.} \textbf{21} 3693
\bibitem{HNN} Harada T, Nakao K and Nolan B C 2009 Einstein-Rosen waves and the self-similarity hypothesis in cylindrical symmetry. \textit{Phys. Rev.}  \textbf{D40} 024025  
\bibitem{Hartman} Hartman P. \textit{Ordinary Differential Equations} (Birkh$\ddot{\mbox{a}}$user Boston, 1982)
\bibitem{Tav} Tavakol R Introduction to dynamical systems. In \textit{Dynamical systems in cosmology} (ed.) Wainwright J and Ellis G F R (Cambridge University Press, Cambridge, 1997)
\bibitem{Wald} Wald R M \textit{General Relativity} (Chicago: Univ. Chicago Press 1984)
\bibitem{Dafermos} Dafermos M 2005 The interior of charged black holes and the problem of uniqueness in general relativity. \emph{Communications on pure and applied mathematics} \textbf{58} 445-504.
\bibitem{Thorne} Thorne K S 1965 Energy of Infinitely Long, Cylindrically Symmetric Systems in General Relativity. \textit{Phys. Rev.} 138
\bibitem{Carot} Carot J, Senovilla J M M  and Vera R 1999 On the definition of cylindrical symmetry. \emph{Class. Quant. Grav.} \textbf{16} 3025
\bibitem{Hayward} Hayward S A 2000 Gravitational waves, black holes and cosmic strings in cylindrical symmetry. \textit{Class. Quantum Grav.} \textbf{17} 1749

%\bibitem{ST} Shapiro  S L and Teukolsky S A 1992 Gravitational collapse of rotating spheroids and the formation of naked singularities. \textit{Phys. Rev.} \textbf{D 45}, 2006
%


%\bibitem{Kip} Thorne K. Non-spherical collapse - a short review. In \textit{Magic Without Magic: John Archibald Wheeler} (ed.) Klauder J R (W. H. Freeman and Company, 1972)
%
%
%\bibitem{ABS} Ashtekar A, Bi\v{c}\'{a}k J and Schmidt B 1997 Asymptotic structure of symmetry-reduced general relativity. \textit{Phys. Rev.} \textbf{D 55}, 669-686
%
%\bibitem{Banach} Debnath L. and Mikusinski P. \textit{Introduction to Hilbert spaces with applications } (Academic Press, San Diego, 1999)
%%\bibitem{MVT} Salas S. L. and Hillie E. \textit{Salas' and Hillie's Calculus: one and several variables (7th edition)} (Wiley, New York, 1995)


%\bibitem{Poisson} Poisson E. \textit{A Relativist's Toolkit} (Cambridge University Press, Cambridge, 2004)
%\bibitem{Raul} Vera R. \textit{Class.Quant.Grav.  } \textbf{19} 5349 (2002)
%\bibitem{Louise} Nolan L. \textit{Cylindrically symmetric models of gravitational collapse} Ph. D. Thesis, Dublin City University (2007) 









\end{thebibliography}
\end{document}